\def\eqref#1{equation~\ref{#1}}
\def\1{\bm{1}}
\DeclareMathAlphabet{\mathsfit}{\encodingdefault}{\sfdefault}{m}{sl}
\SetMathAlphabet{\mathsfit}{bold}{\encodingdefault}{\sfdefault}{bx}{n}
\newcommand{\angstrom}{\textup{\AA}}
\newtheorem{myDef}{Definition}
\newtheorem{theorem}{Theorem}[section]
\newtheorem{proposition}[theorem]{Proposition}
\theoremstyle{definition}
\theoremstyle{remark}
\newcommand\DoToC{%
  \startcontents
  \printcontents{}{1}{\textbf{Contents}\vskip3pt\hrule\vskip5pt}
  \vskip3pt\hrule\vskip5pt
}
\definecolor{Gray}{gray}{0.9} 
\title{Long-Short-Range Message-Passing: \\ A Physics-Informed Framework to Capture Non-Local Interaction for Scalable Molecular Dynamics Simulation}
\author{%
Yunyang Li $^{\spadesuit}$\thanks{Work done during internship at Microsoft.}$\ $ \thanks{Equal contribution.}  \quad Yusong Wang$^{\clubsuit \heartsuit}$\footnotemark[2] \quad Lin Huang$^{\clubsuit}$\textsuperscript{\Letter} \quad Han Yang$^{\clubsuit}$ \quad Xinran Wei$^{\clubsuit}$ \\
\textbf{Jia Zhang}$^{\clubsuit}$\textsuperscript{\Letter} \quad \textbf{Tong Wang}$^{\clubsuit}$\textsuperscript{\Letter} \quad \textbf{Zun Wang}$^{\clubsuit}$ \quad \textbf{Bin Shao}$^{\clubsuit}$ \quad \textbf{Tie-Yan Liu}$^{\clubsuit}$\\
$^{\spadesuit}$Yale University  \quad $^{\clubsuit}$MSR AI4Science \quad $^{\heartsuit}$Xi’an Jiaotong University \\
\textsuperscript{\Letter}\texttt{\{huang.lin, jia.zhang, watong\}@microsoft.com}\\
}
\begin{document}

\maketitle

\begin{abstract}
Computational simulation of chemical and biological systems using \textit{ab initio} molecular dynamics has been a challenge over decades.
Researchers have attempted to address the problem with machine learning and fragmentation-based methods. 
However, the two approaches fail to give a satisfactory description of long-range and many-body interactions, respectively.
Inspired by fragmentation-based methods, we propose the Long-Short-Range Message-Passing (LSR-MP) framework as a generalization of the existing equivariant graph neural networks (EGNNs) with the intent to incorporate long-range interactions efficiently and effectively. 
We apply the LSR-MP framework to the recently proposed  ViSNet and demonstrate the state-of-the-art results with up to 40\% MAE reduction for molecules in MD22 and Chignolin datasets. 
Consistent improvements to various EGNNs will also be discussed to illustrate the general applicability and robustness of our LSR-MP framework. The code for our experiments and trained model weights could be found at \url{https://github.com/liyy2/LSR-MP}.

\end{abstract}

\section{Introduction}
\emph{Ab initio} molecular dynamics (AIMD)~\cite{car1985unified} has been an indispensable tool in the fields of chemistry, biology, and material science. By virtue of its effective description of the kinetic and thermodynamic properties in molecular and condensed systems, AIMD is capable of  elucidating numerous phenomena of interest, such as chemical reactions~\cite{hwang2015reaction}, protein folding~\cite{cellmer2011making} and electron-phonon interactions~\cite{karsai2018electron, kundu2021quantum, kundu2022influence}. However, AIMD simulations driven by conventional quantum chemical methods such as density functional theory (DFT)~\cite{hohenberg1964inhomogeneous, kohn1965self} become prohibitively expensive as the size of the system increases, due to the high computational cost~\cite{szabo2012modern}. 

Computational chemists have sought to address the challenges associated with understanding molecular interactions by adopting a divide-and-conquer strategy. This approach has significantly spurred the evolution of fragmentation-based methods in recent years~\cite{yang1991direct, exner2003ab, he2005new, he2006generalized, li2007generalized, he2014fragment}. Central to these methods is the principle of chemical locality~\cite{he2014fragment}, which posits that chemical subsystems often exhibit minimal or weak interactions with one another. This principle is grounded in the broader hypothesis that molecular behaviors and interactions largely arise from localized phenomena. Empirical evidence further supports this notion~\cite{collins2015energy}, demonstrating that molecular reactivity is primarily dictated by the presence, composition, and spatial configuration of specific functional groups. Consequently, by subdividing large molecules into smaller, more manageable subsystems, one can facilitate parallelized studies and gain deeper insights into their intricate behaviors.
Following the computation on these subsystems, there are \emph{manually designed} techniques available to reconstitute the holistic properties of the entire molecule. Notable examples of these assembly methods are rooted in energy- and density matrix-based approaches~\cite{li2007generalized, li2021computational}.
Regardless of the method selected, the many-body interactions among subsystems must be carefully calibrated, 
as this is the primary source of error in the fragmentation-based method. A common practice is to take into account the two- and/or three-body interactions.
However, such approximation compromises the accuracy of the fragmentation-based methods and thus has significantly limited their applicability~\cite{li2021computational}.

Recent advances in the application of machine learning in chemistry~\cite{zhang2018deep, behler2021four, unke2021machine, deringer2021gaussian, zhang2022deep} have facilitated AIMD simulations of larger molecules without significant loss of accuracy, in contrast to the prohibitive computational costs of conventional DFT calculations.
In particular, the recent equivariant graph neural networks (EGNNs) introduce the inductive bias of symmetry into the model~\cite{han2022geometrically}, which further improves the model's capacity to represent the geometric structures of the molecular systems.
Existing EGNNs model interactions in local environments by incorporating structural information such as bond lengths, bond angles, and dihedral angles within a predefined radius cutoff. Although being effective for datasets consisting of small molecular systems, e.g. MD17~\cite{chmiela2018towards, schutt2017quantum, chmiela2017machine} and QM9~\cite{ramakrishnan2014quantum}, this approach results in substantial loss of information for larger molecules in which long-range interactions, including electrostatic and van der Waals forces, are non-negligible.
Increasing the radius cutoff and stacking more layers are common methods to remedy the information loss, but they inevitably bring about new issues, e.g. loss of data efficiency, convergence difficulty, and information over-squashing~\cite{alon2020bottleneck}.

In light of these challenges, we introduce the fragmentation-based message passing on top of the current EGNNs, called the \emph{Long-Short-Range Message-Passing} (LSR-MP) framework, with the objective to (1) explicitly incorporate long-range interactions, (2) compensate for the loss of higher-order many-body interaction in existing fragmentation-based approaches and (3) maintain computational efficiency. 
As illustrated in Fig.~\ref{icml-historical2}, the input molecule is processed at two different levels.
In the first level, we introduce the \textit{short-range} module to encode the many-body interactions under the local neighborhood.
In the second level, we formulate the \textit{fragmentation} module and the \textit{long-range} module, which are dedicated to modeling long-range interactions between fragments and atoms.  We compare the key differences of these methodologies in Table \ref{tab:diff}. Our contribution could be listed as follows:
\begin{enumerate}
    \item We introduce LSR-MP, a novel message-passing framework for neural network-based potential modeling. Leveraging BRICS fragmentation techniques and long-range message-passing, LSR-MP effectively captures long-range information, showcasing its robust performance in modeling large molecular systems.
    \item  We present an exemplary implementation of the LSR-MP framework, dubbed ViSNet-LSRM. Our approach leverages the interaction between vectorial and scalar embeddings to maintain equivariance~\cite{wang2022ViSNet, wang2022ensemble}. Notably, ViSNet-LSRM achieves competitive performance on large molecular datasets such as MD22~\cite{doi:10.1126/sciadv.adf0873} and Chignolin~\cite{wang2022ViSNet}, while utilizing fewer parameters and offering up to 2.3x faster efficiency.
    \item  We illustrate the general applicability of the LSR-MP framework using other EGNN backbones, such as Equiformer, PaiNN, and ET, which results in substantial performance improvements over the original models. This highlights the framework's superior adaptability and effectiveness.
\end{enumerate}
\begin{table}[]
\centering
\caption{Comparison of the methods. $N$ number of atoms, $m$ number of basis.}
\label{tab:diff}
\resizebox{0.8\columnwidth}{!}{%
\begin{tabular}{@{}lllll@{}}
\toprule
\textbf{Method}                   & \textbf{Efficiency}                 & \textbf{Many-body } & \textbf{Long-range } & \textbf{Accuracy} \\ \midrule
\textit{Ab initio} Methods                         & $O(N^a m^b) [a \geq 3, m = 2\sim4]$                            & \CheckmarkBold                            & \CheckmarkBold                            & Most Accurate              \\
Fragmentation Methods       & $O(N) + O(N_{frag} (N/N_{frag})^a m^b) [a \geq 3, m = 2\sim4] $ & \XSolidBrush                               & \CheckmarkBold                             & 1 - 5 kcal/mol               \\
EGNN                        & $O(N^a) [a = 1 \sim 2]$                              & \CheckmarkBold                             & \XSolidBrush                           & Optimally < 1 kcal/mol                \\
EGNN + LSR-MP & $O(N^a) [a = 1 \sim 2]$                              & \CheckmarkBold                             & \CheckmarkBold                              & Optimally < 1 kcal/mol               \\ \bottomrule
\end{tabular}%
}
\end{table}

\section{Related Work}
\subsection{Fragmentation-Based Methods}
Fragmentation techniques, historically employed to facilitate quantum mechanical (QM) computations for expansive molecular systems, present solutions to intricate scaling issues~\cite{ganesh2006molecular, li2007generalized}. In recent research trajectories, there has been a discernible bifurcation into two prominent fragmentation methodologies: density matrix-based and energy-based.
Within the density matrix paradigm, the overarching strategy entails the construction of an aggregate density matrix, sourced from the individual matrices of subsystems. Subsequent property extractions hinge upon this consolidated matrix. Noteworthy implementations in this domain encompass the divide-and-conquer (DC) technique~\cite{yang1991direct}, the adjustable density matrix assembler (ADMA) framework~\cite{exner2003ab}, the elongation approach~\cite{gu2004new}, and the molecular fractionation with conjugated caps (MFCC) mechanism~\cite{he2005new}.
Conversely, energy-based fragmentation methodologies pivot on a direct assimilation of subsystem properties to deduce the global property. A seminal contribution by He \emph{et al.} involved the adaptation of the MFCC model into a more streamlined formulation~\cite{he2006generalized}, which was subsequently enhanced with electrostatic embedding schemes, capturing long-range interactions~\cite{he2014fragment}. Parallel advancements in the energy-based arena include the molecular tailoring approach~\cite{gadre2006molecular}, the molecules-in-molecules (MIM) paradigm~\cite{mayhall2011molecules}, and the systematic fragmentation methodology~\cite{deev2005approximate}.

\subsection{Equivariant Graph Neural Networks}
Equivariant Graph Neural Networks (EGNNs) inherently weave symmetry's inductive bias into model architectures. Traditional EGNNs predominantly lean on group theory, invoking irreducible representations~\cite{thomas2018tensor, fuchs2020se, anderson2019cormorant} and leveraging the Clebsch-Gordan (CG) product to guarantee equivariance. Contemporary strides in this domain, such as those presented in~\cite{batzner20223, frank2022sokrates, batatia2022mace}, have incorporated higher-order spherical harmonic tensors, resulting in notable performance enhancements. Parallel to these, alternative advancements have been steered by methods like PaiNN~\cite{schutt2021equivariant} and TorchMD-NET~\cite{tholke2022torchmd}, which prioritize iterative updates of scalar and vectorial features. ViSNet~\cite{wang2022ViSNet} builds upon PaiNN's foundation, integrating runtime geometric computation (RGC) and vector-scalar interactive message passing (ViS-MP). 

\section{Long-Short-Range Message-Passing}

\textbf{Notations:}
We use $h$ and $\vec{v}$ to represent short-range scalar and vectorial embeddings, respectively, and $x$ and $\vec{\mu}$ for their long-range counterparts. Capitalized fragment embeddings are signified by $H$ and $\vec{V}$. Subscripts align with atom or fragment indexes, while superscripts denote layers in a multi-layered network. Additionally, we use $\text{dist}(\cdot)$ for the Euclidean distance, $\odot$ for the Hadamard product, and $\langle\cdot, \cdot\rangle$ for vector scalar products. $\mathbf{1}^d$ is a dimension $d$ column vector of ones. $L_\mathrm{short}$ the number of short-range layer, $L_\mathrm{long}$ the number of long-range layer. Functions include:  $\textsc{Dense}(\cdot)$ for an activated linear layer with bias, $\textsc{Linear}(\cdot)$ for a biased linear layer, and $U(\cdot)$ for an unbiased linear layer. Network parameters are typically unshared unless specified.

\begin{figure*}[t]
\begin{center}
\centerline{\includegraphics[width= \columnwidth]{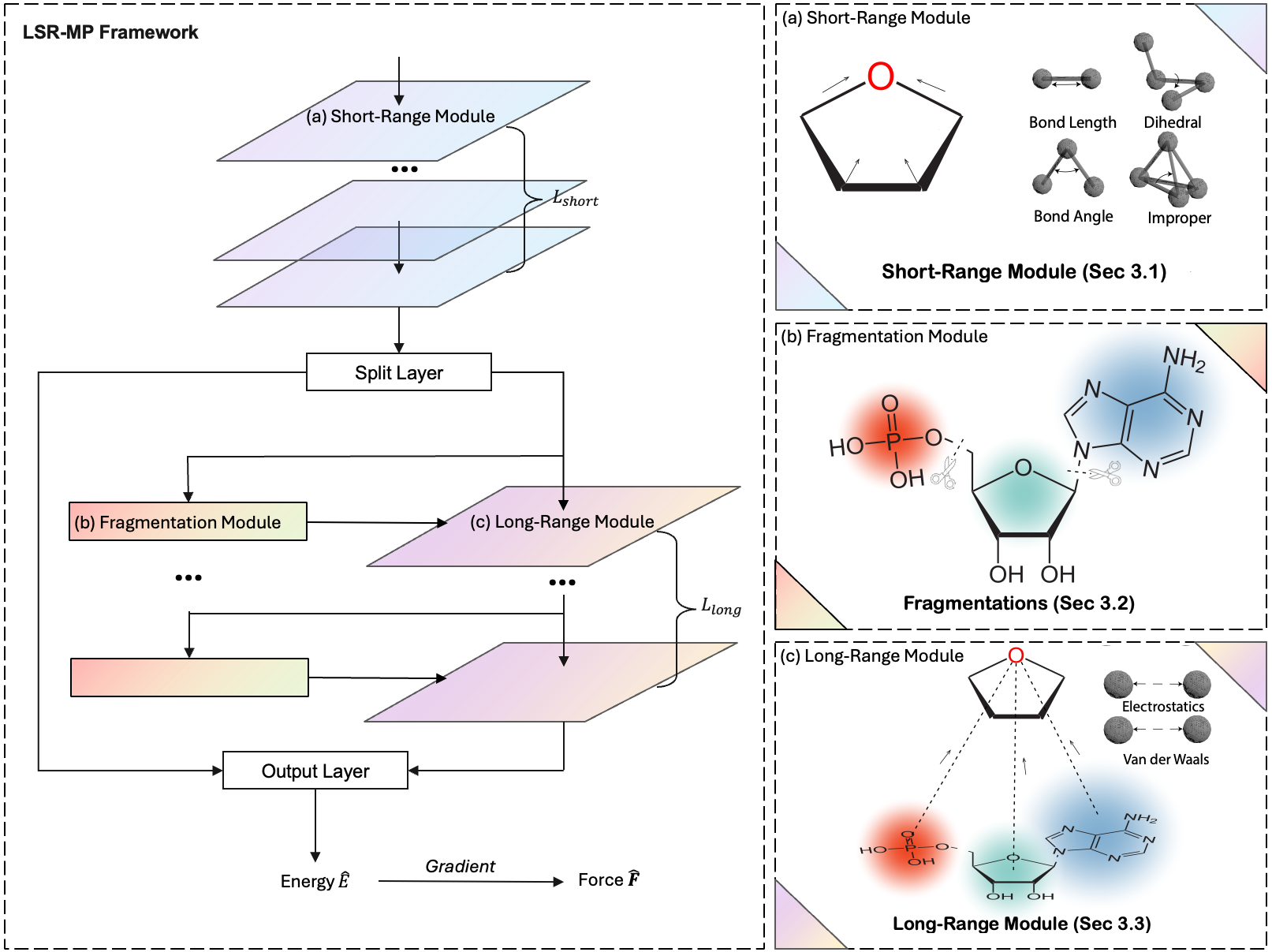}}
\vspace{-0.12in}
\caption{LSR-MP Framework:
(a) A center atom (in red) uses short-range equivariant message passing on a radius graph for its embedding. A linear projection and split process in the split layer channel some info to the fragmentation and long-range modules.
(b) The fragmentation module dissects the system and derives fragment representations.
(c) The long-range module facilitates bipartite equivariant message passing between atoms and fragments. The output layer merges both long and short-range embeddings, then sums over the graph for graph-level embeddings.}
\label{icml-historical2}
\vspace{-10mm}
\end{center}
\end{figure*}

\subsection{Short-Range Module}
Considering a short-range 
 radius graph $\mathcal{G_\text{short}}$ with node set $\mathcal{V}$ for atoms and edge set $\mathcal{E_\text{short}}$ which is defined as $
        \mathcal{E} _{\text{short}} = \left\{e_{ij} | \text{dist}(\vec{p}_i, \vec{p}_j) \leq r_{\text{short}},\ \forall i,j\in \mathcal{V} \right\}.
$ 
The short-range module performs message passing on $\mathcal{G}_{\text{short}}$ by taking the atomic numbers $Z \in \mathbb{N}^{n \times 1}$ and positions $\vec{p} \in \mathbb{R}^{n \times3}$ ($n$ is the number of atoms in the system) as input, and is targeted to model the geometric information (bonds, angle, dihedral, improper) on $\mathcal{G_{\text{short}}}$. 

For illustration purposes, we suppose the short-range module operates on scalar embeddings and vectorial embeddings, while generalizations to higher-order spherical harmonics embeddings could be easily derived with the Clebsch-Gordan tensor product, and is included in Appendix~\ref{higher-order}.

In general, the short-range module adopts an iterative short-range message-passing scheme:
{\small
\begin{equation}
        h_i^{l}, \vec{v}_i^{\,l} =  \phi_{\text{Short}}\left(h_i^{l-1}, \vec{v}_i^{\,l-1}, \sum_{j\in N(i)} m
_{ij}^{l-1}, \sum_{j\in N(i)}\vec{m}_{ij}^{l-1}\right). 
\end{equation}
}The $\phi_{\text{Short}}(\cdot)$ defines a message-passing framework, and $N(\cdot)$ denotes the first-order neighbor set of a particular node. $m_{ij}$ and $\vec{m}_{ij}$ denote the scalar message and vectorial message between node $i$ and its first order neighbor $j$. 

\subsection{Fragmentation Module}

To incorporate long-range components, we look towards the fragmentation-based methods prevalent in quantum chemistry. While fragmentation-based methods might not adept at depicting many-body interactions, they can potentially capture long-range interactions with good accuracy. For fragmentation-based methods, the preservation of intrinsic chemical structures during fragmentation is paramount, as this would minimize the artifacts induced by fragmentation-based methods. The BRICS method, as implemented in RDKit~\cite{landrum2013rdkit}, offers a robust solution to this challenge. Formally, the BRICS fragmentation procedure is defined as: $S = \textsc{BRICS}(Z, \vec{p}),$ where $S\in \{0,1\}^{n\times m}$, $n$ is the number of atoms, $m$ is the number of fragments.

BRICS utilizes 16 predefined chemical motifs, selected based on organic chemistry principles, to direct the fragmentation process towards meaningful substructures. It aims to reduce energy discrepancies arising from bond breakages, ensuring a chemically appropriate local environment. Additionally, BRICS applies post-fragmentation techniques: removing small fragments, handling duplicates, and correcting overlapping fragment issues. Detailed implementation of BRICS in the fragmentation module is provided in Appendix \ref{app-brics}.

\textbf{Fragment Representation Learning:} To obtain effective representations, the fragment learning function summarized the atom's embeddings within fragments.  
We denote an arbitrary atom scalar embedding as $\mathbf{h}$, vectorial embeddings as $\vec{\mathbf{v}}$ \footnote{$\mathbf{h},\vec{\mathbf{v}}$ could be short-range embeddings $h, \vec{v}$, or long-range embeddings $x, \vec{\mu}$ in sec \ref{sec:Long-Range Module}.}. 
Analogous to atoms, the fragment's scalar embedding is invariant under SO(3) transformations, and its vectorial representation is equivariant to SO(3) transformations. Characterization of vectorial fragment embeddings enables the network to model equivariant long-range interactions. The general learning scheme is defined as follows:
\begin{eqnarray}
     \label{eq:FLfunction1}
    H_j^{l} = \sum_{i\in S(j)}  \alpha_i^{l} \odot \mathbf{h}_i^{l}; \ \ 
    \vec{V}_j^{l} = \sum_{ i\in S(j)} \beta_i^{l} \odot \vec{\mathbf{v}}_i^{l}; \ \ 
    \vec{P}_j = \sum_{ i\in S(j)} \gamma_i \vec{p_i} + \kappa_i^{l} \odot \vec{\mathbf{v}}_i^{l}; \ ,
\end{eqnarray}
in which $H_j^{l}$, $\vec{V}_j^{l}$ and $\Vec{P}_{j}^{l}$ denotes the scalar embedding, vectorial embedding, and position of fragment $j$, respectively. $j$ is the index for fragments, and $S(j)$ is the set induced by the assignments of fragment $j$.  $\alpha_i^{l}, \beta_i^{l},  \kappa_i^{l}\in R^d$ are weight vectors for each atom within the fragments and should be SO(3) invariant. Additionally, to guarantee translational equivariance, $\gamma_i$ are parameterized so that $\sum_{i\in S(j)} \gamma_i = 1, \gamma_i \geq 0 $.

\subsection{Long-Range Module: Geometric Bipartite Message Passing}
\label{sec:Long-Range Module}
Considering a bipartite radius graph $\mathcal{G_\text{long}}$ with node set $\{\mathcal{V,U}\}. $ $\mathcal{V}$ is the atoms set and $\mathcal{U}$ is the fragments set, and the edge set is defined as: 
{\small
 \begin{equation}
        \mathcal{E}_{\text{long}} = \left\{e_{ij} \,\big |\, \text{dist}\left(\vec{p}_i, \vec{P}_j\right) \leq r_{\text{long}},\,\forall i\in \mathcal{U},\, \forall j \in \mathcal{V}\right\}.
\end{equation}
}$r_{\text{long}}$ is the long-range cutoff and is normally chosen to be much larger than the radius of the short-range neighbor graph $r_{\text{short}}$.
The bipartite geometric message-passing is performed to characterize long-range interactions:
{\small
\begin{equation}
        x^l_i, \vec{\mu}_i^{l} =  \psi_{\text{long}}\left(x_i^{l-1}, \vec{\mu}_i^{l-1}, \sum_{j\in N(i)} M
_{ij}^{l-1}, \sum_{j\in N(i)}\vec{M}_{ij}^{l-1}\right). 
\end{equation}
}$\psi(\cdot)_{\text{long}}$ is the general bipartite message passing framework, $x^l_i$ is the long-range scalar embedding, $\vec{\mu_i}^l$ is the long-range vectorial embedding, $i,j$ are index for atom, fragment respectively, and $N(\cdot)$ is the neighborhood of atom $i$ on the atom-fragment bipartite graph. $M_{ij}$ and $\vec{M}_{ij}$ denote the bipartite scalar message and vectorial message between atom $i$ and its incident fragment $j$.

\subsection{Properties Prediction}
The short-range embeddings, denoted as $h$ and $\vec{v}$, likely capture local interactions, while the long-range embeddings, represented by $x$ and $\vec{\mu}$, encompass long-range interactions. These two types of embeddings supplement each other, offering a comprehensive modeling of the systems. To effectively combine these embeddings, LSR-MP employs a late fusion strategy:
\begin{eqnarray}
        h_{\text{out}} &=& \textsc{Dense}\left(\left[h^{L_\text{short}}, x^{L_\text{long}}\right]\right),  \\ \vec{v}_{\text{out}} &=& U\left(\left[\vec{v}^{L_\text{short}}, \vec{\mu}^{L_\text{long}}\right]\right).
\end{eqnarray}
$L_\text{short}$ and $L_\text{long}$ are the number of layers for the short-range module and long-range module respectively. The output layer predicts scalar properties using the scalar embedding
$h_{\text{out}}$ or predict tensorial properties using $\vec{v}_{\text{out}}$ and $h_{\text{out}}$.

\section{ViSNet-LSRM}
Based on LSR-MP framework, we provided an exemplary implementation called ViSNet-LSRM. It uses ViSNet~\cite{wang2022ViSNet} as the backbone for the short-range module, thus $h^{l}, \vec{v}^{l} =  \textsc{ViSNet}(Z, \vec{p}).$ 
For the long-range module, we will give a detailed architecture as follows. A visual representation of the architecture can be found in Appendix \ref{sec:long-range-rep}. The design principles of this long-range module adopt the following proposition:

\begin{proposition}
The Hadamard product of a scalar representation by a vectorial representation results in a vectorial representation.       The inner product of two vectorial representations results in a scalar representation.
\end{proposition}

\subsection{Long-Range Module in ViSNet-LSRM}
\label{sec:vis-lsrm}

\textbf{Layer Normalization of Fragment Representation}: For each fragment, we commence by applying a layer norm to its scalar embedding and norm-based min-max normalization to the  vectorial embedding:
{\small
\begin{equation}
    \vec{V}_{\text{norm}, i} =\frac{\vec{V_i}}{||\vec{V_i}||}\cdot \frac{ ||\vec{V_i}|| - \min(\vec{||V_i||}) }{(\max(\vec{||V_i||}) - \min(\vec{||V_i||})) }.
\end{equation}
}$\min(\cdot)$ and $\max(\cdot)$ are applied to the channel dimension, $||\cdot||$ is $l_2$-norm applied to the spatial dimension. Empirically, we observed that this normalization is a succinct and effective technique to bolster model convergence.

\textbf{Distance-Dependent Bipartite Geometric Transformer}: Considering a central atom, certain fragments may establish robust long-range interactions due to factors such as charge, polarity, distance, and so forth, while others may not. It is intuitive that fragments in close proximity generally exert a more substantial influence than those farther away. In light of these observations, we propose a distance-based equivariant multi-headed attention mechanism for atom-fragment message passing. For simplicity, we will omit the notation for heads in the subsequent derivations. We commence by encoding the distance between fragments and atoms utilizing continuous filters:
\begin{equation}
    \label{eq:attention_edge_rbf}
    s_{ij} = \textsc{Dense}\left(e^{\text{rbf}}(\Tilde{r}_{ij})\right),
\end{equation}
 $e^{\text{rbf}}(\cdot)$ is the radial basis function. The distance between the fragment and the atom depends on the size of both the atom and the fragment, as well as the position of the atom relative to the fragment, we employed a \textbf{size-dependent distance encoding} which is parametrized by a linear function:
\begin{equation}
    \Tilde{r}_{ij} = w(z_i, H_j) \text{dist}(\vec{p_i}, \vec{P_j}) + b(z_i, H_j).
\end{equation}
 
Subsequently, we employed the attention mechanism 
\cite{DBLP:journals/corr/VaswaniSPUJGKP17,tholke2022torchmd} for atom-fragment interaction:
\begin{equation}
    \label{eq:attentionQKV}
    q_i = W_q x_i,\ k_j = W_k H_j,\ v_j = W_v H_j,
\end{equation}
where $W_q$, $W_k$, $W_v$ are projection maps. The query is computed using atom scalar embeddings; the key and value are computed using fragment scalar embeddings.
The attention weights $A_{ij}$ between atom-$i$ and fragment-$j$ are obtained through element-wise multiplication of query $q_i$, key $k_j$, and encoded distance  $s_{ij}$:
\begin{equation}
        A_{ij} = \text{SiLU}(\text{sum}(q_i  \odot k_j\odot s_{ij})),
\end{equation}

The output of the attention mechanism results in a value vector weighted by attention weights, from which a scalar message is derived using a dense layer. Concurrently, the relative direction of the fragment to each atom and the vector embeddings of each fragment form the vectorial message, both gated by a dense layer acting on the attention output:
\begin{equation}
            m_{ij} =  \textsc{Dense}(A_{ij} v_j),
\end{equation}
 {\small
\begin{equation}
        \label{eq:vector_message_from_atom_to_fragments}
  \vec{m}_{ij} = \textsc{Dense}(A_{ij} v_j) \odot \frac{\vec{p}_i - \vec{P}_j }{||\vec{p}_i - \vec{P}_j ||}  + \textsc{Dense}(A_{ij} v_j) \odot\Vec{V}_j.
\end{equation}
}%
The aggregation function of the scalar and the vectorial message is a summation over bipartite fragment neighbors:
\begin{equation}
    m^l_{i} = \sum_{j\in N(i)} m^l_{ij};\ \vec{m}^l_i = \sum_{j\in N(i)} \vec{m}^l_{ij}.
\end{equation}
Define $\hat{U}_i = \left\langle U\left(\vec{\mu}_i^{l-1}\right),U\left(\vec{\mu}_i^{l-1}\right)\right\rangle$, that is the scalar product of vectorial embeddings under two different projections. The updated scalar embeddings and vectorial embedding are obtained as:
\begin{equation}
    x^l_i = x^{l-1}_i +  \textsc{Linear}\left(m^l_{i}\right) \odot \hat{U}_i    + \textsc{Linear}\left(m^l_{i}\right);\ \vec{\mu}^l_i = \vec{\mu}^{l-1}_i +  \Vec{m}_i^l + \textsc{Linear}(m^l_i) \odot \vec{\mu}_i^{l-1}.
\end{equation}

\section{Experiment Results}
\subsection{Results on MD22 Dataset and Chignolin Dataset}
MD22 is a novel dataset presenting molecular dynamics trajectories, encompassing examples from four primary classes of biomolecules: proteins, lipids, carbohydrates, nucleic acids\footnote{BRICS fragmentation methods fail to fragment the supramolecules in MD22, we detail their implementation in LSR-MP in Appendix~\ref{app:graph-cluster} }. The data split of MD22 is chosen to be consistent with~\cite{https://doi.org/10.48550/arxiv.2209.14865}. Chignolin, the most elementary artificial protein, boasts 166 atoms. The dataset from \cite{wang2022ViSNet} features 9,543 conformations, determined at the DFT level. It poses significant challenges given its origin from \textit{replica exchange MD (REMD)}. The REMD method, integrating MD simulation with the Monte Carlo algorithm, adeptly navigates high energy barriers and adequately samples the conformational space of proteins. This results in diverse modes from the conformational space being represented in both the training and test sets. We partitioned the dataset into training, validation, and test sets in an 7:1:2 ratio. For benchmarks, we considered sGDML~\cite{doi:10.1126/sciadv.adf0873}, ViSNet~\cite{wang2022ViSNet}, PaiNN~\cite{schutt2021equivariant}, ET~\cite{tholke2022torchmd}, and the recently introduced GemNetOC~\cite{gasteiger2022gemnet}, MACE~\cite{batatia2022mace}, SO3krates~\cite{frank2022so3krates}, Allegro~\cite{musaelian2023learning} and Equiformer~\cite{liao2022equiformer}. Comprehensive settings and implementation details are provided in Appendix \ref{app-baslines}.

\begin{table*}[ht]
\vspace{-3mm}
\caption{Mean absolute errors (MAE) of energy (kcal/mol) and force (kcal/mol/$\angstrom$) for five large biomolecules on MD22 and Chignolin. The best one in each category is highlighted in bold.}
\vspace{-2mm}
\begin{threeparttable}
{\small
\label{results-table-1}
\resizebox{\linewidth}{!}{
\begin{tabular}{llcl|ccccccccc|>{\columncolor{Gray}}c>{\columncolor{Gray}}c}
\toprule
Molecule                      & Diameter (Å) & \# atoms &        & sGDML  & PaiNN          & TorchMD-NET  &GemNetOC~\cite{shoghi2023molecules}   & SO3krates~\cite{frank2023peptides}  &  Allegro    & MACE~\cite{kovacs2023evaluation} & Equiformer & ViSNet & Equiformer-LSRM & ViSNet-LSRM     \\ \midrule
\multirow{2}{*}{Ac-Ala3-NHMe} & \multirow{2}{*}{$\sim$12} & \multirow{2}{*}{42}  & energy & 0.3902 & 0.1168         & 0.1121 & - &  0.337       & 0.1019        & \textbf{0.0620} & 0.0828 & 0.0796 & 0.0780 & 0.0654 \\
                              &                           &                      & forces & 0.7968 & 0.2302         & 0.1879 & 0.1169 & 0.244      & 0.1068        & 0.0876 & \textbf{0.0804} & 0.0972 & 0.0887 & 0.0902 \\ \midrule
\multirow{2}{*}{DHA}          & \multirow{2}{*}{$\sim$14} & \multirow{2}{*}{56}  & energy & 1.3117 & 0.1151         & 0.1205 & - & 0.379      & 0.1153        & 0.1317 & 0.1788 & 0.1526 & 0.0878 & \textbf{0.0873} \\
                              &                           &                      & forces & 0.7474 & 0.1355         & 0.1209 & 0.0662 &  0.242     & 0.0732        & 0.0646 & \textbf{0.0506} & 0.0668 & 0.0534 & 0.0598 \\ \midrule
\multirow{2}{*}{Stachyose}    & \multirow{2}{*}{$\sim$16} & \multirow{2}{*}{87}  & energy & 4.0497 & 0.1517         & 0.1393 & - & 0.442       & 0.2485        & 0.1244 & 0.1404 & 0.1283 & 0.1252 & \textbf{0.1055} \\
                              &                           &                      & forces & 0.6744 & 0.2329         & 0.1921 & 0.0888 &  0.435      & 0.0971        & 0.0876 & 0.0635 & 0.0869 &\textbf{0.0632} & 0.0767  \\ \midrule
\multirow{2}{*}{AT-AT}        & \multirow{2}{*}{$\sim$22} & \multirow{2}{*}{60}  & energy & 0.7235 & 0.1673         & 0.1120 & - &  0.178      & 0.1428        & 0.1093 & 0.1309 & 0.1688 & 0.1007 & \textbf{0.0772} \\
                              &                           &                      & forces & 0.6911 & 0.2384         & 0.2036 & 0.1241 &   0.216     & 0.0952        & 0.0992 & 0.0960 & 0.1070 & 0.0881 & \textbf{0.0781} \\ \midrule
\multirow{2}{*}{AT-AT-CG-CG}  & \multirow{2}{*}{$\sim$24} & \multirow{2}{*}{118} & energy & 1.3885 & 0.2638         & 0.2072 & - & 0.345       & 0.3933        & 0.1578 & 0.1510 & 0.1995 & 0.1335 & \textbf{0.1135} \\
                              &                           &                      & forces & 0.7028 & 0.3696         & 0.3259 & 0.1296 &  0.332       & 0.1280        & 0.1153 & 0.1252 & 0.1563 & 0.1065 & \textbf{0.1063} \\ \midrule
\multirow{2}{*}{Chignolin}    & \multirow{2}{*}{$\sim$37} & \multirow{2}{*}{166} & energy & -     & 2.4491         & 2.5298  & - &   -    & 3.513         & -     & 1.0967 & 2.4355 &\textbf{0.6687} & 1.2267  \\
                              &                           &                      & forces & -     & 0.6826         & 0.6519  & - &   -    & 0.382         & -     & 0.2121 & 0.3717 &\textbf{0.1867} & 0.2778  \\ \bottomrule

\end{tabular}}
}
\end{threeparttable}
\label{table:main}
\end{table*}
\vspace{-0.1in}
\begin{table}[htbp]
\vspace{-3mm}
\centering
\caption{Comparison of the number of parameters and training speed of various methods when forces MAE is comparable on the molecule AT-AT-CG-CG. Detail settings can be found in Appendix~\ref{app:detail-table}.}
\label{table: number-of-parameters}
\resizebox{\linewidth}{!}{
\begin{tabular}{lccccccc}
\toprule
Methods (MAE)        & ViSNet (0.16) & ViSNet-LSRM (0.13)     & PaiNN (0.35) & ET (0.29) & Allegro (0.13) & Equiformer (0.13)\\ \midrule
\# of Parameters & 2.21M & \textbf{1.70M} & 3.20M & 3.46M & 15.11M & 3.02M\\
Training Time / Epoch (s) & 44 & \textbf{19} & \textbf{19} & 26  & 818 & 155 \\ \bottomrule
\end{tabular}
}
\vspace{-2mm}
\end{table}

As illustrated in Table~\ref{table:main}, the mean absolute errors (MAE) of both energy and forces have significantly decreased in comparison to the original ViSNet, with a more pronounced improvement in performance as the system size expands. This enhancement highlights the importance of explicitly characterizing long-range interactions for large biomolecules. In comparison to previous state-of-the-art models such as Equiformer, Allegro, and sGDML, ViSNet-LSRM demonstrates competitive performance while delivering superior efficiency. Furthermore, we conducted a comparative study to validate the efficacy of our approach in terms of both performance and parameter counts. Our results, presented in Table~\ref{table: number-of-parameters}, show that when performance is comparable, our framework uses fewer parameters and offers faster speed, indicating that our approach is more efficient and effective at capturing long-range interactions. This comprehensive analysis underscores the advantages of ViSNet-LSRM in terms of both performance and efficiency, making it a promising solution for large biomolecular systems.

\begin{figure*}[ht]
\centering
\begin{minipage}{\textwidth}
\includegraphics[width=1\textwidth]{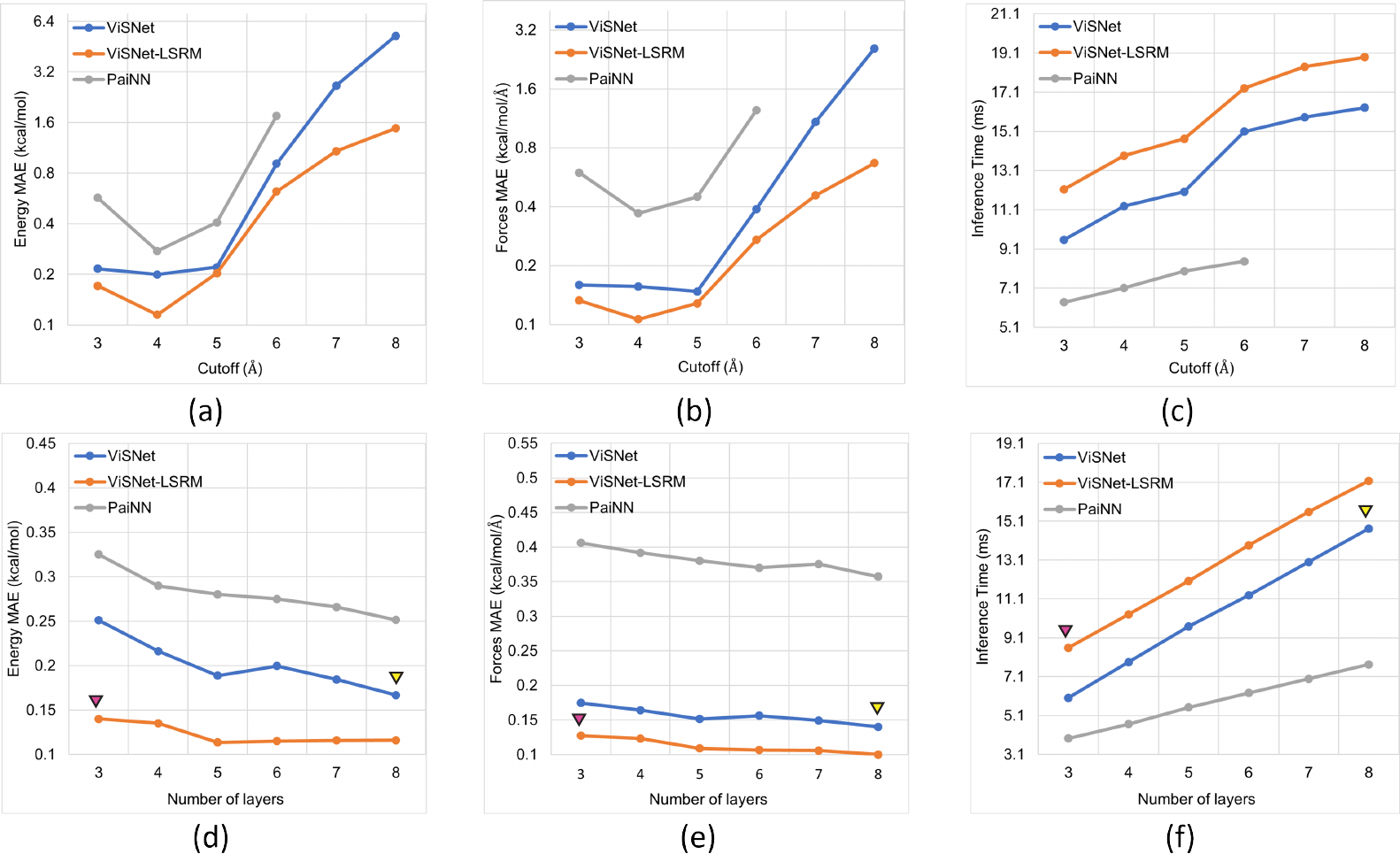}
\end{minipage}

\caption{Comparative Studies on the (short-range) cutoff and the number of (short-range) layers for three methods, including ViSNet-LSRM, ViSNet, and PaiNN. For ViSNet-LSRM, we consistently used 2 long-range layers and a 9 \angstrom \ long-range cutoff. The mean absolute errors (MAE) of energy and forces for \textit{AT-AT-CG-CG} are shown in each sub-graph. From top to bottom, the performance of three methods regarding the cutoff and the number of interactions are shown. a, b, c illustrate increasing the cutoff radius could lead to information over squashing which is prevalent across all EGNNs, including the short-range component of ViSNet-LSRM. d, e, f demonstrate increasing the depth (shown horizontally) of the short-range models is not as effective as introducing the long-range layer (shown vertically).  }
\label{fig:cutoff-interactions-ablation}
\vspace{-0.2in}
\end{figure*}

\subsection{Study of Long-Range Interactions}
\label{sec:study-of-long-range}
In this subsection, we conduct an analysis of the long-range interactions based on the molecule AT-AT-CG-CG in MD22, aiming to address the following questions:

\textbf{Question 1.} Can existing EGNNs capture long-range interactions by increasing the radius cutoffs?

Contrary to expectations, the answer is negative. As depicted in Fig.\ref{fig:cutoff-interactions-ablation}(a) and (b), all three models exhibit optimal performance when the short-range cutoff is set to 4 or 5. A further increase in cutoff could significantly deteriorate the performance. This finding suggests that existing EGNNs may be bottlenecked by their limited capacity in information representation when handling a large number of neighbors, leading to information over-squashing.

\textbf{Question 2.} Are existing EGNNs capable of capturing long-range interactions by increasing the number of layers?

Intuitively, deeper models possess a larger receptive field to capture long-range interactions. The results in Figures \ref{fig:cutoff-interactions-ablation}(d) and (e) across all three models with varying layer numbers confirm this intuition. For ViSNet-LSRM, we vary the number of short-range layers while fixing the long-range layers at 2. Performance improves as layer number increases for all models. However, a 3-layer ViSNet-LSRM (marked in red in Fig. \ref{fig:cutoff-interactions-ablation} outperforms 8-layer ViSNet (marked in yellow in Fig. \ref{fig:cutoff-interactions-ablation} and PaiNN, suggesting incorporating explicit long-range modules is more effective than simply deepening EGNNs. Furthermore, the shallow ViSNet-LSRM is more efficient than the deeper baselines (Fig. \ref{fig:cutoff-interactions-ablation}(f)). \cite{pmlr-v202-di-giovanni23a} has pointed out deepened models would inevitably induce vanishing gradients.

\textbf{Question 3.} How to demonstrate that the proposed method improves the performance due to successful long-range interactions? How to justify the BRICS-fragmentation method?

To address these questions, we conducted an ablation study in Table \ref{tab:ablation-main}, modifying key components of our framework. Instead of the BRICS fragmentation method, we employed distance-based k-means clustering, keeping the rest of the model aligned with LSR-MP. Results indicate that even without chemical insights, the k-means-based LSR-MP surpasses the baseline that lacks LSR-MP, underscoring the importance of incorporating long-range components. However, the k-means method fragments chemical systems non-canonically, potentially masking critical long-range interactions. As previously noted, the BRICS method reduces energy loss from breaking chemical bonds, offering more informative fragmentation for molecular modeling. The improvements of BRICS upon K-means have corroborated these findings.  We further benchmarked LSR-MP against another prevalent framework for long-range interaction, incorporating a single global node to counteract information oversquashing. Our improvements on this framework further accentuate the efficacy and robustness of LSR-MP. 

\textbf{Discussion:} Based on the experiments, we hypothesize a two-fold contribution of BRICS methods:
(1) \textbf{Incorporating Chemical Insights}: Leveraging domain-specific motif matching, BRICS refines representation learning to better capture intrinsic chemical environments. With a foundation in chemical locality and appropriate fragment representation learning, the resulting fragment representations could effectively highlight essential compound properties while omitting superfluous information.
(2) \textbf{Prioritizing Negative Curvature Edges}: This formulation builds upon the \textit{Balanced Forman curvature} introduced in \cite{topping2021understanding}. At its core, the \textit{Balanced Forman curvature} is negative when the edge behaves as a bridge between its adjacent first-order neighborhood, while it is positive when the neighborhood stay connected after removing the edge. The phenomenon of over-squashing correlates with edges exhibiting high negative curvature \cite{topping2021understanding}. A key observation we present (See Appendix~\ref{app:cur}) is that bond-breaking induced by BRICS normally have high negative Balanced Forman curvature in the molecular graph, serving as the bridge or the bottleneck to propagate information across different functional groups. To alleviate this, the proposed long-range module rewires the original graph to alleviate a graph’s strongly negatively curved edges. Furthermore, we have proved that LSR-MP could improve the upper bound of the \textit{Jacobian obstruction} proposed in~\cite{pmlr-v202-di-giovanni23a}. The proof is included in Appendix~\ref{app:oversquashing}.

\begin{table}[h]

\centering
\vspace{-0.15in}
\caption{Ablation studies of LSR-MP on AT-AT-CG-CG.}
\vspace{0.05in}
\label{tab:ablation-main}
\resizebox{0.6\columnwidth}{!}{%
\begin{tabular}{@{}lll@{}}
\toprule
Method (Average Fragment Size)    & Energy MAE                                            & Force MAE                                                      \\ \midrule
BRICS (8.43)                      &  \textbf{0.1064} & \textbf{0.1135} \\
K-Means (8.43)                       & 0.1276 &  0.1246         \\
Single Global Node           & 0.1482  & 0.1696                                                         \\
w/o Long-range Component & 0.1563                                                & 0.1995                                                         \\ \bottomrule
\end{tabular}%
}
\end{table}

\vspace{-0.2in}

\subsection{General Applicability of LSR-MP}

We evaluated the general applicability of LSR-MP on \textit{AT-AT}, \textit{AT-AT-CG-CG}, \textit{Chignolin} by using  PaiNN, ET, Equiformer, and ViSNet as the short-range module.  To ensure rigorous comparative standards, a 6-layer EGNN model was juxtaposed with an LSR-MP model integrating 4 short-range layers and 2 long-range layers. As shown in Fig.~\ref{fig:backbone-ablation}, all models show significant performance improvements under the LSR-MP framework. The force MAE of Equiformer, ViSNet, PaiNN, and ET is reduced by 15.00\%, 24.88\%, 58.8\%, and 55.0\% in \textit{AT-AT-CG-CG}, while similar improvement is shown on energy MAE. 
This suggests that the LSR-MP framework can extend various EGNNs to characterize short-range and long-range interactions, and thus be generally applied to larger molecules.

\begin{figure}[t]
\vspace{-2mm}
\begin{center}
\includegraphics[width=0.95\columnwidth]{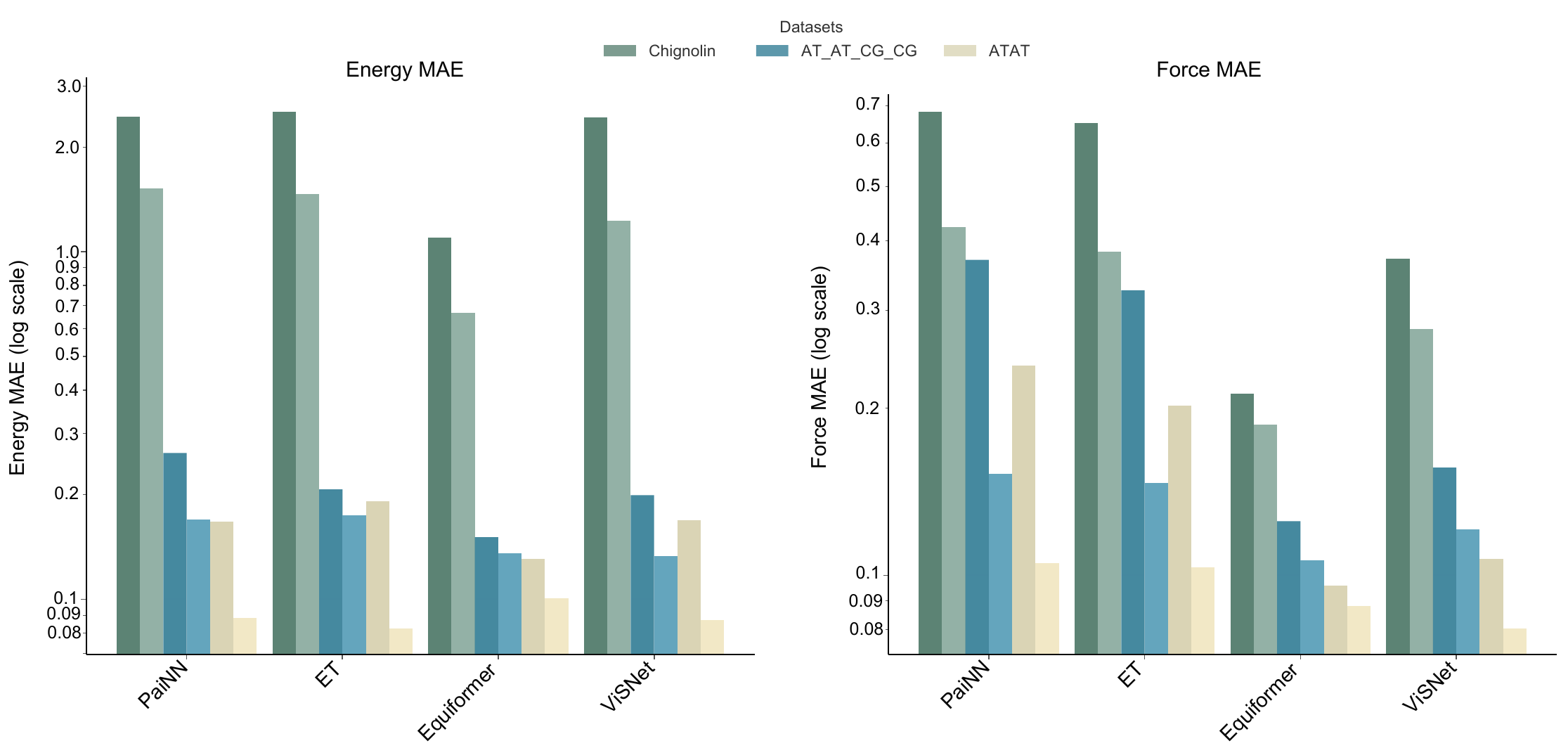}

\caption{General applicability of LSR-MP framework with PaiNN, ET, Equiformer, ViSNet. Dark color schemes indicate an EGNN without LSR-MP, and light color schemes indicate with LSR-MP.}
\label{fig:backbone-ablation}
\end{center}
\vspace{-7mm}
\end{figure} 

\section{Conclusions}
In this work, inspired by the fragmentation-based method in quantum chemistry, we proposed a novel message-passing framework, LSR-MP, to capture long-range and short-range interactions. With an implementation based on ViSNet, it showed superior performance on large molecules with efficiency and effectiveness. It also showed general applicability with application to other EGNNs. 

\textbf{Limitations and Future Work:} As shown in the MD22 dataset, BRICS has failed to deal with supramolecules, and canonical clustering methods were used as alternatives. In our future work, a differentiable fragmentation method may be further developed to learn optimal fragmentation in an end-to-end manner.

\bibliography{iclr2024}

\begin{thebibliography}{71}
\providecommand{\natexlab}[1]{#1}
\providecommand{\url}[1]{\texttt{#1}}
\expandafter\ifx\csname urlstyle\endcsname\relax
  \providecommand{\doi}[1]{doi: #1}\else
  \providecommand{\doi}{doi: \begingroup \urlstyle{rm}\Url}\fi

\bibitem[Alon \& Yahav(2020)Alon and Yahav]{alon2020bottleneck}
Uri Alon and Eran Yahav.
\newblock On the bottleneck of graph neural networks and its practical
  implications.
\newblock \emph{arXiv preprint arXiv:2006.05205}, 2020.

\bibitem[Anderson et~al.(2019)Anderson, Hy, and Kondor]{anderson2019cormorant}
Brandon Anderson, Truong~Son Hy, and Risi Kondor.
\newblock Cormorant: Covariant molecular neural networks.
\newblock \emph{Advances in neural information processing systems}, 32, 2019.

\bibitem[Batatia et~al.(2022)Batatia, Kovacs, Simm, Ortner, and
  Csanyi]{batatia2022mace}
Ilyes Batatia, David~Peter Kovacs, Gregor N.~C. Simm, Christoph Ortner, and
  Gabor Csanyi.
\newblock {MACE}: Higher order equivariant message passing neural networks for
  fast and accurate force fields.
\newblock In Alice~H. Oh, Alekh Agarwal, Danielle Belgrave, and Kyunghyun Cho
  (eds.), \emph{Advances in Neural Information Processing Systems}, 2022.
\newblock URL \url{https://openreview.net/forum?id=YPpSngE-ZU}.

\bibitem[Batzner et~al.(2022)Batzner, Musaelian, Sun, Geiger, Mailoa,
  Kornbluth, Molinari, Smidt, and Kozinsky]{batzner20223}
Simon Batzner, Albert Musaelian, Lixin Sun, Mario Geiger, Jonathan~P Mailoa,
  Mordechai Kornbluth, Nicola Molinari, Tess~E Smidt, and Boris Kozinsky.
\newblock E (3)-equivariant graph neural networks for data-efficient and
  accurate interatomic potentials.
\newblock \emph{Nature communications}, 13\penalty0 (1):\penalty0 1--11, 2022.

\bibitem[Behler(2021)]{behler2021four}
J{\"o}rg Behler.
\newblock Four generations of high-dimensional neural network potentials.
\newblock \emph{Chemical Reviews}, 121\penalty0 (16):\penalty0 10037--10072,
  2021.

\bibitem[Car \& Parrinello(1985)Car and Parrinello]{car1985unified}
Richard Car and Mark Parrinello.
\newblock Unified approach for molecular dynamics and density-functional
  theory.
\newblock \emph{Physical review letters}, 55\penalty0 (22):\penalty0 2471,
  1985.

\bibitem[Cellmer et~al.(2011)Cellmer, Buscaglia, Henry, Hofrichter, and
  Eaton]{cellmer2011making}
Troy Cellmer, Marco Buscaglia, Eric~R Henry, James Hofrichter, and William~A
  Eaton.
\newblock Making connections between ultrafast protein folding kinetics and
  molecular dynamics simulations.
\newblock \emph{Proceedings of the National Academy of Sciences}, 108\penalty0
  (15):\penalty0 6103--6108, 2011.

\bibitem[Chmiela et~al.(2017)Chmiela, Tkatchenko, Sauceda, Poltavsky,
  Sch{\"u}tt, and M{\"u}ller]{chmiela2017machine}
Stefan Chmiela, Alexandre Tkatchenko, Huziel~E Sauceda, Igor Poltavsky,
  Kristof~T Sch{\"u}tt, and Klaus-Robert M{\"u}ller.
\newblock Machine learning of accurate energy-conserving molecular force
  fields.
\newblock \emph{Science advances}, 3\penalty0 (5):\penalty0 e1603015, 2017.

\bibitem[Chmiela et~al.(2018)Chmiela, Sauceda, M{\"u}ller, and
  Tkatchenko]{chmiela2018towards}
Stefan Chmiela, Huziel~E Sauceda, Klaus-Robert M{\"u}ller, and Alexandre
  Tkatchenko.
\newblock Towards exact molecular dynamics simulations with machine-learned
  force fields.
\newblock \emph{Nature communications}, 9\penalty0 (1):\penalty0 1--10, 2018.

\bibitem[Chmiela et~al.(2022)Chmiela, Vassilev-Galindo, Unke, Kabylda, Sauceda,
  Tkatchenko, and Müller]{https://doi.org/10.48550/arxiv.2209.14865}
Stefan Chmiela, Valentin Vassilev-Galindo, Oliver~T. Unke, Adil Kabylda,
  Huziel~E. Sauceda, Alexandre Tkatchenko, and Klaus-Robert Müller.
\newblock Accurate global machine learning force fields for molecules with
  hundreds of atoms, 2022.
\newblock URL \url{https://arxiv.org/abs/2209.14865}.

\bibitem[Chmiela et~al.(2023)Chmiela, Vassilev-Galindo, Unke, Kabylda, Sauceda,
  Tkatchenko, and Müller]{doi:10.1126/sciadv.adf0873}
Stefan Chmiela, Valentin Vassilev-Galindo, Oliver~T. Unke, Adil Kabylda,
  Huziel~E. Sauceda, Alexandre Tkatchenko, and Klaus-Robert Müller.
\newblock Accurate global machine learning force fields for molecules with
  hundreds of atoms.
\newblock \emph{Science Advances}, 9\penalty0 (2):\penalty0 eadf0873, 2023.
\newblock \doi{10.1126/sciadv.adf0873}.
\newblock URL \url{https://www.science.org/doi/abs/10.1126/sciadv.adf0873}.

\bibitem[Collins \& Bettens(2015)Collins and Bettens]{collins2015energy}
Michael~A Collins and Ryan~PA Bettens.
\newblock Energy-based molecular fragmentation methods.
\newblock \emph{Chemical reviews}, 115\penalty0 (12):\penalty0 5607--5642,
  2015.

\bibitem[Deev \& Collins(2005)Deev and Collins]{deev2005approximate}
Vitali Deev and Michael~A Collins.
\newblock Approximate ab initio energies by systematic molecular fragmentation.
\newblock \emph{The Journal of chemical physics}, 122\penalty0 (15):\penalty0
  154102, 2005.

\bibitem[Deringer et~al.(2021)Deringer, Bart{\'o}k, Bernstein, Wilkins,
  Ceriotti, and Cs{\'a}nyi]{deringer2021gaussian}
Volker~L Deringer, Albert~P Bart{\'o}k, Noam Bernstein, David~M Wilkins,
  Michele Ceriotti, and G{\'a}bor Cs{\'a}nyi.
\newblock Gaussian process regression for materials and molecules.
\newblock \emph{Chemical Reviews}, 121\penalty0 (16):\penalty0 10073--10141,
  2021.

\bibitem[Di~Giovanni et~al.(2023)Di~Giovanni, Giusti, Barbero, Luise, Lio, and
  Bronstein]{pmlr-v202-di-giovanni23a}
Francesco Di~Giovanni, Lorenzo Giusti, Federico Barbero, Giulia Luise, Pietro
  Lio, and Michael~M. Bronstein.
\newblock On over-squashing in message passing neural networks: The impact of
  width, depth, and topology.
\newblock In Andreas Krause, Emma Brunskill, Kyunghyun Cho, Barbara Engelhardt,
  Sivan Sabato, and Jonathan Scarlett (eds.), \emph{Proceedings of the 40th
  International Conference on Machine Learning}, volume 202 of
  \emph{Proceedings of Machine Learning Research}, pp.\  7865--7885. PMLR,
  23--29 Jul 2023.
\newblock URL \url{https://proceedings.mlr.press/v202/di-giovanni23a.html}.

\bibitem[Dwivedi \& Bresson(2020)Dwivedi and
  Bresson]{dwivedi2020generalization}
Vijay~Prakash Dwivedi and Xavier Bresson.
\newblock A generalization of transformer networks to graphs.
\newblock \emph{arXiv preprint arXiv:2012.09699}, 2020.

\bibitem[Exner \& Mezey(2003)Exner and Mezey]{exner2003ab}
Thomas~E Exner and Paul~G Mezey.
\newblock Ab initio quality properties for macromolecules using the adma
  approach.
\newblock \emph{Journal of computational chemistry}, 24\penalty0 (16):\penalty0
  1980--1986, 2003.

\bibitem[Frank et~al.(2022{\natexlab{a}})Frank, Unke, and
  M{\"u}ller]{frank2022so3krates}
J~Thorben Frank, Oliver~T Unke, and Klaus-Robert M{\"u}ller.
\newblock So3krates--self-attention for higher-order geometric interactions on
  arbitrary length-scales.
\newblock \emph{arXiv preprint arXiv:2205.14276}, 2022{\natexlab{a}}.

\bibitem[Frank et~al.(2023)Frank, Unke, M{\"u}ller, and
  Chmiela]{frank2023peptides}
J~Thorben Frank, Oliver~T Unke, Klaus-Robert M{\"u}ller, and Stefan Chmiela.
\newblock From peptides to nanostructures: A euclidean transformer for fast and
  stable machine learned force fields.
\newblock \emph{arXiv preprint arXiv:2309.15126}, 2023.

\bibitem[Frank et~al.(2022{\natexlab{b}})Frank, Unke, and
  Muller]{frank2022sokrates}
Thorben Frank, Oliver~Thorsten Unke, and Klaus~Robert Muller.
\newblock So3krates: Equivariant attention for interactions on arbitrary
  length-scales in molecular systems.
\newblock In Alice~H. Oh, Alekh Agarwal, Danielle Belgrave, and Kyunghyun Cho
  (eds.), \emph{Advances in Neural Information Processing Systems},
  2022{\natexlab{b}}.
\newblock URL \url{https://openreview.net/forum?id=tlUnxtAmcJq}.

\bibitem[Fu et~al.(2020)Fu, Zhang, Meng, and King]{fu2020magnn}
Xinyu Fu, Jiani Zhang, Ziqiao Meng, and Irwin King.
\newblock Magnn: Metapath aggregated graph neural network for heterogeneous
  graph embedding.
\newblock In \emph{Proceedings of The Web Conference 2020}, pp.\  2331--2341,
  2020.

\bibitem[Fuchs et~al.(2020)Fuchs, Worrall, Fischer, and Welling]{fuchs2020se}
Fabian Fuchs, Daniel Worrall, Volker Fischer, and Max Welling.
\newblock Se (3)-transformers: 3d roto-translation equivariant attention
  networks.
\newblock \emph{Advances in Neural Information Processing Systems},
  33:\penalty0 1970--1981, 2020.

\bibitem[Gadre \& Ganesh(2006)Gadre and Ganesh]{gadre2006molecular}
Shridhar~R Gadre and V~Ganesh.
\newblock Molecular tailoring approach: towards pc-based ab initio treatment of
  large molecules.
\newblock \emph{Journal of Theoretical and Computational Chemistry}, 5\penalty0
  (04):\penalty0 835--855, 2006.

\bibitem[Ganesh et~al.(2006)Ganesh, Dongare, Balanarayan, and
  Gadre]{ganesh2006molecular}
V~Ganesh, Rameshwar~K Dongare, P~Balanarayan, and Shridhar~R Gadre.
\newblock Molecular tailoring approach for geometry optimization of large
  molecules: Energy evaluation and parallelization strategies.
\newblock \emph{The Journal of chemical physics}, 125\penalty0 (10):\penalty0
  104109, 2006.

\bibitem[Gasteiger et~al.(2022)Gasteiger, Shuaibi, Sriram, G{\"u}nnemann,
  Ulissi, Zitnick, and Das]{gasteiger2022gemnet}
Johannes Gasteiger, Muhammed Shuaibi, Anuroop Sriram, Stephan G{\"u}nnemann,
  Zachary Ulissi, C~Lawrence Zitnick, and Abhishek Das.
\newblock Gemnet-oc: developing graph neural networks for large and diverse
  molecular simulation datasets.
\newblock \emph{arXiv preprint arXiv:2204.02782}, 2022.

\bibitem[Grisafi \& Ceriotti(2019)Grisafi and Ceriotti]{Grisafi_2019}
Andrea Grisafi and Michele Ceriotti.
\newblock Incorporating long-range physics in atomic-scale machine learning.
\newblock \emph{The Journal of Chemical Physics}, 151\penalty0 (20), nov 2019.
\newblock \doi{10.1063/1.5128375}.
\newblock URL \url{https://doi.org/10.1063%2F1.5128375}.

\bibitem[Gu et~al.(2004)Gu, Aoki, Korchowiec, Imamura, and Kirtman]{gu2004new}
Feng~Long Gu, Yuriko Aoki, Jacek Korchowiec, Akira Imamura, and Bernard
  Kirtman.
\newblock A new localization scheme for the elongation method.
\newblock \emph{The Journal of chemical physics}, 121\penalty0 (21):\penalty0
  10385--10391, 2004.

\bibitem[Han et~al.(2022)Han, Rong, Xu, and Huang]{han2022geometrically}
Jiaqi Han, Yu~Rong, Tingyang Xu, and Wenbing Huang.
\newblock Geometrically equivariant graph neural networks: A survey.
\newblock \emph{arXiv preprint arXiv:2202.07230}, 2022.

\bibitem[He \& Zhang(2005)He and Zhang]{he2005new}
Xiao He and John~ZH Zhang.
\newblock A new method for direct calculation of total energy of protein.
\newblock \emph{The Journal of chemical physics}, 122\penalty0 (3):\penalty0
  031103, 2005.

\bibitem[He \& Zhang(2006)He and Zhang]{he2006generalized}
Xiao He and John~ZH Zhang.
\newblock The generalized molecular fractionation with conjugate caps/molecular
  mechanics method for direct calculation of protein energy.
\newblock \emph{The Journal of chemical physics}, 124\penalty0 (18):\penalty0
  184703, 2006.

\bibitem[He et~al.(2014)He, Zhu, Wang, Liu, and Zhang]{he2014fragment}
Xiao He, Tong Zhu, Xianwei Wang, Jinfeng Liu, and John~ZH Zhang.
\newblock Fragment quantum mechanical calculation of proteins and its
  applications.
\newblock \emph{Accounts of chemical research}, 47\penalty0 (9):\penalty0
  2748--2757, 2014.

\bibitem[Hohenberg \& Kohn(1964)Hohenberg and Kohn]{hohenberg1964inhomogeneous}
Pierre Hohenberg and Walter Kohn.
\newblock Inhomogeneous electron gas.
\newblock \emph{Physical review}, 136\penalty0 (3B):\penalty0 B864, 1964.

\bibitem[Hong et~al.(2020)Hong, Guo, Lin, Yang, Li, and Ye]{hong2020attention}
Huiting Hong, Hantao Guo, Yucheng Lin, Xiaoqing Yang, Zang Li, and Jieping Ye.
\newblock An attention-based graph neural network for heterogeneous structural
  learning.
\newblock In \emph{Proceedings of the AAAI conference on artificial
  intelligence}, volume~34, pp.\  4132--4139, 2020.

\bibitem[Hu et~al.(2020)Hu, Dong, Wang, and Sun]{hu2020heterogeneous}
Ziniu Hu, Yuxiao Dong, Kuansan Wang, and Yizhou Sun.
\newblock Heterogeneous graph transformer.
\newblock In \emph{Proceedings of the web conference 2020}, pp.\  2704--2710,
  2020.

\bibitem[Hwang et~al.(2015)Hwang, Stowe, Paek, and
  Manogaran]{hwang2015reaction}
Gyeong~S Hwang, Haley~M Stowe, Eunsu Paek, and Dhivya Manogaran.
\newblock Reaction mechanisms of aqueous monoethanolamine with carbon dioxide:
  a combined quantum chemical and molecular dynamics study.
\newblock \emph{Physical Chemistry Chemical Physics}, 17\penalty0 (2):\penalty0
  831--839, 2015.

\bibitem[Karsai et~al.(2018)Karsai, Engel, Flage-Larsen, and
  Kresse]{karsai2018electron}
Ferenc Karsai, Manuel Engel, Espen Flage-Larsen, and Georg Kresse.
\newblock Electron--phonon coupling in semiconductors within the gw
  approximation.
\newblock \emph{New Journal of Physics}, 20\penalty0 (12):\penalty0 123008,
  2018.

\bibitem[Kohn \& Sham(1965)Kohn and Sham]{kohn1965self}
Walter Kohn and Lu~Jeu Sham.
\newblock Self-consistent equations including exchange and correlation effects.
\newblock \emph{Physical review}, 140\penalty0 (4A):\penalty0 A1133, 1965.

\bibitem[Kosmala et~al.(2023)Kosmala, Gasteiger, Gao, and
  G{\"u}nnemann]{kosmala2023ewald}
Arthur Kosmala, Johannes Gasteiger, Nicholas Gao, and Stephan G{\"u}nnemann.
\newblock Ewald-based long-range message passing for molecular graphs.
\newblock \emph{arXiv preprint arXiv:2303.04791}, 2023.

\bibitem[Kovacs et~al.(2023)Kovacs, Batatia, Arany, and
  Csanyi]{kovacs2023evaluation}
David~Peter Kovacs, Ilyes Batatia, Eszter~Sara Arany, and Gabor Csanyi.
\newblock Evaluation of the mace force field architecture: from medicinal
  chemistry to materials science.
\newblock \emph{arXiv preprint arXiv:2305.14247}, 2023.

\bibitem[Kundu et~al.(2021)Kundu, Govoni, Yang, Ceriotti, Gygi, and
  Galli]{kundu2021quantum}
Arpan Kundu, Marco Govoni, Han Yang, Michele Ceriotti, Francois Gygi, and
  Giulia Galli.
\newblock Quantum vibronic effects on the electronic properties of solid and
  molecular carbon.
\newblock \emph{Physical Review Materials}, 5\penalty0 (7):\penalty0 L070801,
  2021.

\bibitem[Kundu et~al.(2022)Kundu, Song, and Galli]{kundu2022influence}
Arpan Kundu, Yunxiang Song, and Giulia Galli.
\newblock Influence of nuclear quantum effects on the electronic properties of
  amorphous carbon.
\newblock \emph{Proceedings of the National Academy of Sciences}, 119\penalty0
  (31):\penalty0 e2203083119, 2022.

\bibitem[Landrum et~al.(2013)]{landrum2013rdkit}
Greg Landrum et~al.
\newblock Rdkit: A software suite for cheminformatics, computational chemistry,
  and predictive modeling.
\newblock \emph{Greg Landrum}, 2013.
\newblock URL \url{https://www.rdkit.org/}.

\bibitem[Li et~al.(2007)Li, Li, and Jiang]{li2007generalized}
Wei Li, Shuhua Li, and Yuansheng Jiang.
\newblock Generalized energy-based fragmentation approach for computing the
  ground-state energies and properties of large molecules.
\newblock \emph{The Journal of Physical Chemistry A}, 111\penalty0
  (11):\penalty0 2193--2199, 2007.

\bibitem[Li et~al.(2021)Li, Ma, Li, and Ma]{li2021computational}
Wei Li, Haibo Ma, Shuhua Li, and Jing Ma.
\newblock Computational and data driven molecular material design assisted by
  low scaling quantum mechanics calculations and machine learning.
\newblock \emph{Chemical Science}, 2021.

\bibitem[Li et~al.(2015)Li, Tarlow, Brockschmidt, and Zemel]{li2015gated}
Yujia Li, Daniel Tarlow, Marc Brockschmidt, and Richard Zemel.
\newblock Gated graph sequence neural networks.
\newblock \emph{arXiv preprint arXiv:1511.05493}, 2015.

\bibitem[Liao \& Smidt(2022)Liao and Smidt]{liao2022equiformer}
Yi-Lun Liao and Tess Smidt.
\newblock Equiformer: Equivariant graph attention transformer for 3d atomistic
  graphs.
\newblock \emph{arXiv preprint arXiv:2206.11990}, 2022.

\bibitem[Liao \& Smidt(2023)Liao and Smidt]{liao2023equiformer}
Yi-Lun Liao and Tess Smidt.
\newblock Equiformer: Equivariant graph attention transformer for 3d atomistic
  graphs.
\newblock In \emph{International Conference on Learning Representations}, 2023.
\newblock URL \url{https://openreview.net/forum?id=KwmPfARgOTD}.

\bibitem[Mayhall \& Raghavachari(2011)Mayhall and
  Raghavachari]{mayhall2011molecules}
Nicholas~J Mayhall and Krishnan Raghavachari.
\newblock Molecules-in-molecules: An extrapolated fragment-based approach for
  accurate calculations on large molecules and materials.
\newblock \emph{Journal of chemical theory and computation}, 7\penalty0
  (5):\penalty0 1336--1343, 2011.

\bibitem[Musaelian et~al.(2023)Musaelian, Batzner, Johansson, Sun, Owen,
  Kornbluth, and Kozinsky]{musaelian2023learning}
Albert Musaelian, Simon Batzner, Anders Johansson, Lixin Sun, Cameron~J Owen,
  Mordechai Kornbluth, and Boris Kozinsky.
\newblock Learning local equivariant representations for large-scale atomistic
  dynamics.
\newblock \emph{Nature Communications}, 14\penalty0 (1):\penalty0 579, 2023.

\bibitem[Nakata \& Maeda(2023)Nakata and Maeda]{nakata2023pubchemqc}
Maho Nakata and Toshiyuki Maeda.
\newblock Pubchemqc b3lyp/6-31g*//pm6 dataset: the electronic structures of 86
  million molecules using b3lyp/6-31g* calculations.
\newblock \emph{arXiv preprint arXiv:2305.18454}, 2023.

\bibitem[Ramakrishnan et~al.(2014)Ramakrishnan, Dral, Rupp, and
  Von~Lilienfeld]{ramakrishnan2014quantum}
Raghunathan Ramakrishnan, Pavlo~O Dral, Matthias Rupp, and O~Anatole
  Von~Lilienfeld.
\newblock Quantum chemistry structures and properties of 134 kilo molecules.
\newblock \emph{Scientific data}, 1\penalty0 (1):\penalty0 1--7, 2014.

\bibitem[Schlichtkrull et~al.(2017)Schlichtkrull, Kipf, Bloem, van~den Berg,
  Titov, and Welling]{schlichtkrull2017modeling}
Michael Schlichtkrull, Thomas~N. Kipf, Peter Bloem, Rianne van~den Berg, Ivan
  Titov, and Max Welling.
\newblock Modeling relational data with graph convolutional networks, 2017.

\bibitem[Sch{\"u}tt et~al.(2021)Sch{\"u}tt, Unke, and
  Gastegger]{schutt2021equivariant}
Kristof Sch{\"u}tt, Oliver Unke, and Michael Gastegger.
\newblock Equivariant message passing for the prediction of tensorial
  properties and molecular spectra.
\newblock In \emph{International Conference on Machine Learning}, pp.\
  9377--9388. PMLR, 2021.

\bibitem[Sch{\"u}tt et~al.(2017)Sch{\"u}tt, Arbabzadah, Chmiela, M{\"u}ller,
  and Tkatchenko]{schutt2017quantum}
Kristof~T Sch{\"u}tt, Farhad Arbabzadah, Stefan Chmiela, Klaus~R M{\"u}ller,
  and Alexandre Tkatchenko.
\newblock Quantum-chemical insights from deep tensor neural networks.
\newblock \emph{Nature communications}, 8\penalty0 (1):\penalty0 1--8, 2017.

\bibitem[Shoghi et~al.(2023)Shoghi, Kolluru, Kitchin, Ulissi, Zitnick, and
  Wood]{shoghi2023molecules}
Nima Shoghi, Adeesh Kolluru, John~R Kitchin, Zachary~W Ulissi, C~Lawrence
  Zitnick, and Brandon~M Wood.
\newblock From molecules to materials: Pre-training large generalizable models
  for atomic property prediction.
\newblock \emph{arXiv preprint arXiv:2310.16802}, 2023.

\bibitem[Szabo \& Ostlund(2012)Szabo and Ostlund]{szabo2012modern}
Attila Szabo and Neil~S Ostlund.
\newblock \emph{Modern quantum chemistry: introduction to advanced electronic
  structure theory}.
\newblock Courier Corporation, 2012.

\bibitem[Th{\"o}lke \& De~Fabritiis(2022)Th{\"o}lke and
  De~Fabritiis]{tholke2022torchmd}
Philipp Th{\"o}lke and Gianni De~Fabritiis.
\newblock Torchmd-net: Equivariant transformers for neural network based
  molecular potentials.
\newblock \emph{The International Conference on Learning Representations},
  2022.

\bibitem[Thomas et~al.(2018)Thomas, Smidt, Kearnes, Yang, Li, Kohlhoff, and
  Riley]{thomas2018tensor}
Nathaniel Thomas, Tess Smidt, Steven Kearnes, Lusann Yang, Li~Li, Kai Kohlhoff,
  and Patrick Riley.
\newblock Tensor field networks: Rotation-and translation-equivariant neural
  networks for 3d point clouds.
\newblock \emph{arXiv preprint arXiv:1802.08219}, 2018.

\bibitem[Topping et~al.(2021)Topping, Di~Giovanni, Chamberlain, Dong, and
  Bronstein]{topping2021understanding}
Jake Topping, Francesco Di~Giovanni, Benjamin~Paul Chamberlain, Xiaowen Dong,
  and Michael~M Bronstein.
\newblock Understanding over-squashing and bottlenecks on graphs via curvature.
\newblock In \emph{International Conference on Learning Representations}, 2021.

\bibitem[Unke et~al.(2021)Unke, Chmiela, Sauceda, Gastegger, Poltavsky,
  Sch{\"u}tt, Tkatchenko, and Mu{\"u}ller]{unke2021machine}
Oliver~T Unke, Stefan Chmiela, Huziel~E Sauceda, Michael Gastegger, Igor
  Poltavsky, Kristof~T Sch{\"u}tt, Alexandre Tkatchenko, and Klaus-Robert
  Mu{\"u}ller.
\newblock Machine learning force fields.
\newblock \emph{Chemical Reviews}, 121\penalty0 (16):\penalty0 10142--10186,
  2021.

\bibitem[Vaswani et~al.(2017)Vaswani, Shazeer, Parmar, Uszkoreit, Jones, Gomez,
  Kaiser, and Polosukhin]{DBLP:journals/corr/VaswaniSPUJGKP17}
Ashish Vaswani, Noam Shazeer, Niki Parmar, Jakob Uszkoreit, Llion Jones,
  Aidan~N. Gomez, Lukasz Kaiser, and Illia Polosukhin.
\newblock Attention is all you need.
\newblock \emph{CoRR}, abs/1706.03762, 2017.
\newblock URL \url{http://arxiv.org/abs/1706.03762}.

\bibitem[Wang et~al.(2019{\natexlab{a}})Wang, Zhang, Zhang, Leskovec, Zhao, Li,
  and Wang]{DBLP:journals/corr/abs-1905-04413}
Hongwei Wang, Fuzheng Zhang, Mengdi Zhang, Jure Leskovec, Miao Zhao, Wenjie Li,
  and Zhongyuan Wang.
\newblock Knowledge graph convolutional networks for recommender systems with
  label smoothness regularization.
\newblock \emph{CoRR}, abs/1905.04413, 2019{\natexlab{a}}.
\newblock URL \url{http://arxiv.org/abs/1905.04413}.

\bibitem[Wang et~al.(2021)Wang, Xu, Li, Wang, Zhang, Liu, Wu, and
  Zheng]{wang2021kg4sl}
Shike Wang, Fan Xu, Yunyang Li, Jie Wang, Ke~Zhang, Yong Liu, Min Wu, and Jie
  Zheng.
\newblock Kg4sl: knowledge graph neural network for synthetic lethality
  prediction in human cancers.
\newblock \emph{Bioinformatics}, 37\penalty0 (Supplement\_1):\penalty0
  i418--i425, 2021.

\bibitem[Wang et~al.(2019{\natexlab{b}})Wang, Ji, Shi, Wang, Ye, Cui, and
  Yu]{wang2019heterogeneous}
Xiao Wang, Houye Ji, Chuan Shi, Bai Wang, Yanfang Ye, Peng Cui, and Philip~S
  Yu.
\newblock Heterogeneous graph attention network.
\newblock In \emph{The world wide web conference}, pp.\  2022--2032,
  2019{\natexlab{b}}.

\bibitem[Wang et~al.(2022{\natexlab{a}})Wang, Li, He, Li, Wang, Zheng, Shao,
  Wang, and Liu]{wang2022ViSNet}
Yusong Wang, Shaoning Li, Xinheng He, Mingyu Li, Zun Wang, Nanning Zheng, Bin
  Shao, Tong Wang, and Tie-Yan Liu.
\newblock Visnet: a scalable and accurate geometric deep learning potential for
  molecular dynamics simulation.
\newblock \emph{arXiv preprint arXiv:2210.16518}, 2022{\natexlab{a}}.

\bibitem[Wang et~al.(2022{\natexlab{b}})Wang, Li, Wang, Wang, He, Shao, and
  Liu]{wang2022ensemble}
Yusong Wang, Shaoning Li, Tong Wang, Zun Wang, Xinheng He, Bin Shao, and
  Tie-Yan Liu.
\newblock An ensemble of visnet, transformer-m, and pretraining models for
  molecular property prediction in ogb large-scale challenge@ neurips 2022.
\newblock \emph{arXiv preprint arXiv:2211.12791}, 2022{\natexlab{b}}.

\bibitem[Wang et~al.(2022{\natexlab{c}})Wang, Wang, Zhao, Xu, Hao, Hsieh, Gu,
  and Duan]{wang2022heterogeneous}
Zun Wang, Chong Wang, Sibo Zhao, Yong Xu, Shaogang Hao, Chang~Yu Hsieh,
  Bing-Lin Gu, and Wenhui Duan.
\newblock Heterogeneous relational message passing networks for molecular
  dynamics simulations.
\newblock \emph{npj Computational Materials}, 8\penalty0 (1):\penalty0 53,
  2022{\natexlab{c}}.

\bibitem[Wang et~al.(2023)Wang, Wu, Sun, He, Liu, Shao, Wang, and
  Liu]{wang2023improving}
Zun Wang, Hongfei Wu, Lixin Sun, Xinheng He, Zhirong Liu, Bin Shao, Tong Wang,
  and Tie-Yan Liu.
\newblock Improving machine learning force fields for molecular dynamics
  simulations with fine-grained force metrics.
\newblock \emph{The Journal of Chemical Physics}, 159\penalty0 (3), 2023.

\bibitem[Yang(1991)]{yang1991direct}
Weitao Yang.
\newblock Direct calculation of electron density in density-functional theory.
\newblock \emph{Physical review letters}, 66\penalty0 (11):\penalty0 1438,
  1991.

\bibitem[Zhang et~al.(2018)Zhang, Han, Wang, Car, and Weinan]{zhang2018deep}
Linfeng Zhang, Jiequn Han, Han Wang, Roberto Car, and EJPRL Weinan.
\newblock Deep potential molecular dynamics: a scalable model with the accuracy
  of quantum mechanics.
\newblock \emph{Physical review letters}, 120\penalty0 (14):\penalty0 143001,
  2018.

\bibitem[Zhang et~al.(2022)Zhang, Wang, Muniz, Panagiotopoulos, Car, and
  E]{zhang2022deep}
Linfeng Zhang, Han Wang, Maria~Carolina Muniz, Athanassios~Z Panagiotopoulos,
  Roberto Car, and Weinan E.
\newblock A deep potential model with long-range electrostatic interactions.
\newblock \emph{The Journal of Chemical Physics}, 156\penalty0 (12):\penalty0
  124107, 2022.

\end{thebibliography}
\bibliographystyle{iclr2024_conference}

\newpage

\begin{appendices}
\DoToC

\newpage
\section{Additional Related Work}

\textbf{Learning on Heterogeneous Graphs:} Heterogeneous graphs are vital in recommendation systems \cite{DBLP:journals/corr/abs-1905-04413} and scientific fields like molecular dynamics \cite{wang2022heterogeneous} and systems biology \cite{wang2021kg4sl}. Heterogeneous Graph Neural Networks (HGNNs) are divided into two types: metapath-based and metapath-free methods. Metapath-based HGNNs first gather features from similar semantic neighbors and then combine different semantic types. For example, Relational GCN \cite{schlichtkrull2017modeling} uses unique weights for each relationship type. HAN \cite{wang2019heterogeneous} leverages metapaths to differentiate semantics, merging structural details from each metapath in the aggregation process, and then combining these details for each node. MAGNN \cite{fu2020magnn} expands this idea by considering all nodes in a metapath instance. On the other hand, metapath-free HGNNs, like traditional GNNs, collect information from all neighboring nodes simultaneously. Among metapath-free HGNNs, HetSANN \cite{hong2020attention} uses a multi-layer network to create specific attentions for different relationships, and HGT \cite{hu2020heterogeneous} introduces the Transformer model to handle different types of nodes and edges.

\textbf{Learning Long-Range Dependency:} Recent advancements in graph neural networks (GNNs) have shown promising developments in addressing long-range dependencies within graph structures, a challenge historically difficult due to the inherent limitations of local neighborhood aggregation. Pioneering approaches, like those presented by \cite{li2015gated}, introduced Gated Graph Sequence Neural Networks, leveraging gated recurrent units to better capture long-range interactions. The emergence of Graph Transformers, as proposed by Dwivedi et al. (2020) \cite{dwivedi2020generalization}, marks a significant leap forward, employing self-attention mechanisms to directly model relationships between distant nodes. Additionally, works such as those by \cite{alon2020bottleneck} on spectral-based GNNs have demonstrated innovative ways to encapsulate long-range dependencies by extending traditional convolutional methods. \cite{pmlr-v202-di-giovanni23a} introduced curvature-based rewiring methods to handle long-range dependency. In the realm of equivariant graph neural networks, \cite{frank2022so3krates} introduced extra neighbors in the hidden space to capture long-range interactions. \cite{kosmala2023ewald} introduced Ewald summation, which imposes a Fourier transformation with frequency cutoff to capture long-range interactions.

\section{LSR-MP can alleviate Over-squashing}
\label{app:oversquashing}
The definition of information over squashing is defined as \textit{symmetric Jacobian obstruction} \cite{pmlr-v202-di-giovanni23a}.  The mathematical representation of this concept is given by: \[
        \underbrace{J^m_k(v,u)}_\text{A quantity to characterize oversquashing} := \underbrace{\frac{1}{d_v}\frac{\partial h_v^{m}}{\partial h_v^{k}}}_\text{Sensitivity to myself} - \underbrace{\frac{1}{d_v d_u}\frac{\partial h_v^{m}}{\partial h_u^{k}}}_\text{Sensitivity to distance nodes},
    \]
where $m > k$ both index layer numbers. 
\begin{proposition}[\textbf{EGNN module over-squashing is dependent on commute time} \cite{pmlr-v202-di-giovanni23a}]
    \label{prop:short-range}
    Given a short-range module, with \(S_{r,a} := c_rI + c_a A\) as the graph shift operator adopted by the short-range module. Let \(O^m(v, u)\) be the symmetric Jacobian obstruction of nodes \(v, u\) after \(m\) layers. Assume that each path in the computational graph is activated with equal probability \(\rho\). Let \(\mu\) and \(v\) be the maximal spectral norm and minimal singular value of the weight matrices of the short-range module. Let \(\lambda = \frac{\rho}{\mu c_\alpha 2|E|}\). If \(\mu(c_r + c_a) \leq 1\), we have:
    \[
    \phi(\mathcal{G}_\mathrm{short}, m) \lambda \tau(v,u) \leq O^{(m)}(v,u) \leq \lambda \tau(v,u) \leq 2\lambda |E|^2,
    \]
\end{proposition}
wherein \(\tau(v,u)\) referred to as the \textit{commute time}, quantifies the expected number of steps a random walk\footnote{Random walks refers to a uniform sampling of the adjacency matrix at each step with backtracking enabled.} originating at node 
$v$ would take to reach node $u$ and return. In certain molecular structures, this commute time can be considerably large. Consider, for instance, an elongated carbon chain or unfolded amino acid chain with hundreds of amino acids, the commute time spanning from one end to the other could be $O(|E|^2)$.

 \begin{proposition}[\textbf{Long-range module alleviate over-sqashing}]
 
     Given a long-range module, with the same assumption holds as in Proposition~\ref{prop:short-range}, if \(r_\mathrm{long}\) could cover the whole molecule, the long-range obstruction function is upper bounded by a function independent of \(\mathrm{dist}(\Vec{p_i}, \Vec{p_j})\)
     \[
     O^{(m)} (v,u) \leq \frac{2\rho}{\mu c_\alpha}.
     \]
 \end{proposition}
 \begin{proof}
     The commute time could be expanded as:
     \[
     \tau(v,u) = 2 |E| R(v,u),
     \]
     where \(R(v,u)\) here refers to the voltage difference if a unit current flows from $v$ to $u$, with each edge assigned with unit resistance.

     In long-range module:
     $$\forall v,u \in \mathcal{V}, R(v,u)  = \frac{1}{\frac{1}{R_1} + \frac{1}{R_2} + \cdots \frac{1}{R_n} }, $$
     where $R_1$ $R_2$ refers to the length of the paths from $v$ to $u$, $n$ is the total number of paths.

    We proceed to find a  path from $v,u$ in LSR-MP.
     Recall the fragmentation module introduces virtual nodes connecting to all nodes contained within a fragment. Hence, in the long-range graph, each node is connected to the virtual nodes of its corresponding fragments. Furthermore,  if \(r_\mathrm{long}\) could cover the entire molecule, this translates to a complete bipartite graph where each node is connected to each virtual node. Hence, this gives:
     $$\forall v, u, \exists i\in [n], \text{such that}\ R_i= 2.$$
This gives an upper bound on $R(v,u)$:
$$R(v,u)  = \frac{1}{\frac{1}{R_1} + \frac{1}{R_2} + \cdots \frac{1}{R_n} } \leq R_{\text{min}} \leq  R_i = 2,$$
$$\tau(v,u) \leq 4|E|,$$
     \[
     O^{(m)} (v,u) \leq \frac{\rho \tau(u,v)}{2\mu c_\alpha |E|} \leq \frac{2\rho}{\mu c_\alpha}.
     \]
 \end{proof}

\section{Justification of the Method}
The total energy of a molecule consisting of $N$ atoms writes,
\begin{equation}
    E_\mathrm{tot}^{(N)} = E(\vec{p}_1, \cdots, \vec{p}_N),
\end{equation}
where $\vec{p}_i$ represents the position of the $i$-th atom. Due to the complexity, the exact form of the above function is unknown with limited exceptions of very small systems.
Our development starts from the ansatz that the total energy of the $N$-body system can be decomposed as follows,
\begin{equation}
    E^{(N)}_\mathrm{tot} = \sum_i^N e^{(1)}(\vec{p}_i) + \sum_i^N\sum_{i<j}^N e^{(2)}(\vec{p}_i,\vec{p}_j) + \sum_i^N\sum_{i<j}^N\sum_{i<j<k}^N e^{(3)}(\vec{p}_i,\vec{p}_j,\vec{p}_k) + \cdots,\label{equ:energy-decompose-1}
\end{equation}
where $e^{(1)}$ is the energy associated to a single atom, and $e^{(2)}$ and $e^{(3)}$ are the two- and three-body interactions, respectively. We omit higher orders of interactions in our derivation for simplicity.

To model the total energy function in Eq.~\ref{equ:energy-decompose-1}, existing EGNNs typically adopt the strategy that neglects the interactions beyond a truncation radius $r_\mathrm{short}$, 
\begin{equation}
\begin{aligned}
e^{(2)}(\vec{p}_i,\vec{p}_j) & = 0, \mathrm{if}\ \ \mathrm{dist}(\vec{p}_i,\vec{p}_j) > r_\mathrm{short} \\
e^{(3)}(\vec{p}_i,\vec{p}_j,\vec{p}_k) &= 0, \mathrm{if}\ \ \mathrm{dist}(\vec{p}_i,\vec{p}_j) > r_\mathrm{short}\ \text{or}\ \mathrm{dist}(\vec{p}_j,\vec{p}_k) > r_\mathrm{short}\ \text{or} \ \mathrm{dist}(\vec{p}_k,\vec{p}_i) > r_\mathrm{short},
\end{aligned}
\end{equation}
where $\mathrm{dist}(\cdot,\cdot)$ is a metric function. As a result, Eq.~\ref{equ:energy-decompose-1} is approximated as,
\begin{equation}
\begin{aligned}
    E^{(N)}_\mathrm{tot} & \simeq \sum_i^N e^{(1)}(\vec{p}_i) \\
    & + \sum_i^N\sum_{i<j}^N e^{(2)}(\vec{p}_i,\vec{p}_j)\theta[r_\mathrm{short}-\mathrm{dist}(\vec{p}_i-\vec{p}_j)] \\
    & + \sum_i^N\sum_{i<j}^N\sum_{i<j<k}^N e^{(3)}(\vec{p}_i,\vec{p}_j,\vec{p}_k)\theta[r_\mathrm{short}-\mathrm{dist}(\vec{p}_i-\vec{p}_j)]\theta[r_\mathrm{short}-\mathrm{dist}(\vec{p}_j-\vec{p}_k)]\theta[r_\mathrm{short}-\mathrm{dist}(\vec{p}_k-\vec{p}_i)] \\
    & + \cdots,
\end{aligned}
\label{equ:energy-decompose-2}
\end{equation}
where $\theta[\cdot]$ is the Heaviside step function.

The neglect of the long-range interactions results in loss of accuracy, and detailed discussion can be found in the main text. To accommodate long-range interactions, a na\"ive method is to further increase the truncation radius $r_\mathrm{short}$ in Eq.~\ref{equ:energy-decompose-2}. However, this idea is known to suffer from the loss of data efficiency and prohibitive computational cost as system size increases.

To resolve this issue, we adopt a long-short strategy that introduces a second truncation radius $r_\mathrm{long}$, which is much larger than $r_\mathrm{short}$, and the interactions between atoms are modeled according to their distances relative to $r_\mathrm{short}$ and $r_\mathrm{long}$. In particular,  the two- and three-body interactions can be approximated as follows,
\begin{equation}
\begin{aligned}
e^{(2)}(\vec{p}_i,\vec{p}_j) & \simeq
 \begin{cases}
      \epsilon_2(\vec{p}_i,\vec{p}_j), & \mathrm{dist}(\vec{p}_i,\vec{p}_j) \leq r_\mathrm{short} \\
      \epsilon_2^\prime(\vec{p}_i,\vec{p}_j), & r_\mathrm{short} < \mathrm{dist}(\vec{p}_i,\vec{p}_j) \leq r_\mathrm{long} \\
      0, & \text{otherwise} \\
\end{cases}\\
e^{(3)}(\vec{p}_i,\vec{p}_j,\vec{p}_k) & \simeq 
\begin{cases}
\epsilon_3(\vec{p}_i,\vec{p}_j,\vec{p}_k),& \mathrm{dist}(\vec{p}_i,\vec{p}_j) \leq r_\mathrm{short} \ \text{and}\ \mathrm{dist}(\vec{p}_j,\vec{p}_k) \leq r_\mathrm{short}\ \text{and}\ \mathrm{dist}(\vec{p}_k,\vec{p}_i) \leq r_\mathrm{short}   \\
0,& \mathrm{dist}(\vec{p}_i,\vec{p}_j) > r_\mathrm{long}\ \text{and}\ \mathrm{dist}(\vec{p}_j,\vec{p}_k) > r_\mathrm{long}\ \text{and}\ \mathrm{dist}(\vec{p}_k,\vec{p}_i) > r_\mathrm{long}   \\
\epsilon_3^\prime(\vec{p}_i,\vec{p}_j,\vec{p}_k), & \text{otherwise} \\
\end{cases}\\
& \cdots,
\end{aligned}
\end{equation}
where $\epsilon$ and $\epsilon^\prime$ are approximated forms of the interactions to be learned with neural networks.
Our method is a generalization of existing EGNNs and can be reduced to EGNN in the limit of $r_\mathrm{long}=r_\mathrm{short}$. Apparently, the introduction of second truncation radius $r_\mathrm{long}$ covers significant amount of long-range interactions that do not present in Eq.~\ref{equ:energy-decompose-2}.

Now we turn to the implementation of the method with neural networks. We first assume that the local environment of each atom and the interactions for atoms within the short truncation radius $r_\mathrm{short}$ can be learned through a short-range model,
    \begin{equation}
    \begin{aligned}
        h_i & = \textsc{Short Range Descriptor}(\vec{p}_i) \\
        \epsilon_2(\vec{p}_i,\vec{p}_j) & = \textsc{Short Range Interaction}(h_i,h_j) \\
        \epsilon_3(\vec{p}_i,\vec{p}_j,\vec{p}_k) & = \textsc{Short Range Interaction}(h_i,h_j,h_k) \\
        & \cdots,
    \end{aligned}\label{equ:short-range}
    \end{equation}
where $h_i$ is the descriptor of the local environment for $i$-th atom. The long-range interactions are assumed to be learned also from the local descriptors,
    \begin{equation}
    \begin{aligned}
        \epsilon_2^\prime(\vec{p}_i,\vec{p}_j) & = \textsc{Atomwise Long Range}(h_i,h_j) \\
        \epsilon_3^\prime(\vec{p}_i,\vec{p}_j,\vec{p}_k) & = \textsc{Atomwise Long Range}(h_i,h_j,h_k) \\
        & \cdots.
    \end{aligned}
    \end{equation}
The interactions are written directly in terms of atomic descriptors, thus we name them atomwise long-range models. We note that the atomwise long-range models must not be the same as the short-range ones, as it makes our method an EGNN with a larger cutoff. Without doubt, the computational cost of atomwise long-range model quickly becomes prohibitive with respect to the system size. Thus, we divide the system into multiple fragments, whose descriptors can be learned from atomwise features,
\begin{equation}
    f_\alpha = \textsc{Fragment Descriptor}(\{h_i,\cdots\}),\ i\in F_\alpha,
\end{equation}
and we approximate atomwise long-range interactions with atom-fragment ones:
\begin{equation}
\begin{aligned}
    \epsilon_2^\prime(\vec{p}_i,\vec{p}_j) & + \epsilon_3^\prime(\vec{p}_i,\vec{p}_j,\vec{p}_k) + \cdots  \\
    & = \textsc{Atomwise Long Range}(h_i,h_j) + \textsc{Atomwise Long Range}(h_i,h_j,h_k) + \cdots \\
    & \simeq \textsc{Atom Fragment Long Range}(h_i, \textsc{Fragment Descriptor}(\{h_j,h_k,\cdots\}) ) \\
    & \simeq \textsc{Atom Fragment Long Range}(h_i, f_\alpha ). \\
\end{aligned}
\end{equation}
It is worthy noting that the atom-fragment long-range model only explicitly approximates two-fragment interactions. While long-range fragment-fragment interactions of higher orders are assumed to be negligible, short-range interactions that involve more than two fragments are implicitly encoded in the short-range model (Eq.~\ref{equ:short-range}).

\section{Pitfalls of Fragmentation-based Methods}
Fragmentation-based methods offer scalable solutions for quantum mechanical problems by breaking down large systems into computationally manageable pieces. However, capturing many-body effects remains a challenge. This appendix elucidates the reasons behind these limitations, focusing on the many-body expansion (MBE) as a representative example.

The MBE represents the energy of a system divided into \(N\) fragments as:
\begin{equation}
E_{\text{total}} = \sum_{i} E_i + \sum_{i < j} \Delta E_{ij} + \sum_{i < j < k} \Delta E_{ijk} + \dots
\end{equation}
Where:
\begin{itemize}
    \item \(E_i\) denotes the energy of the \(i\)th fragment computed in isolation.
    \item \(\Delta E_{ij}\) represents the two-body interaction energy between fragments \(i\) and \(j\).
    \item \(\Delta E_{ijk}\) signifies the three-body interaction energy among fragments \(i\), \(j\), and \(k\), and so on.
\end{itemize}

\textbf{Truncation Errors}:
Including higher-order terms often leads to truncation after a certain order (e.g., three-body terms):
\begin{equation}
E_{\text{trunc}} = \sum_{i} E_i + \sum_{i < j} \Delta E_{ij} + \sum_{i < j < k} \Delta E_{ijk}
\end{equation}
For systems where higher-order terms are essential, this truncation fails to depict the many-body nature of interactions.

\textbf{Non-additivity of Many-Body Interactions}: 
Many-body interactions are inherently non-additive, making their accurate representation in fragmentation methods challenging.

\textbf{Basis Set Inconsistency}: 
Basis Set Superposition Error (BSSE) arises due to overlapping basis functions when fragments are combined, distorting the true many-body interactions.

\textbf{Mutual Polarization}: 
Many-body effects often emerge from mutual polarization of multiple fragments. When fragments are treated in isolation or only via pairwise interactions, this mutual polarization is missed.

In conclusiton, while MBE offers a systematic framework for accounting for fragment interactions, its practical implementation in fragmentation methods is fraught with challenges. The inherent approximations, truncation of the series, and non-additive effects introduce difficulties in capturing the true many-body nature of molecular interactions.

\section{Additional Experiments}

\subsection{PubChem}
\label{sec:pubchem}
PubChemQC B3LYP/6-31G*//PM6 database~\cite{nakata2023pubchemqc}, which contains electronic properties calculated using density functional theory for 85.9 million small molecules from PubChem. To testify long-range interactions, we extracted molecules with molecular weight larger than 500 to perform training. The atom number of the dataset ranges from 40  to 100. We used 26000 molecules as the training set and 8000 molecules as the test set. The experiment results are included in Table~\ref{tab:binding-energy}.

\subsection{Electrostatics Binding Energy}

The binding energy of the charged dimers dataset~\cite{Grisafi_2019} consists of 661 diverse organic molecular dimers containing H, C, N, and O atoms, with at least one monomer in each dimer carrying a net charge, extracted from the BioFragment Database. For each dimer, the dataset provides 13 configurations with varying inter-monomer distances from 3-8 Å, and the corresponding binding curves (interaction energies versus distance) are calculated using DFT. 600 dimers are used for training machine learning models and 61 for testing. Isolated monomers are also included to provide the dissociation limit. This dataset offers a realistic challenge for assessing model performance in predicting binding curves dominated by long-range electrostatic interactions across a wide range of chemical environments. The test results are attached in Table~\ref{tab:binding-energy}.

\begin{table}[h]

\centering
\caption{Energy MAE (kcal/mol)}

\label{tab:binding-energy}
\resizebox{0.4\columnwidth}{!}{%
\begin{tabular}{@{}lll@{}}
\toprule
   & Electrostatics Binding Energy & PubChem                                                                                             \\ \midrule
ViSNet                       &  0.1064 &  2.978 \\
ViSNet-LSRM                      &  0.0654 & 2.012   \\ \bottomrule                                                  
\end{tabular}%
}
\end{table}

\subsubsection{Decay of Interactions}
We studied the decay of interactions by separating two molecules in a dimer configuration and plotting the energy as a function of distance.  The results are attached in Fig. \ref{fig:decay}. These experiments demonstrate that, compared to a local model, our model exhibits a more appropriate decaying behavior. This finding is crucial as it suggests that our model captures the long-range interactions more effectively. 

\begin{figure*}[ht]
\centering
\begin{minipage}{\textwidth}
\includegraphics[width=1.0\textwidth]{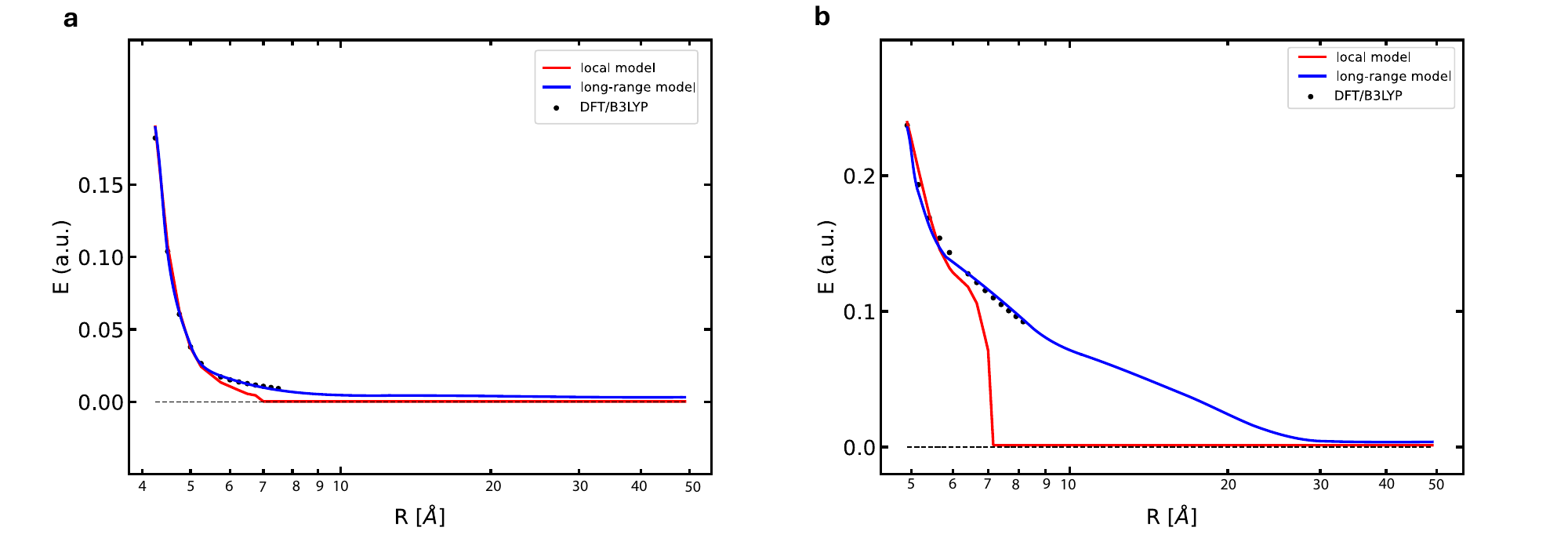}
\end{minipage}
\caption{Visualization of the decay of interaction for electrostatics binding energy dataset. a) The decay curve of one $\text{CH}_3\text{COO}^{-}$ and a 4-Methylimidazole. b) The decay curve of two $\text{CH}_3\text{COO}^{-}$. }
\label{fig:decay}

\end{figure*}

\subsection{Study on Transferability}

In this section, we targeted to examine the transferability or extrapolation capacity of our model to larger unseen molecules. In particular, we performed three experiments to illustrate this:
\begin{itemize}

    \item Zero-Shot Experiment: To study transferability, our first experiment was a zero-shot setup. We trained on molecules including ATAT, Stachyose, DHA, and Ac-Ala3-NhMe, and then tested directly on a larger molecule, ATATCGCG. The zero-shot results are shown in Table \ref{tab:trans}. This experiment revealed that direct transferability without demonstration is challenging for MD22 trajectories.
    \item Few-Shot Learning Experiment: To further explore transferability, we conducted a few-shot learning experiment, as shown in Table \ref{tab:trans}. By adding a small set of 50 ATATCGCG training samples to the original zero-shot training set, our model demonstrated significant improvement over the baseline model. This suggests that with minimal additional training data, our model can adapt to new, larger molecular systems more effectively than traditional local models.
    \item PubChem In our study, we further assessed our model's capabilities using the PubChem dataset, as elaborated in Appendix~\ref{sec:pubchem}.  The dataset features heterogeneous molecules of size ranging from 40 to 100. We recalculated the dataset using t-zvp as the basis set to improve accuracy. Notably, we included molecular force, which remains informative signals given that the molecules were relaxed only through a semi-empirical approach. For dataset division, we used molecules with fewer than 60 atoms (30,545 samples) for training and those with more than 60 atoms (3,455 samples) for testing. Our results in shown in Table \ref{tab:trans}. Compared to the baseline ViSNet model, our model showed enhanced performance on larger molecules, underlining its robust transferability and wide applicability in diverse molecular contexts.
    
\end{itemize}

\begin{table}[ht]
    \centering
    \caption{Transferability Experiments}
    \label{tab:trans}
    \begin{tabular}{@{}clcc@{}} 
        \toprule
        Experiment & Metric & ViSNet & ViSNet-LSRM \\
        \midrule
        \multirow{2}{*}{Zero Shot} & Energy & 184.44 & \textbf{150.23} \\
                                   & Force  &  10.93 &  \textbf{10.21} \\
        \midrule
        \multirow{2}{*}{Few Shot}  & Energy &   2.575 &  \textbf{2.167} \\
                                   & Force  &  0.7448 & \textbf{0.6556} \\
        \midrule
        \multirow{2}{*}{PubChem}   & Energy &   4.458 & \textbf{3.339} \\
                                   & Force  &  0.3303 & \textbf{0.2395} \\
        \bottomrule
    \end{tabular}
\end{table}

\subsection{MD Simulation}
We performed an MD simulation for a relatively large molecule, ATAT, for 20ps, matching the duration of the AT-AT simulation in the MD22 dataset. This was done at a constant energy ensemble (NVE). These simulations were driven by our ViSNet-LSRM and DFT with a time step of $\tau = 1$ fs, allowing us to analyze the vibrational spectra of the AT-AT molecule. As depicted in Fig.~\ref{fig:vel-auto} , both the trajectory in MD22 and the trajectory simulated by ViSNet-LSRM show similar vibrational spectra, albeit with minor differences in peak intensities compared to DFT. This suggests that our simulations can accurately mimic the actual vibrational modes of the molecules over relatively long time periods.

To test the smoothness of ViSNet-LSRM, we ran a longer 200 ps NVE simulation with a time step of $\tau = 1$fs for this molecule. The total energy profile is displayed in Fig.~\ref{fig:smoothness}.  The total energy is conserved within a reasonable range ($+- 0.0001\%$) of fluctuation, validating the capability of the proposed ViSNet-LSRM under long simulations.

\begin{figure*}[ht]
\centering
\begin{minipage}{\textwidth}
\centering
\includegraphics[width=0.8\textwidth]{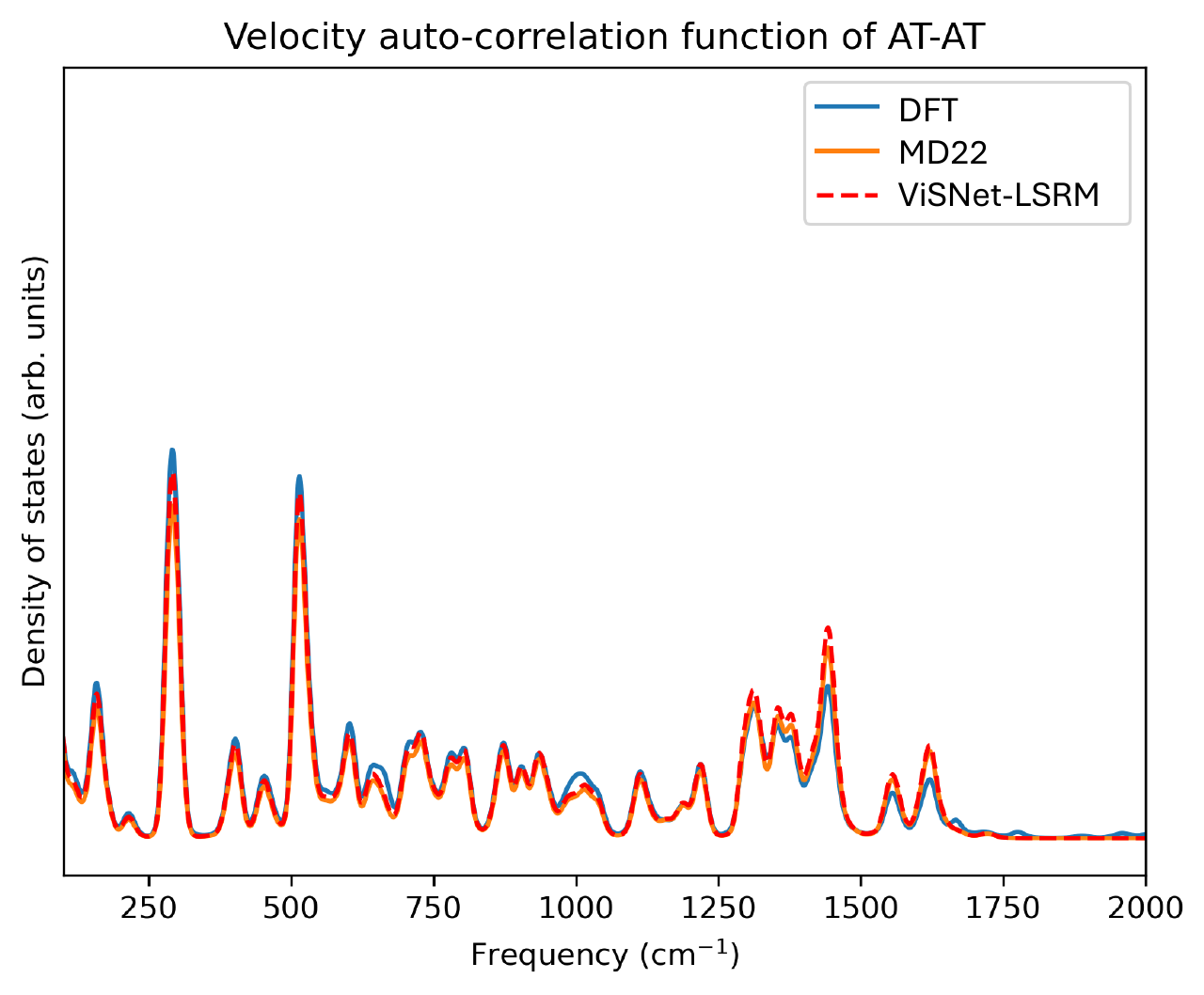}
\end{minipage}
\vspace{-0.15in}
\caption{Velocity autocorrelation function of ATAT.}
\label{fig:vel-auto}

\end{figure*}

\begin{figure*}[ht]
\centering
\begin{minipage}{\textwidth}
\centering
\includegraphics[width=0.8\textwidth]{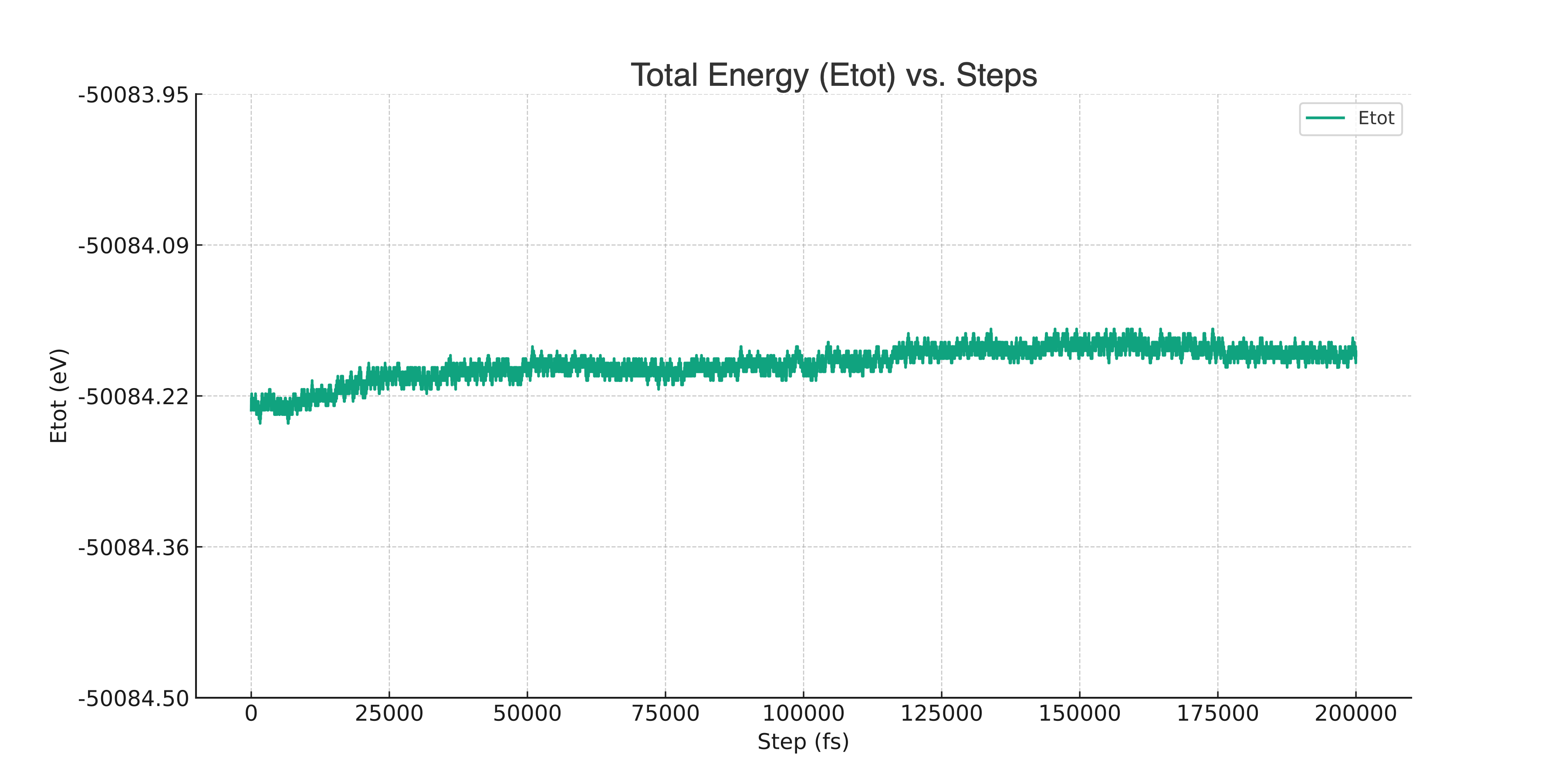}
\end{minipage}
\vspace{-0.15in}
\caption{Visulization of Total Energy. Each step represents 1fs. The fluctuation is within 0.0001\% percentage of the total energy. }
\label{fig:smoothness}

\end{figure*}

\subsection{Examine Curvature of BRICS Prioritized Edges}
\label{app:cur}
To investigate the capacity of BRICS methods to prioritize edges characterized by high negative curvatures, we analyzed the Balanced Forman curvature on two datasets: MD22 and chignolin. In particular, we classified an edge spanning two fragments as a \textit{BRICS-prioritized} edge. Conversely, edges not meeting this criterion were labeled as \textit{Non-BRICS-prioritized} edges. In Figure~\ref{fig:curvature}A, we offer a visualization of the fragmentation outcomes in relation to curvature. Here, the node color scheme represents distinct fragments resulting from BRICS, while the edge coloring reflects the curvature value. This visualization underscores the tendency of BRICS fragmentation to give precedence to edges with negative curvature. To substantiate this observation, we applied the Mann-Whitney U-Test on curvatures across the six systems under scrutiny (See Figure~\ref{fig:curvature}B). Consistently, our findings affirm the propensity of BRICS to prioritize edges with negative curvature.

\begin{figure*}[ht]
\centering
\begin{minipage}{\textwidth}
\includegraphics[width=\textwidth]{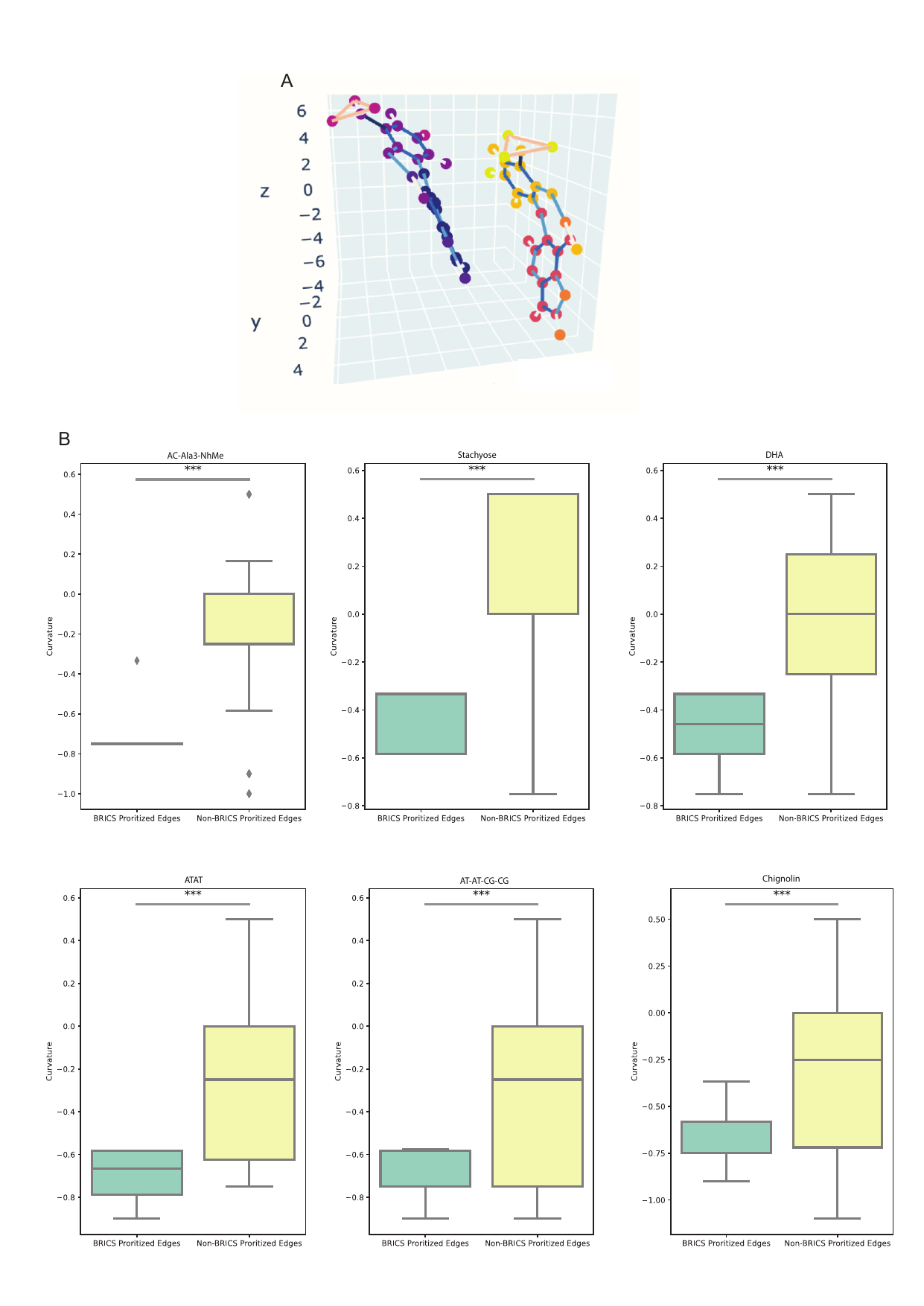}
\end{minipage}
\vspace{-0.15in}
\caption{A. A visualization of BRICS fragmentation and Balanced Forman
curvature. The blue edge indicates negative curvature. The red edge indicates positive curvature. B. Statistical testing of BRICS-prioritized edges and Non-BRICS-prioritized edges. *** means p-value < 0.001 }
\label{fig:curvature}

\end{figure*}

\subsection{Comprehensive Force Evaluation}
We have incorporated additional force evaluation metrics, drawing from the methodologies presented by \cite{wang2023improving}, which encompass:

\textbf{Global Metrics}
\begin{itemize}
\item Mean absolute error (Fmae)
 \item Max absolute error (Fmax-err)
\item Mean normalized error (FNMmae)
 \item Max normalized error (FNMmax-err)
\end{itemize}

\textbf{Element-Based Metrics}:

Fmae, Fmax-err, FNMmae, FNMmax-err for each element type. The results are shown in Figure~\ref{fig:force-eval}

\begin{figure*}[ht]
\centering
\begin{minipage}{\textwidth}
\includegraphics[width=1\textwidth]{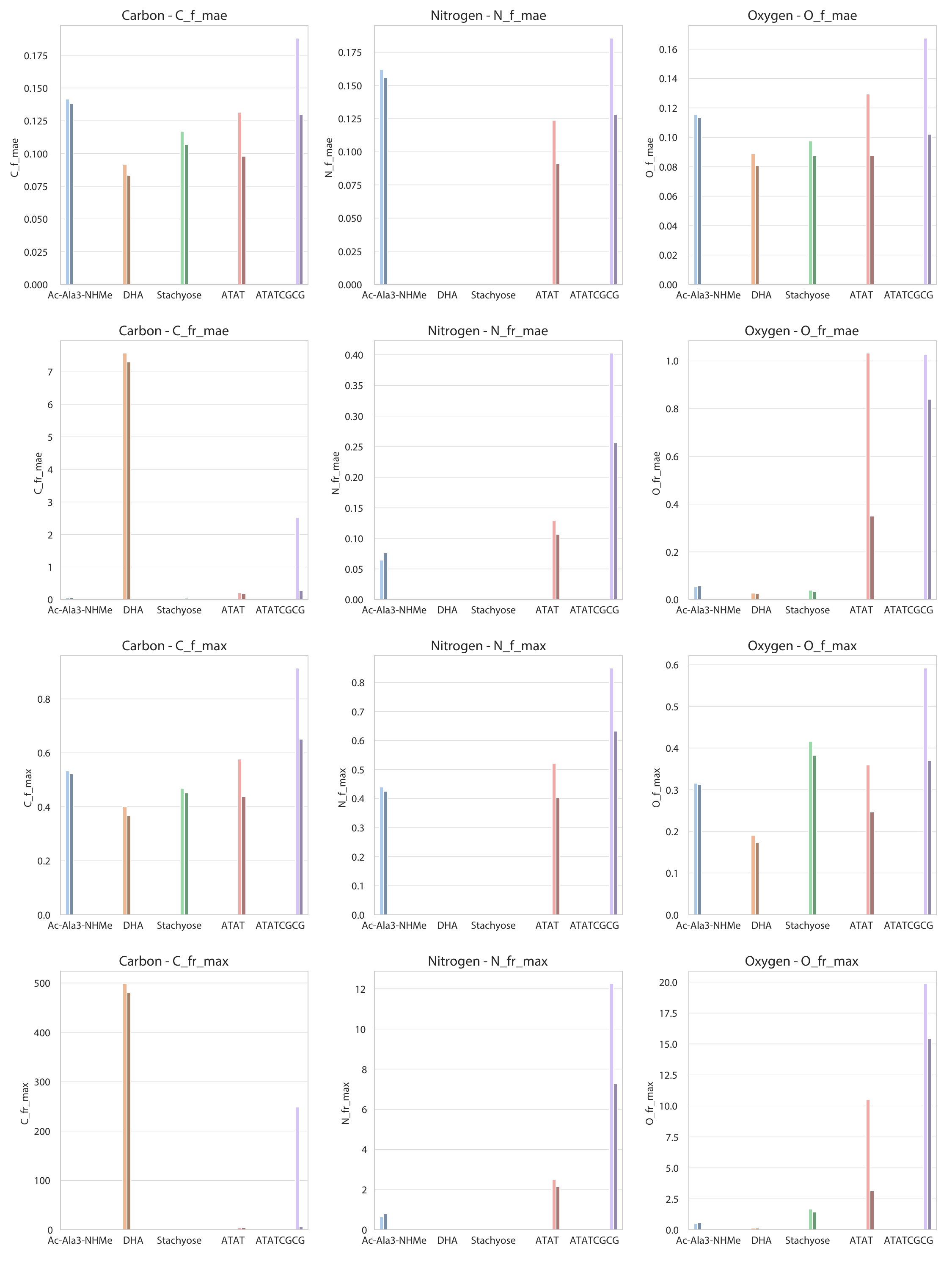}
\end{minipage}
\vspace{-0.15in}
\caption{Element-based force evaluation for ViSNet-LSRM and ViSNet. The dark color indicates ViSNet-LSRM and the light color refers to ViSNet. It could be seen that our models outperform the baseline ViSNet in terms of Mean absolute error (Fmae), Max absolute error (Fmax-err), Mean normalized error (FNMmae), and Max normalized error (FNMmax-err) in most cases. Also, it is worth notifying that ViSNetLSRM has greatly reduced the prediction error on carbon atoms when compared to ViSNet.}
\label{fig:force-eval}

\end{figure*}

\subsection{Comparison of Different Fragmentation Schemes}
\label{app-frag-compare}
The sensitivity of our LSR-MP framework to different fragmentation schemes is a crucial aspect of our research. Fragmentation schemes can be broadly categorized into two groups: knowledge-based fragmentation methods, such as divide-and-conquer (DC) and molecular fractionation with conjugated caps (MFCC), and conventional graph clustering techniques like k-means clustering.

We have conducted a comprehensive analysis of several knowledge-based fragmentation methods, including DC and MFCC, which is shown in Table \ref{tab:frag-compare}. Our findings indicate that the LSR-MP framework is generally robust to various fragmentation schemes. However, some schemes may exhibit superior performance depending on the specific system under investigation.

In addition to knowledge-based methods, we have explored the use of conventional graph clustering techniques by incorporating k-means clustering as an alternative to fragmentation-based methodologies within our LSR-MP framework. Our experiments show that knowledge-based approaches, which draw upon chemical domain expertise, generally outperform k-means clustering methods for most molecules. Nevertheless, the LSR-MP framework, when combined with k-means clustering, consistently surpasses comparable models that do not utilize the LSR-MP approach. Moreover, our results for two supramolecules, employing k-means and distance-based spectral clustering, significantly exceed the performance of equivalent baseline methods.

In conclusion, knowledge-based fragmentation approaches generally outperform k-means clustering methods for the majority of molecules, as k-means is a distance-based clustering method that does not consider chemical properties like atom types and bond types, potentially resulting in chemically insignificant fragments. Furthermore, the LSR-MP framework, when combined with various fragmentation schemes, demonstrates better performance than baseline methods, highlighting the versatility and broad applicability of our method.

\begin{table}[ht]
\centering
\caption{MAE for different Fragmentation schemes on Biomolecules of MD22, the best-performing methods are highlighted in bold.}
\label{tab:frag-compare}
\resizebox{1.0\linewidth}{!}{
\begin{tabular}{@{}llllll@{}}
\toprule
{ \textbf{Molecule}} & { \textbf{Metrics}} & {\textbf{LSRM MFCC}} & {\textbf{LSRM Divide and Conquer}} & {\textbf{LSRM Kmeans}} & {\textbf{LSRM Brics}} \\ \midrule
{ Ac-Ala3-NhMe}      & { Energy}           & {\textbf{0.0637}}     & {0.0824}                            & {0.0662}                & {0.0654}               \\
{ }                  & { Force}            & {\textbf{0.0928}}     & {0.1064}                            & {0.0956}                & {0.0942}               \\ \midrule
{ DHA}               & { Energy}           & {\textbf{0.0815}}     & {0.1374}                            & {0.0966}                & {0.0873}               \\
{ }                  & { Force}            & {\textbf{0.0562}}     & {0.0742}                            & {0.0620}                & {0.0598}               \\ \midrule
{ Stachyose}         & { Energy}           & {0.1295}              & {0.1259}                            & {0.1199}                & {\textbf{0.1055}}     \\
{ }                  & { Force}            & {0.1016}              & {0.0904}                            & {0.0821}                & {\textbf{0.0767}}      \\ \midrule
{ AT-AT}             & { Energy}           & {\textbf{0.0772}}     & {0.1081}                            & {0.1033}                & {\textbf{0.0772}}      \\
{ }                  & { Force}            & {0.0790}              & {0.0929}                            & {0.0911}                & {\textbf{0.0781}}      \\ \midrule
{ AT-AT-CG-CG}       & { Energy}           & {\textbf{0.1135}}     & {0.1438}                            & {0.1446}                & {\textbf{0.1135}}      \\
{ }                  & { Force}            & {\textbf{0.1064}}     & {0.1421}                            & {0.1476}                & {\textbf{0.1064}}      \\ \bottomrule
\end{tabular}}
\end{table}

\subsection{Attention Weight Analysis}
We analyzed the attention coefficients obtained in our model to establish the connection between model predictions and interpretability. 
In particular, we extracted the attention weights in the long-range modules to study the atom-fragment interactions in AT-AT-CG-CG. For each attention head, we visualized the atom-fragment interactions with the largest attention weights, as shown in Figure \ref{fig:att}.  The long-range modules attend to some short-range interactions such as hydrogen bondings (N-NH2) as well as long-range interactions (C-C5H3ON4, C-C5H3N4). This is compatible with the physical intuition that  hydrogen bonds are essential components in the nucleic acids base-pairing system. This also suggests a two-fold contribution of long-range models: (1) they explicitly characterize the long-range interactions and (2) they partly restore  the information lost in short-range message-passing. 
We further studied the attention weights averaged over atoms and fragments, which are shown in Figure \ref{fig:att2} (a,b). We find that the interactions between N-CH3 pair have the smallest attention coefficients, and this can be interpreted as the polarity difference between the nitrogen atom and the methyl group.  We then visualized the two atom-fragment interactions with the largest weights in Figure \ref{fig:att2} (c,d). Figure \ref{fig:att2} (c) suggests the interactions between nitrogenous bases and distant oxygen atoms, counterintuitively, play significant roles in model predictions, which further corroborates our model's capacity to capture interactions beyond local environments.

\begin{figure}[!htbp]
\vskip 0.1in
\centering
\centerline{\includegraphics[width= \textwidth]{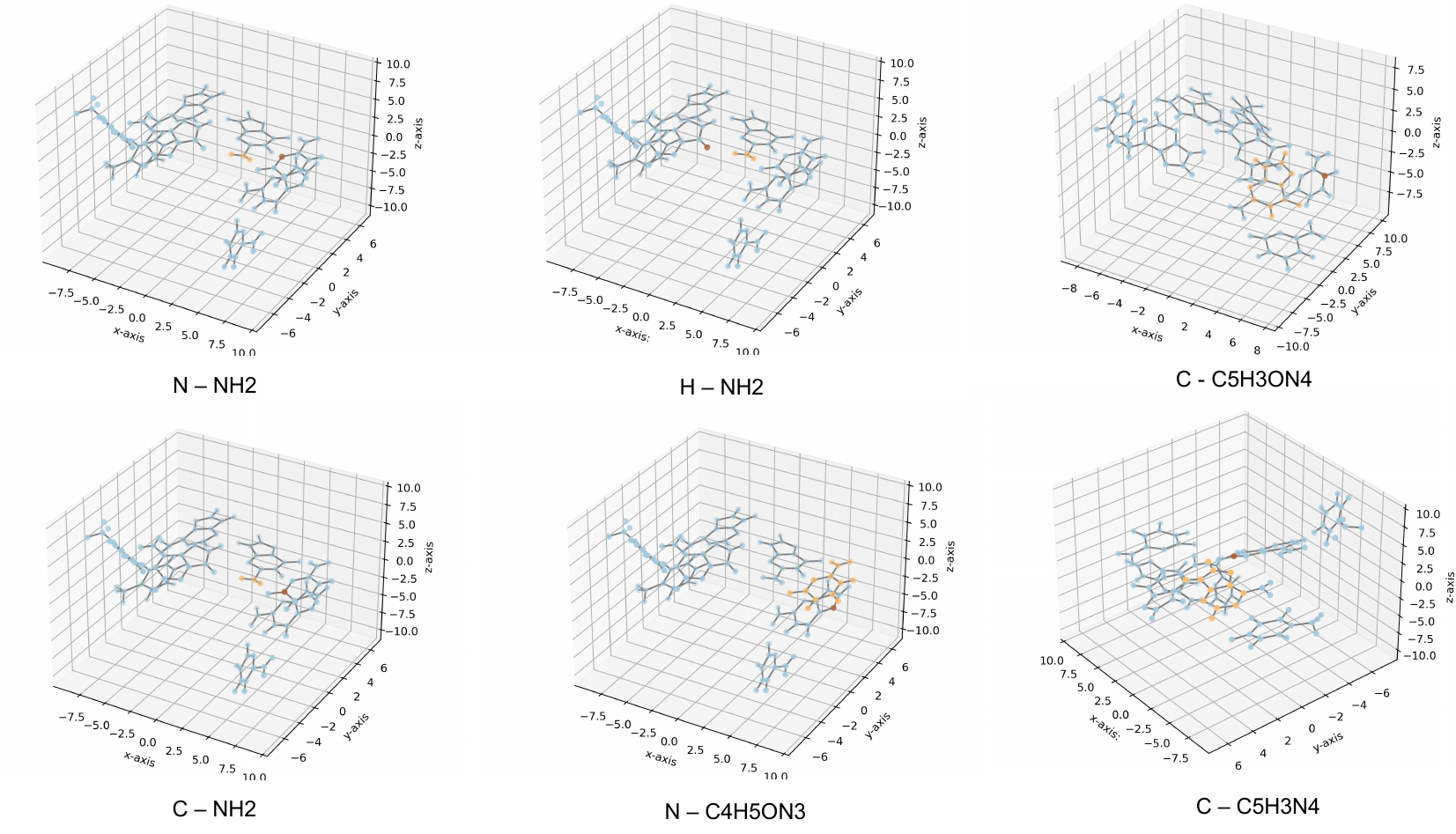}}
\caption{Visulization of atom-fragment interactions with large attention coefficients in AT-AT-CG-CG. Central atoms are denoted in red, and the corresponding fragments are shown in yellow. }
\label{fig:att}
\vskip -0.1in
\end{figure}

\begin{figure}[ht]
\vskip 0.1in
\centering
\centerline{\includegraphics[width= \textwidth]{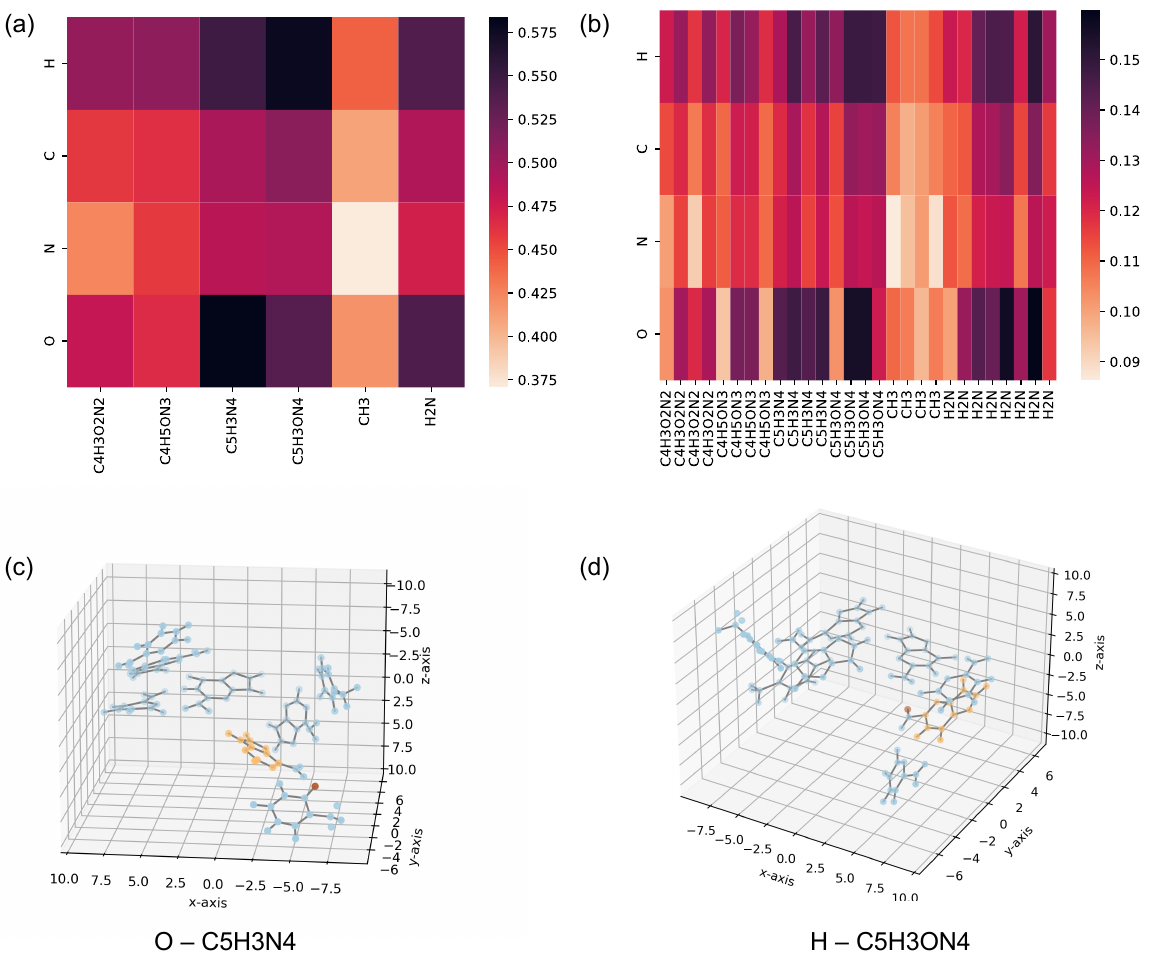}}
\caption{Visulization of attention coefficients in AT-AT-CG-CG. (a) attention coefficients are averaged by atoms and fragments. (b) attention coefficients are only averaged by atoms. (c, d) visualization of the two most salient interactions suggested by (a).}
\label{fig:att2}
\vskip -0.1in
\end{figure}
\subsection{Applicability to Standard Graph Clustering Method}
\label{app:graph-cluster}
For two supramolecules in MD22, \textit{Buckyball catcher} and \textit{Double-walled nanotube}, their conformations do not fall into the chemical prototypes specified in BRICS, leading to a failure in fragmentation. To address this issue, we employ K-Means clustering for \textit{Buckyball catcher} and distance-based spectral clustering for \textit{Double-walled nanotube}.
As demonstrated in Table~\ref{results-table-2}, ViSNet-LSRM achieves competitive performance compared to other EGNN methods for these two supramolecules. This result indicates that our framework is compatible with standard graph clustering methods, which could make the method more universally applicable. Nevertheless, the development of a general fragmentation algorithm for such supramolecules warrants further investigation. In addition, we have conducted a comparative analysis of fragmentation schemes focusing on the BRICS method and standard graph clustering applied to biomolecules in MD22. The results are included in Appendix \ref{app-frag-compare}.

\begin{table*}[ht]
\vspace{-3mm}
\caption{Mean absolute errors (MAE) of energy (kcal/mol) and force (kcal/mol/$\angstrom$) for two supramolecules on MD22 compared with state-of-the-art models. The best one in each category is highlighted in bold.}
\begin{threeparttable}
{\small
\label{results-table-2}
\resizebox{\linewidth}{!}{
\begin{tabular}{lclcccccccc}
\toprule
Molecule                                & \# atoms             &        & sGDML           & PaiNN   & TorchMD-NET   & Allegro       & Equiformer      & ViSNet & ViSNet-LSRM     \\ \midrule
\multirow{2}{*}{Buckyball catcher}      & \multirow{2}{*}{148} & energy & 1.1962          & 0.4563  & 0.5188        & 0.5258        & \textbf{0.3978}          & 0.4421 & 0.4220 \\
                                        &                      & forces & 0.6820          & 0.4098  & 0.3318	    & \textbf{0.0887}        & 0.1114          & 0.1335 & 0.1026 \\ \midrule
\multirow{2}{*}{Double-walled nanotube} & \multirow{2}{*}{370} & energy & 4.0122          & 1.1124  & 1.4732        & 2.2097        & 1.1945          & \textbf{1.0339} & 1.8230 \\
                                        &                      & forces & 0.5231 & 0.9168  & 1.0031        & 0.3428       & \textbf{0.2747}          & 0.3959 & 0.3391 \\ \bottomrule
\end{tabular}}
}
\end{threeparttable}
\end{table*}
\subsection{Ablation study of the long-range cutoff}

As shown in Table.~\ref{ablation-long-range}, the performance of ViSNet-LSRM improves when the long-range cutoff changes from 6 to 9 and fluctuates slightly as it continues to increase. This is likely because all relevant fragments have already been included within 9$\angstrom$, and further increase does not introduce extra information. 
In addition, a large long-range cutoff does not significantly increase the computational cost or lead to the information over-squashing, since the number of fragments is small.
When dealing with larger molecules, increasing the long-range cutoff may be useful and still efficient.
When compared with original ViSNet with 5 layers, ViSNet-LSRM with 3 layers has similar inference time and better performance. 
In conclusion, all studies suggest that our LSR-MP framework is extremely efficient and effective for modeling the long-range interactions rather than either deepening the model or increasing the short-range cutoff.

\begin{table}[!htbp]

\caption{Study of long-range cutoff in LSR-MP framework on \textit{AT-AT-CG-CG} in MD22 dataset. The best results are shown in bold.}
\centering
\vspace{.1in}
\label{ablation-long-range}
\begin{threeparttable}
\resizebox{0.6\linewidth}{!}{
\begin{tabular}{cccc}
\toprule
Long-range cutoff& Energy MAE & Forces MAE & Inference time\\
($\angstrom$) & (kcal/mol) & (kcal/mol/$\angstrom$) & (ms) \\
\midrule
6                 & 0.1234    & 0.1100     & \textbf{12.02} \\
9                 & 0.1135     & \textbf{0.1064}     & 12.14 \\
12                & \textbf{0.1117}     & 0.1074     & 12.26 \\
15                & 0.1166     & 0.1116     & 12.26 \\
\bottomrule
\end{tabular}}
\end{threeparttable}
\vspace{-.1in}
\end{table}

\subsection{Impact of Fragment Size}
In Figure~\ref{fig:fragsize}, we investigate the relationship between the Average Fragment Size on AT-AT-CG-CG and two evaluation metrics: Force MAE and Energy MAE. As the fragment size increase, both metrics exhibit a first decrease and then increase trajectory. Notably, both the Force MAE and Energy MAE share a similar trend, emphasizing the significance of fragment size in influencing these outcomes. This observation underscores the significance of choosing an optimal fragment size.

\begin{figure}[!htbp]
\vskip 0.1in
\centering
\centerline{\includegraphics[width= \textwidth]{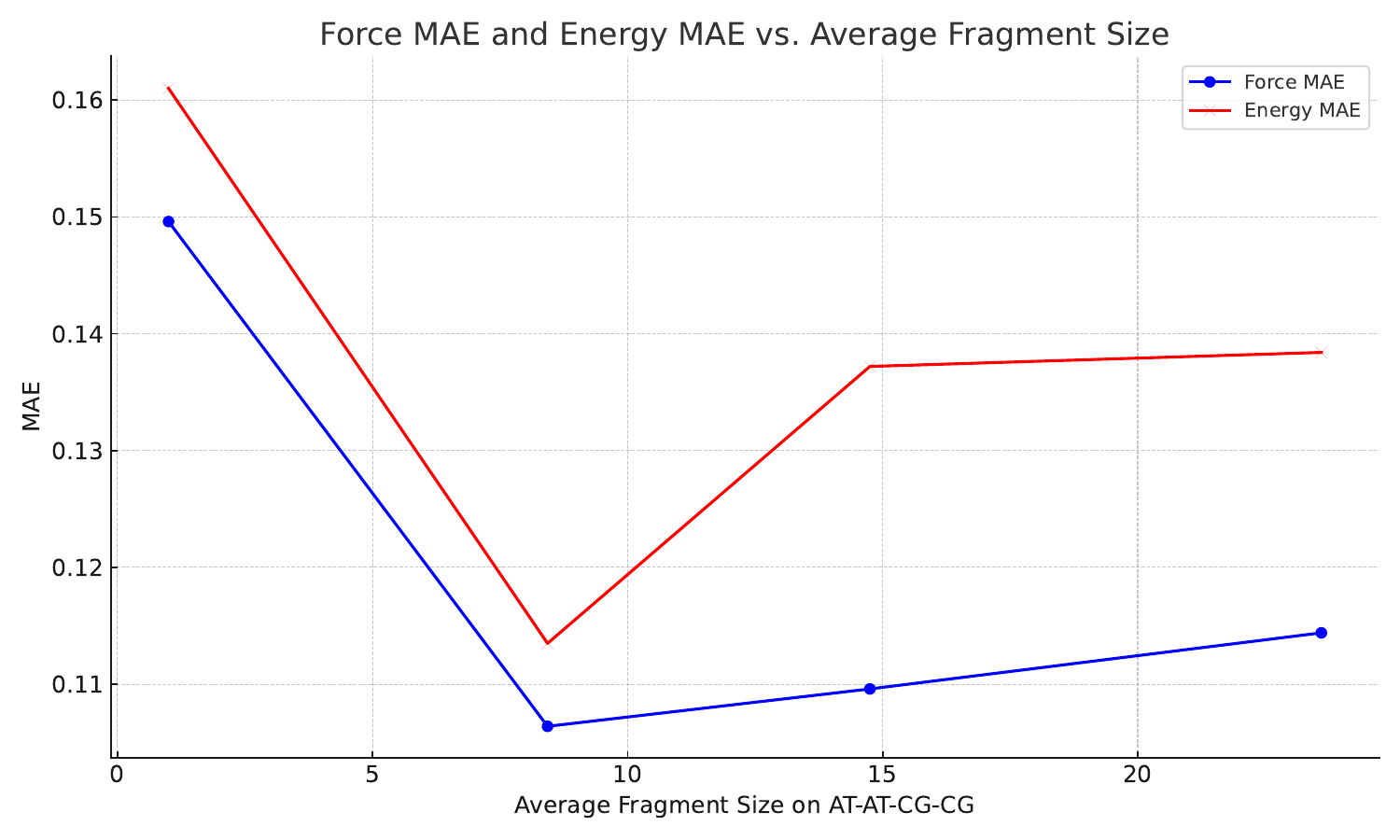}}
\caption{Visulization of  the relationship between the Average Fragment Size on AT-AT-CG-CG and  Model Errors }
\label{fig:fragsize}
\vskip -0.1in
\end{figure}

\section{Geometric Optimization}
\subsection{Geometric Optimization Acceleration with Hybrid Models}
Geometric optimization, often referred to as energy minimization or molecular mechanics optimization, is a computational technique used in molecular modeling. The primary goal of geometric optimization is to find the stable or "lowest-energy" structure of a molecule. This is accomplished by iteratively adjusting the atomic positions to minimize the potential energy of the system, usually using gradient descent or related algorithms. Once the molecule reaches a state where the force on each atom is essentially zero, the molecule is said to be in its optimized geometry or at a local energy minimum. In our evaluation, we randomly sampled 5 initial configurations from the MD22 AT-AT-CG-CG molecule's test set. Using m06-2x DFT, we defined a `reference geometry' through geometric optimization. To assess the neural potential model, we adopted a two-stage approach: initially leveraging the neural potential for optimization, and then refining with DFT until convergence. Our key metrics are the DFT iteration counts required for refinement and the root-mean-square deviation (RMSD) between the neural potential-converged geometry and the reference. Fewer DFT iterations suggest the model's practical utility in GO acceleration. Concurrently, the RMSD provides a direct measure of the model's reliability in mirroring DFT PES. As depicted in Figure~\ref{fig:GO}, ViSNet-LSRM, in comparison to ViSNet, more accurately replicates the DFT-converged geometry. Moreover, ViSNet-LSRM substantially diminishes the number of DFT steps required in hybrid models when compared to ViSNet, achieving acceleration rates of up to 50\%.

\begin{figure}[htbp]
\vspace{-1mm}
\begin{center}
\includegraphics[width=0.9\columnwidth]{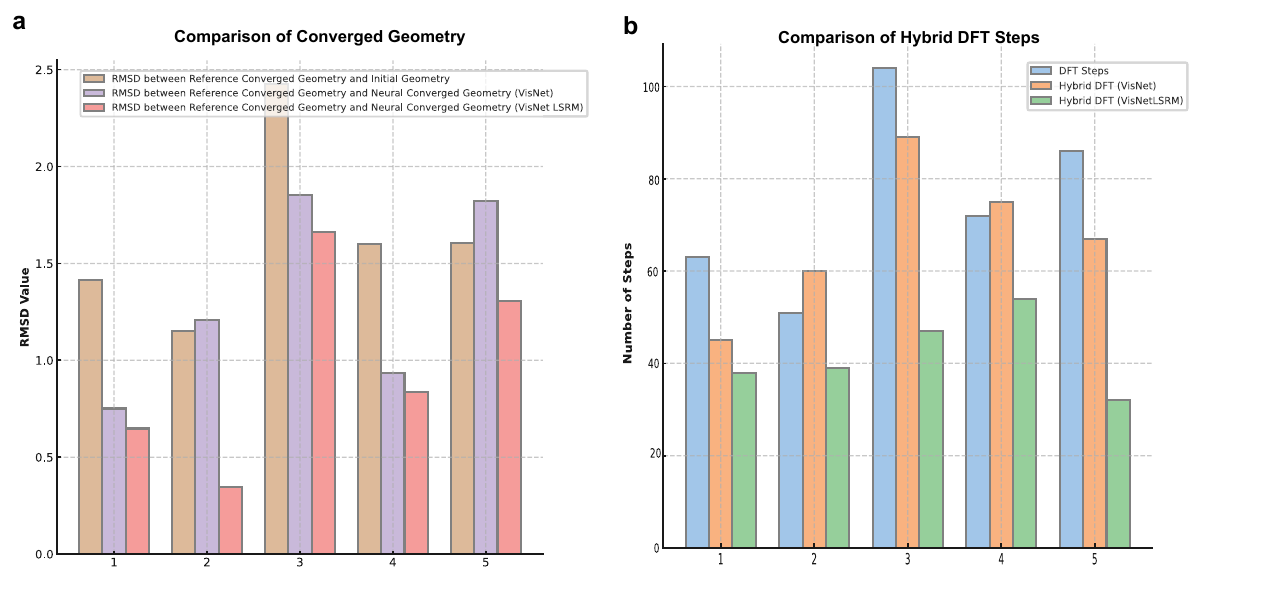}
\caption{Geometric Optimization. x-axis represents five different initial configurations. (a) Comparison of Neural Optimized Geometry when juxtaposed with the DFT optimized Geometry (b) Comparison of the number of DFT steps required in Hybrid Models.}
\label{fig:GO}
\end{center}
\vspace{-5mm}
\end{figure} 

Geometric Optimization (GO) and Molecular Dynamics (MD) are pivotal computational methodologies designed to probe molecular systems and kinetics. While GO primarily scouts the energy landscape for minimal energy configurations, MD provides insights into the temporal changes of molecular structures, accommodating specific thermodynamic ensembles like NVT and NPT. The core objective of both methodologies is to elucidate the dynamic attributes of molecular systems, positioning them as critical tools for understanding kinetics in computational studies.

Our study leveraged Density Functional Theory (DFT) not as a supervised learning component but as an evaluation measure for optimized geometry. We utilized force fields trained on the MD22 dataset, inferring kinetics from potential energy surface gradients. The GO process integrated approximately 100 trajectories, all distinct from the training set, emphasizing the task's relevance in showcasing the force field's generalization and kinetic representation.

Considering five structures in our analysis, each GO trajectory required about 20 minutes on a 24-core CPU, culminating in a week of computation on a 128-core CPU using the PySCF software for DFT tasks. Although computationally intensive, we deem this effort critical for robust insights. We also ensured that the initial configurations for GO were excluded from the Machine Learning Force Field (MLFF) training to avoid overlaps.

In essence, GO operates as a time-series mechanism, transitioning from a specific conformation $s_1$ to $s_t$
  until convergence. Importantly, each step in this process is based on the conformational position of the previous step and the corresponding forces. This is fundamentally similar to the working principle of MD (Molecular Dynamics), whereas MD run under specific thermodynamic ensemble conditions, such as NVT (constant temperature and volume) or NPT (constant temperature and pressure). For better understanding, we enclose a pseudocode for GO as follows:

\begin{algorithm}
\caption{Geometric Optimization with Conjugate Gradient}
\KwData{$molecule$, $forceField$, $energyTolerance$, $gradientTolerance$, $maxGradientTolerance$, $maxIterations$}
\KwResult{$currentStructure$, $trajectoryList$}
\Begin{
    $\text{trajectoryList} \gets \text{Initialize}$\;
    $\text{currentStructure} \gets molecule.\text{getInitialStructure()}$\;
    $\text{currentEnergy} \gets forceField.\text{computeEnergy(currentStructure)}$\;
    $\text{currentGradient} \gets forceField.\text{computeGradient(currentStructure)}$\;
    $\text{searchDirection} \gets -\text{currentGradient}$\;
    $\text{iteration} \gets 0$\;
    $\text{converged} \gets \text{False}$\;
    \While{not converged and iteration $<$ maxIterations}{
        $\alpha \gets \text{LineSearch(currentStructure, searchDirection, forceField)}$\;
        $\text{newStructure} \gets \text{currentStructure} + \alpha \times \text{searchDirection}$\;
        $\text{newEnergy} \gets forceField.\text{computeEnergy(newStructure)}$\;
        $\text{newGradient} \gets forceField.\text{computeGradient(newStructure)}$\;
        $\text{gradientRMS} \gets \sqrt{\text{mean(newGradient}^2)}$\;
        $\text{gradientMax} \gets \text{max(abs(newGradient))}$\;
        $\beta \gets (\text{newGradient} \cdot \text{newGradient}) / (\text{currentGradient} \cdot \text{currentGradient})$\;
        $\text{searchDirection} \gets -\text{newGradient} + \beta \times \text{searchDirection}$\;
        $\text{energyDifference} \gets \text{abs(newEnergy - currentEnergy)}$\;
        \If{$(\text{energyDifference} < \text{energyTolerance})$ or $(\text{gradientRMS} < \text{gradientTolerance})$ or $(\text{gradientMax} < \text{maxGradientTolerance})$}{
            $\text{converged} \gets \text{True}$\;
        }
        \Else{
            $\text{currentStructure} \gets \text{newStructure}$\;
            $\text{currentEnergy} \gets \text{newEnergy}$\;
            $\text{currentGradient} \gets \text{newGradient}$\;
            $\text{trajectoryList.append(currentStructure)}$\;
            $\text{iteration} += 1$\;
        }
    }
    \Return $\text{currentStructure, trajectoryList}$\;
}
\end{algorithm}

For each initial structure (not included in the training set), we commence by performing \texttt{GeometricOptimizationCG} until convergence using DFT FF/ViSNet FF/ViSNet-LSRM FF. Notice that ViSNet FF and ViSNet LSRM FF were trained on MD22 DFT labels.

We extract the last item in the \texttt{trajectoryList}, which we refer to as the converged geometry.
We proceed by comparing the rmsd of the DFT Converged Geometry and initial geometry, ViSNet Converged Geometry and the DFT Converged Geometry, ViSNet-LSRM Converged Geometry and the DFT Converged Geometry.

\section{BRICS and Modification of BRICS}

\label{app-brics}
The BRICS method is a fragmentation technique designed to identify local chemical environments indicated by link atoms of different types. By breaking active/weak bonds, a series of small, active fragments is produced. BRICS takes the chemical environment of each bond type and the surrounding substructures into consideration, resulting in fragment assignments that are more in line with chemistry and reducing the energy loss caused by bond breaking. However, this method can produce too-small fragments, even just one or two atoms. To address this issue, a minimum fragment size and maximum fragment size are set, and any fragment smaller than the minimum is merged with the smallest neighboring fragment if their sum is less than the maximum. This greatly reduces the number of small fragments. A visual representation of BRICS fragmentation results is shown in Figure \ref{fig-brics}.

 The Breaking of Retrosynthetically Interesting Chemical Substructures (BRICS) method is one of the most widely employed strategies in the communities of quantum chemistry, chemical retrosynthesis, and drug discovery. We summarize the key points of BRICS as follows:
\begin{enumerate}
    \item A compound is first dissected into multiple substructures at predefined 16 types of bonds that are selected by organic chemists. In addition, BRICS also takes into account the chemical environment near the bonds, e.g. the types of atoms, to make sure that the size of each fragment is reasonable and the characteristics of the compounds are  kept as much as possible.
    \item BRICS method then applies substructure filters to remove extremely small fragments (for example a single atoms), duplicate fragments, and fragments with overlaps.
    \item Finally, BRICS concludes the fragmentation procedure by adding supplementary atoms (mostly hydrogen atoms) to the fragments at the bond-breaking points and makes them chemically stable. 
\end{enumerate}

\begin{figure*}[htbp]
\vskip 0.1in
\begin{center}
\centerline{\includegraphics[width=1 \columnwidth]{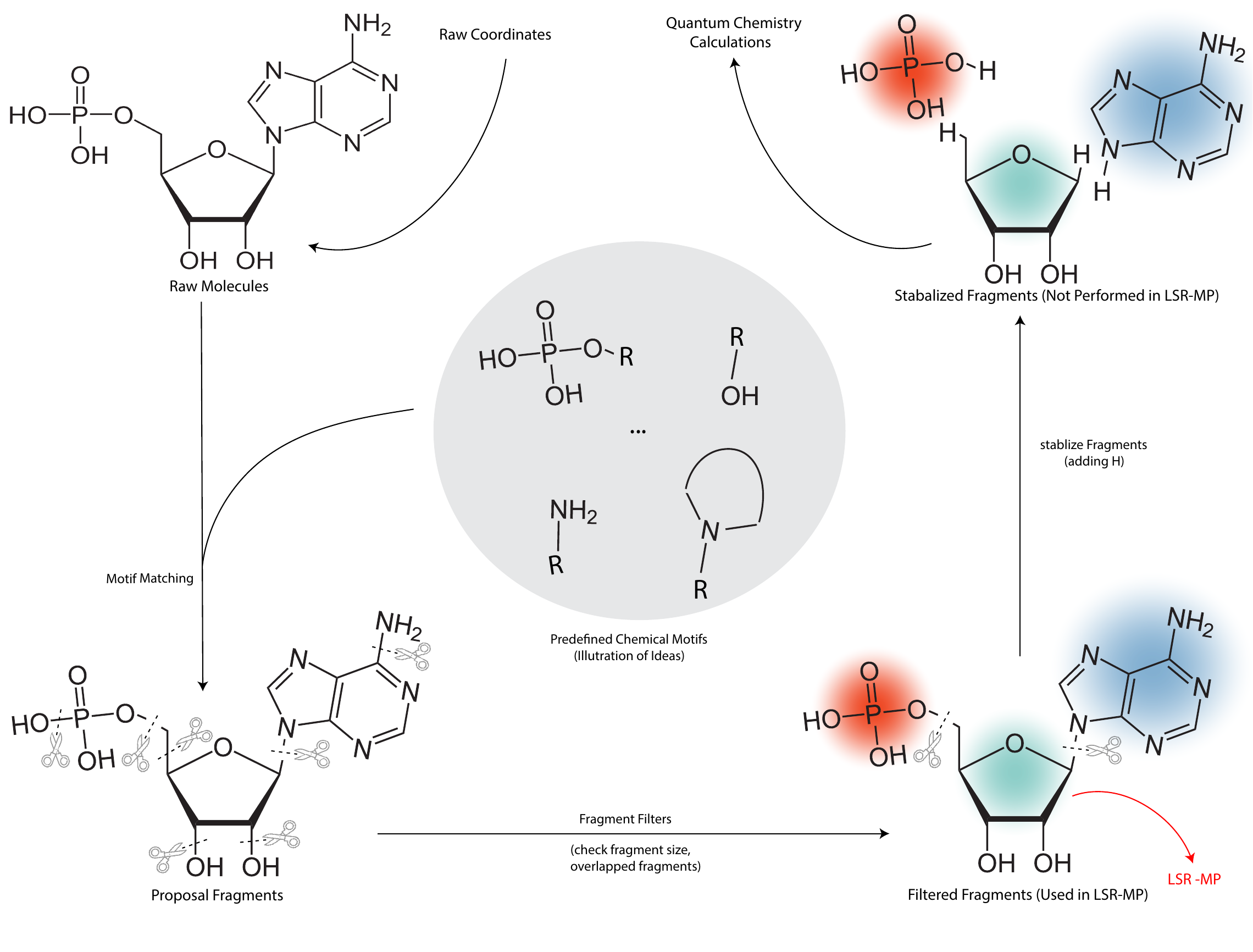}}
\vspace{-0.12in}
\caption{Illustration of the BRICS algorithm and its integration with LSR-MP. }
\label{icml-historical-brics}
\end{center}
\vskip -0.1in
\end{figure*}

\begin{algorithm}

\DontPrintSemicolon
\SetAlgoLined

\SetKwFunction{FMain}{BRICS\_Algorithm}
\SetKwFunction{FBonds}{Find\_Bonds}
\SetKwFunction{FApplyFilters}{Apply\_Substructure\_Filters}
\SetKwFunction{FStabilizeFragments}{Stabilize\_Fragments}
\SetKwFunction{FBreakBond}{Break\_Bond}

\SetKwInOut{Input}{Input}
\SetKwInOut{Output}{Output}

\caption{BRICS Algorithm for Fragmentation}
\label{alg-brics}
\Input{Molecule}
\Output{Set of final fragments}

\FMain{molecule}{
    \Indp\;
    bonds\_to\_break $\leftarrow$ \FBonds{molecule}\;
    fragments $\leftarrow$ \FBreakBond{predefined\_bonds}\;
    filtered\_fragments $\leftarrow$ \FApplyFilters{fragments}\;
    stabilized\_fragments $\leftarrow$ \FStabilizeFragments{filtered\_fragments}\;
    \Return{stabilized\_fragments}\;
    \Indm
}\;

\FBonds{molecule}{
    \Indp\;
    \KwData{molecule}
    \KwResult{List of bonds to break}
    bonds\_to\_break $\leftarrow$ empty list\;
    \For{each bond in the molecule}{
        \If{the bond and its chemical environment match one of the 16 predefined bond types}{
            Add it to the list of bonds to break\;
        }
    }
    \Return{bonds\_to\_break}\;
    \Indm
}\;

\FBreakBond{bonds\_to\_break}{
    \Indp\;
    \KwData{bonds\_to\_break}
    \KwResult{fragments}
    fragments $\leftarrow$ empty list\;
    \For{each bond in the bonds\_to\_break}{break the bond and add resulting fragment to the fragment list
    }
    \Return{fragments}\;
    \Indm
}\;

\FApplyFilters{fragments}{
    \Indp\;
    \KwData{list of fragments}
    \KwResult{Filtered list of fragments}
    filtered\_fragments $\leftarrow$ empty list\;
    \For{each fragment in fragments}{
        \If{fragment size is reasonable and not a duplicate or overlapping with other fragments}{
            Add it to the list of filtered fragments\;
        }
    }
    \Return{filtered\_fragments}\;
    \Indm
}\;

\FStabilizeFragments{fragments}{
    \Indp\;
    \KwData{list of fragments}
    \KwResult{List of stabilized fragments}
    stabilized\_fragments $\leftarrow$ empty list\;
    \For{each fragment in filtered\_fragments}{
        Add supplementary atoms (e.g., hydrogen atoms) to make the fragment chemically stable\;
        Add the stabilized fragment to the list of stabilized fragments\;
    }
    \Return{stabilized\_fragments}\;
    \Indm
}\;

\end{algorithm}

Additionally, to enhance the method's accessibility to a broader audience, we provide a pseudo-code for BRICS in Algorithm~\ref{alg-brics}. For a deeper understanding of the method, we encourage readers to consult the original BRICS paper for further details.
\begin{figure}[ht]
\vskip 0.1in
\centering

\centerline{\includegraphics[width= \textwidth]{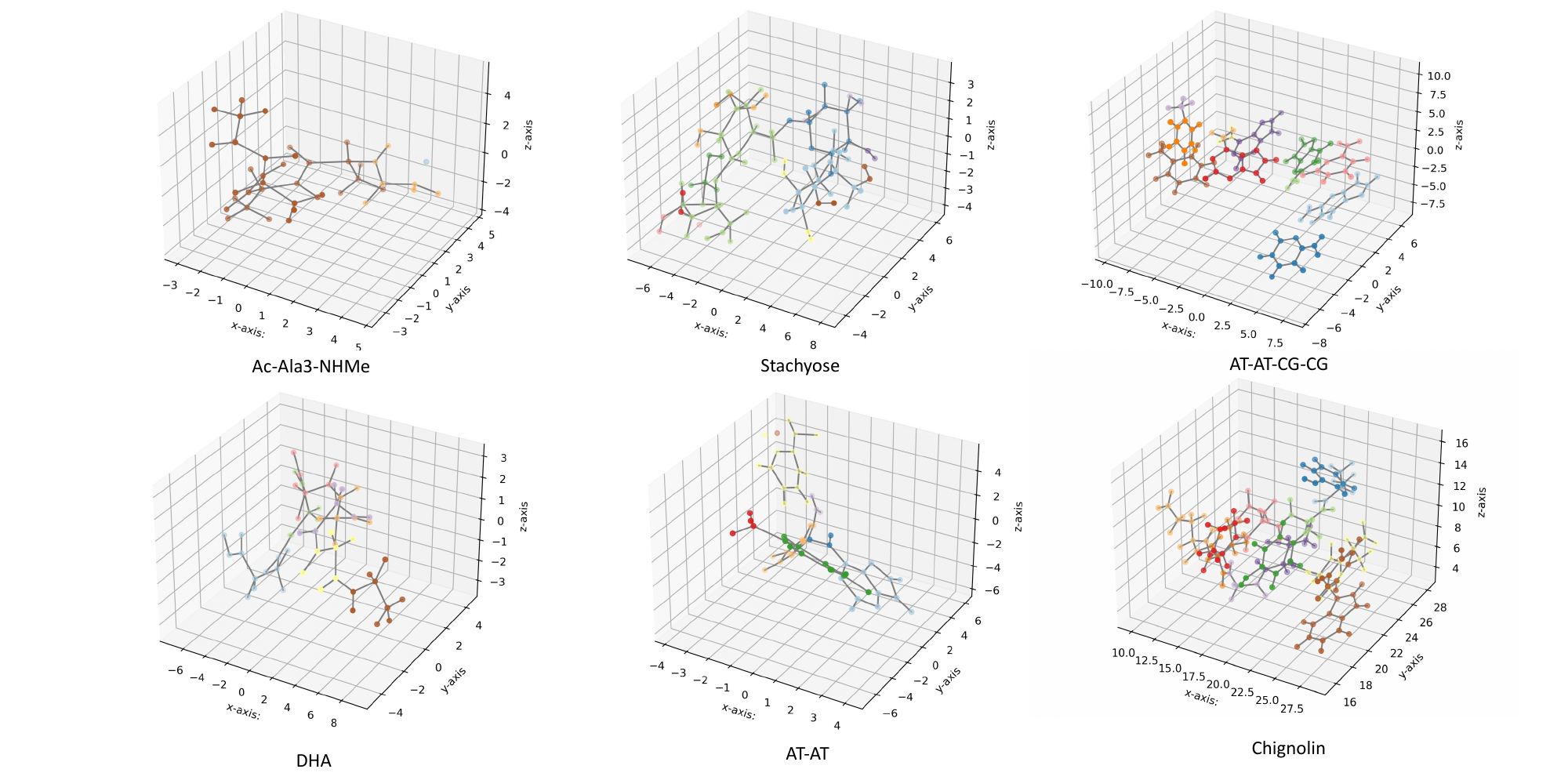}}
\caption{Visulization of BRICS fragmentation results}
\label{fig-brics}
\vskip -0.1in
\end{figure}

\section{Long-Range Module Visualization}
\label{sec:long-range-rep}
On top of the derivations in the main text, we provide a visualization of the long-range module which is implemented in ViSNet-LSRM, which is shown in Figure \ref{fig:long-range}.
\begin{figure}[ht]
\vskip 0.1in
\centering
\centerline{\includegraphics[width= \textwidth,trim=25 125 80 135,clip]{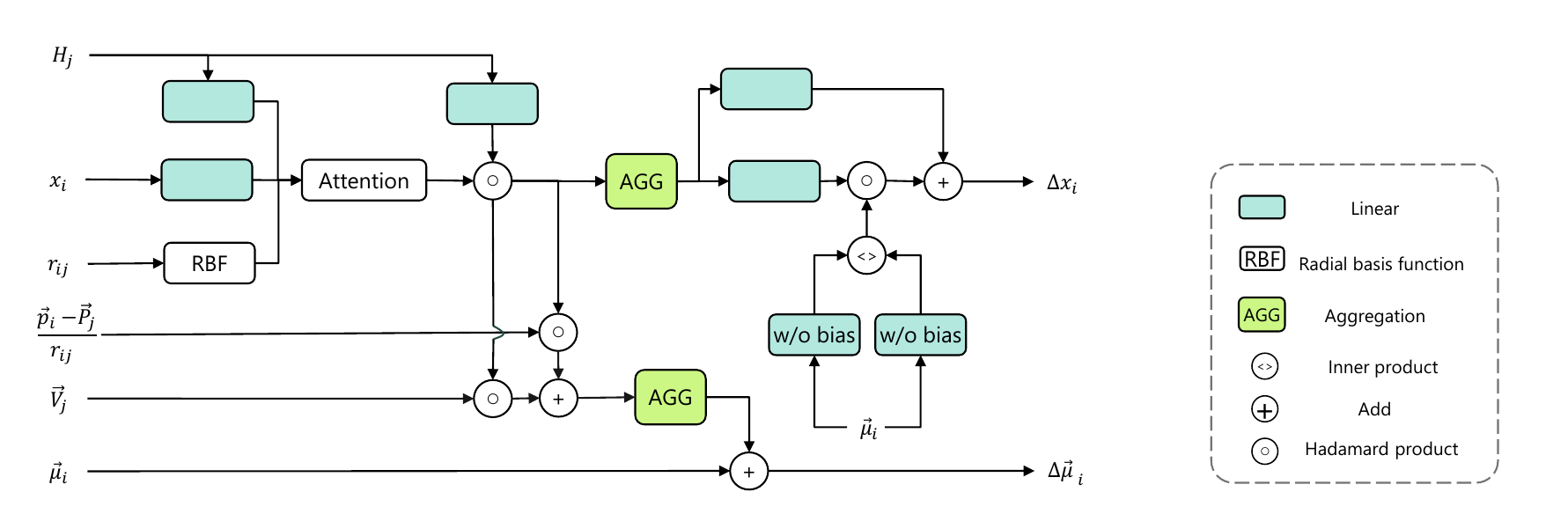}}
\caption{Structure of long-range module.}
\label{fig:long-range}
\vskip -0.1in
\end{figure}

\section{Proof of Equivariance}
\begin{myDef}
\text{(Rotation Invariance).} $f: \mathcal{X} \rightarrow \mathcal{Y}$ is rotation-invariant if $\forall R \in SO(3), X \in \mathcal{X}, f(XR) = f(X)$.
\end{myDef}

\begin{myDef}
\text{(Rotation Equivariance).} $f: \mathcal{X} \rightarrow \mathcal{Y}$ is rotation-equivariant if $\forall R \in SO(3), X \in \mathcal{X}, T \in \mathcal{T}, f(XR) = T(f(X))$.
\end{myDef}

\begin{proposition}
    The rotation invariance of the LSR-MP is preserved if the rotation invariance of short-range message-passing and long-range bipartite message-passing is preserved. 
\end{proposition}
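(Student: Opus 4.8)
The plan is to regard the LSR-MP network as the composition of its four constituent maps---short-range module, fragmentation module, long-range bipartite module, and read-out---and to show that the pair of properties ``scalar embeddings invariant, vectorial embeddings equivariant'' propagates through each map, so that the scalar energy produced at the end is invariant. Following the convention of the definitions above, a rotation $R\in SO(3)$ acts on the coordinates by $\vec{p}\mapsto\vec{p}R$ and leaves the atomic numbers $Z$ untouched; the fact I would invoke repeatedly is that Euclidean distances, and hence $\text{dist}(\cdot,\cdot)$, are invariant under this action. I read the hypothesis ``rotation invariance of short-range message-passing is preserved'' as the statement that $h_i(\vec{p}R)=h_i(\vec{p})$ and $\vec{v}_i(\vec{p}R)=\vec{v}_i(\vec{p})R$, with the analogous reading for the long-range bipartite pass: fed invariant scalar inputs and equivariant vector/position inputs on a fixed edge set, it returns invariant $x^{L_\text{long}}$ and equivariant $\vec{\mu}^{L_\text{long}}$.

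The central link---and the step I expect to be the main obstacle, since it is \emph{not} among the hypotheses and so must be checked directly---is the fragmentation module (\cref{eq:FLfunction1}), which must carry the invariance guarantee of the short-range pass into the input hypotheses of the long-range pass. Three things need verification. First, the assignment $S=\textsc{BRICS}(Z,\vec{p})$ is rotation-invariant: BRICS cleaves strategic bonds identified from chemical motifs and the molecular connectivity, which are functions of $Z$ and of interatomic distances, all unchanged under $\vec{p}\mapsto\vec{p}R$; hence each $S(j)$ is fixed. Second, the per-atom weights $\alpha_i^l,\beta_i^l,\gamma_i$ are invariant---this holds for the constant choice $\mathbf{1}^d$, for distance- or atomic-number-based weights, and, crucially, for a learned $\textsc{NN}(\cdot)$ that is assumed SO(3)-invariant. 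Third, it then follows that $H_j^l=\sum_{i\in S(j)}\alpha_i^l\odot\mathbf{h}_i^l$ is an invariant-weighted sum of invariant scalars, hence invariant; $\vec{V}_j^l=\sum_{i\in S(j)}\beta_i^l\odot\vec{\mathbf{v}}_i^l$ is an invariant-weighted sum of equivariant vectors, so $\vec{V}_j^l(\vec{p}R)=\vec{V}_j^l(\vec{p})R$; and $\vec{P}_j=\sum_{i\in S(j)}\gamma_i\vec{p}_i$ is likewise equivariant.

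The subtlety worth emphasizing is that equivariance of $\vec{P}_j$ is exactly what keeps the downstream quantities well-behaved for the long-range pass: $\text{dist}(\vec{p}_i,\vec{P}_j)$ remains invariant, so the bipartite edge set $\mathcal{E}_\text{long}$ is unchanged under rotation, and the relative directions $(\vec{p}_i-\vec{P}_j)/\lVert\vec{p}_i-\vec{P}_j\rVert$ remain equivariant. With this bridge in place, the long-range inputs---invariant $x^0_i=\textsc{Dense}(h_i^{L_\text{short}})$, equivariant $\vec{\mu}^0_i=U(\vec{v}_i^{L_\text{short}})$, invariant $H_j$, equivariant $\vec{V}_j,\vec{P}_j$, and the rotation-stable edge set---satisfy the hypotheses of the long-range bipartite pass, which therefore returns invariant $x^{L_\text{long}}$ and equivariant $\vec{\mu}^{L_\text{long}}$.

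Finally I would close with the read-out, which is routine: $h_\text{out}=\textsc{Dense}([h^{L_\text{short}},x^{L_\text{long}}])$ concatenates two invariant vectors and applies a fixed linear-plus-activation map, hence is invariant, and the predicted scalar energy, being a graph-level sum $\sum_i$ of these invariant per-atom contributions, is invariant. The proposition then follows by composing these implications from input to output. The only genuinely new content beyond bookkeeping is the fragmentation step, so I would present that in full and treat the two message-passing passes purely through their assumed symmetry properties.
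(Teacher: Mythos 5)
Your proof is correct and follows essentially the same route as the paper's: decompose LSR-MP into the short-range, fragmentation, long-range, and read-out maps, and propagate invariance of scalars (and equivariance of vectors) through each stage to conclude that $h_{\text{out}}$ is unchanged. In fact your treatment of the fragmentation module --- explicitly checking that the BRICS assignment, the weights $\alpha_i,\beta_i,\gamma_i$, and the bipartite edge set $\mathcal{E}_{\text{long}}$ are all rotation-stable, and that $\vec{V}_j,\vec{P}_j$ transform equivariantly so the long-range module's hypothesis actually applies --- is more careful than the paper's own proof, which only verifies $H_j'=H_j$ and passes over these points.
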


\begin{proof}
     The goal is to show that if a rotation transformation $R \in SO(3)$ is applied to $\vec{p}$, the output $h_\text{out}$ remains unchanged.

     First, consider the short-range message-passing, which is rotationally invariant. This implies (Equation 1) that when a rotation transformation $R$ is applied to $\vec{p}$, the output $h^{\prime}$ remains the same as the original output $h$:

     \begin{equation}
        h^{\prime} = h = \textsc{ShortRangeModule}(Z, R\Vec{p}).
     \end{equation}
    
     The fragmentation learning module combines the resulting embeddings linearly. The output $H_j^{\prime}$ remains unchanged after applying the rotation transformation, as shown by the equality between $H_j^{\prime}$ and $H_j$:

      \begin{equation}
              H_j^{\prime} = \sum_{i \in S(j)} \alpha_i \odot h_i^{\prime} =  \sum_{i \in S(j)} \alpha_i \odot h_i = H_j.
          \label{eq:short-app}
      \end{equation} 
      
      Next, consider the long-range bipartite message-passing, which is also rotationally invariant. When a rotation transformation $R$ is applied to $\vec{\mu}$, $\vec{p}$, $\vec{V}$, and $\vec{P}$, the output $x$ remains unchanged:

      \begin{equation}
          x = \textsc{LongRangeModule}(h, H, x, R\vec{\mu}, R\vec{p}, R\vec{V},  R\vec{P}).
          \label{eq:long-app}
      \end{equation}

      By combining Eq \ref{eq:short-app} and Eq \ref{eq:long-app}, the transformed long-range scalar embeddings can be expressed as ${x}^{\prime} = \mu$:

      \begin{equation}
          {x}^{\prime} = \mu.
      \end{equation}

The final output $h_{\text{out}}^{\prime}$ remains unchanged after applying the rotation transformation to the input features. The Dense function combines the outputs from the short-range and long-range message-passing modules:

   \begin{equation}
 h_{\text{out}}^{\prime} = \textsc{Dense}\left(\left[h^{\prime}, x^{\prime}\right]\right) = \textsc{Dense}\left(\left[h, x\right]\right) =  h_{\text{out}}.
\end{equation}
\end{proof}

\begin{proposition}
The rotation equivariance of the LSR-MP is preserved if the rotation equivariance of short-range message-passing and long-range bipartite message-passing is preserved.
\end{proposition}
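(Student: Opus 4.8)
The plan is to mirror the structure of the preceding invariance proposition, but to track the \emph{vectorial} embeddings rather than the scalar ones: I would show that applying a rotation $R\in SO(3)$ to the atomic positions, $\vec{p}\mapsto R\vec{p}$, causes every vectorial quantity produced by the pipeline to pick up a factor of $R$ while every scalar quantity remains unchanged, so that the fused output obeys $\vec{v}_{\text{out}}^{\prime}=R\vec{v}_{\text{out}}$ (with $h_{\text{out}}^{\prime}=h_{\text{out}}$ already established). First I would invoke the assumed equivariance of the short-range module: under $\vec{p}\mapsto R\vec{p}$ its scalar output is invariant, $h^{\prime}=h$, and its vectorial output is equivariant, $\vec{v}^{\prime}=R\vec{v}$. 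The same then holds for the long-range initializations, since $x^0=\textsc{Dense}(h^{L_\text{short}})$ inherits invariance and $\vec{\mu}^0=U(\vec{v}^{L_\text{short}})$ inherits equivariance because $U$ is bias-free and acts only on channels.

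Next I would propagate these transformation laws through the fragmentation module. Because the per-atom weights $\alpha_i,\beta_i$ and the position weights $\gamma_i$ are produced by $SO(3)$-invariant maps (or are constants such as $\mathbf{1}^d$), they are unaffected by $R$. The scalar fragment embedding $H_j=\sum_{i\in S(j)}\alpha_i\odot \mathbf{h}_i$ is therefore invariant, exactly as in the invariance proof. For the vectorial fragment embedding and the fragment position I would use the fact that $R$ acts only on the spatial $3$-dimension whereas the Hadamard weighting acts on the channel dimension, so the two operations commute:
\begin{equation}
\vec{V}_j^{\prime}=\sum_{i\in S(j)}\beta_i\odot(R\vec{\mathbf{v}}_i)=R\sum_{i\in S(j)}\beta_i\odot\vec{\mathbf{v}}_i=R\vec{V}_j,\qquad \vec{P}_j^{\prime}=\sum_{i\in S(j)}\gamma_i(R\vec{p}_i)=R\vec{P}_j.
\end{equation}
Thus the fragments inherit invariant scalars and equivariant vectors and positions.

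With these laws established for every input to the long-range module — invariant $h,H,x$ and equivariant $\vec{v},\vec{\mu},\vec{V},\vec{p},\vec{P}$ — I would invoke the assumed equivariance of the long-range bipartite message-passing to conclude $x^{\prime}=x$ and $\vec{\mu}^{\prime}=R\vec{\mu}$. Finally, the fusion layer gives $\vec{v}_{\text{out}}=U([\vec{v}^{L_\text{short}},\vec{\mu}^{L_\text{long}}])$; since both arguments carry the same factor $R$, their concatenation along channels does as well, and because $U$ is linear without bias and acts on the channel dimension it commutes with the spatial rotation, yielding $\vec{v}_{\text{out}}^{\prime}=R\vec{v}_{\text{out}}$. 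Together with $h_{\text{out}}^{\prime}=h_{\text{out}}$, this is the claim.

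The main obstacle is not the high-level composition, which is a direct chaining of the two equivariance hypotheses, but the careful bookkeeping of \emph{how} $R$ acts on the mixed scalar–vector tensors. The delicate point to justify is that every channel-wise operation in the ``glue'' between the assumed-equivariant modules — the Hadamard products $\alpha_i\odot\cdot$ and $\beta_i\odot\cdot$, the channel concatenation, and the bias-free map $U$ — commutes with the spatial $SO(3)$ action, and that any bias term would break equivariance for vectorial features (which is precisely why $U$ is chosen bias-free). Once this commutation is verified, the result follows by the same compositional argument as the invariance proposition.
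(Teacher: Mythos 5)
Your proof is correct and follows essentially the same compositional route as the paper's: assume the short-range module outputs transform as $h'=h$, $\vec{v}\,'=R\vec{v}$, push the rotation through the linear fragment aggregation to get $\vec{V}_j'=R\vec{V}_j$ and $\vec{P}_j'=R\vec{P}_j$, invoke the assumed equivariance of the long-range bipartite module to get $\vec{\mu}\,'=R\vec{\mu}$, and conclude $\vec{v}_{\text{out}}'=R\vec{v}_{\text{out}}$ through the bias-free fusion map $U$. Your version is in fact slightly more careful than the paper's in spelling out why the channel-wise Hadamard weights commute with the spatial action of $R$ and why $U$ must be bias-free, but these are refinements of the same argument rather than a different approach.
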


\begin{proof}
This proposition is equivalent to: If a rotation transformation $R \in SO(3)$ is applied to $\vec{p}$, one could find a predictable transformation $T \in \mathcal{T}$ to the vectorial embeddings $\vec{v}_\text{out}$.

Since short-range message-passing is rotationally equivariant, we can obtain the transformed short-range vectorial embeddings $\vec{v}^{\prime}$ by applying the rotation transformation $R$ to the input positions $\vec{p}$ and then passing them through the ShortRangeModule:

\begin{equation}
\vec{v}^{\prime} = T^{\text{short}}(\vec{v}) = \textsc{ShortRangeModule}(Z, R\Vec{p}).
\end{equation}

In the context of vectorial embeddings, the transformed short-range vectorial embeddings $\vec{v}^{\prime}$ are equal to the rotation transformation $R$ applied to the original vectorial embeddings $\vec{v}$:

\begin{equation}
\vec{v}^{\prime} = R\vec{v} = \textsc{ShortRangeModule}(Z, R\Vec{p}).
\end{equation}

The fragmentation learning module is composed of linear combinations of the resulting embeddings. This yields the transformed fragmentation learning module embeddings $\vec{V}_j^{\prime}$ and $\vec{P}_j^{\prime}$, which can be obtained by applying the rotation transformation $R$ to the original embeddings $\vec{V}_j$ and $\vec{P}_j$:

          \begin{align}
          \label{eq:37}
              \vec{V}_j^{\prime} &= \sum_{i \in S(j)} \beta_i \odot \vec{v}_i^{\prime} =  \sum_{i \in S(j)} \beta_i \odot R\vec{v}_i = R\sum_{i \in S(j)} \beta_i \odot \vec{v}_i = R\vec{V}_j \\
              \vec{P}^{\prime}_j &= \sum_{i \in S(j)} \gamma_i  \vec{p}_i^{\prime} =  \sum_{i \in S(j)} \gamma_i R\vec{p}_i = R\sum_{i \in S(j)} \gamma_i \vec{v}_i = R\vec{P}_j .
          \end{align}

As the long-range bipartite message-passing is also equivariant, this indicates that the transformed long-range vectorial embeddings could be obtained via $R\vec{\mu}$:

\begin{equation}
R\vec{\mu} = \textsc{LongRangeModule}(h, H, x, R\vec{\mu}, R\vec{p}, R\vec{V},  R\vec{P}).
\label{eq:38}
\end{equation}

Combing Eq. \ref{eq:37} and Eq. \ref{eq:38}, we have:

\begin{equation}
\mu^{\prime} = R\mu = \textsc{LongRangeModule}(h, H, x, R\mu, R\vec{p}, R\vec{V},  R\vec{P}).
\end{equation}

This equation shows that the transformed long-range vectorial embeddings $\mu^{\prime}$ are equal to the rotation transformation $R$ applied to the original vectorial embeddings $\mu$.

Finally, the output vectorial embeddings are given by:

\begin{equation}
\vec{v}_{\text{out}}^{\prime} = U(R[\vec{v}, \vec{\mu}]) = R\vec{v}_{\text{out}}.
\end{equation}

This equation demonstrates that the transformed output vectorial embeddings $\vec{v}_{\text{out}}^{\prime}$ can be obtained by applying the rotation transformation $R$ to the concatenated original vectorial embeddings $\vec{v}$ and $\vec{\mu}$, and then passing them through the linear combination function $U$. The result is equal to the rotation transformation $R$ applied to the original output vectorial embeddings $\vec{v}_{\text{out}}$.

In summary, the proof shows that if the rotation equivariance of short-range message-passing and long-range bipartite message-passing is preserved, the rotation equivariance of the LSR-MP is also preserved.
\end{proof}

\section{Loss Function}
In our study, we used a combination of mean squared error (MSE) loss functions for energy and force to train our models. Specifically, we minimized the weighted sum of MSE between the predicted and true energy values and force values during training. The weight for energy was set to $1-\rho$, while the weight for force was set to $\rho$. 
\begin{equation}
L = (1-\rho)\frac{1}{n} \sum_{i=1}^{n}(E_i - \hat{E_i})^2 + \frac{\rho}{3n} \sum_{i=1}^{n}\lVert\mathbf{F}_i - \hat{\mathbf{F}_i}\rVert^2
\end{equation}

where $n$ is the number of samples, $E_i$ is the true energy value for sample $i$, $\hat{E_i}$ is the predicted energy value for sample $i$, $\mathbf{F}_i$ is the true force value for sample $i$, and $\hat{\mathbf{F}_i}$ is the predicted force value for sample $i$. The trade-off parameter $\rho$ controls the relative importance of minimizing energy versus force errors during training. The $\lVert\cdot\rVert$ symbol represents the $L2$ norm, which calculates the Euclidean distance between the true and predicted force vectors, providing a measure of the overall force prediction error.

\section{Hyperparameter setting}
\subsection{Hyperparameter in Table \ref{table:main}}
\label{app-baslines}

For baseline methods that provided an official implementation (Allegro, PaiNN, ET), we used their code directly, while for methods that are not open source (ViSNet), we reimplemented them based on their papers. All models were trained using NVIDIA A6000 GPUs. For training on MD22, we used kcal/mol as the default unit for regression. For training on Chignolin, we used eV. 
\begin{table}[htbp]
\centering
\caption{Comparison of Hyperparameters used for MD22 and Chignolin. NA indicates the hyperparameter is not used.}
\label{tab:hyper}
\resizebox{\textwidth}{!}{
\begin{tabular}{lccccccc}
\hline
                     & ViSNet-LSRM & Equiformer LSRM           & PaiNN    & ET       & ViSNet   & Allegro     & Equiformer \\ \hline
batch size           & 8           & 8           & 32       & 32       & 8        & \{5, 3, 1\} & 8 \\
$l_{max}$            & 1           & 2          & 1        & 1        & 1        & 3           & 2 \\
hidden channels      & 128         & 128($l$=0), 64($l$=1), 32($l$=2)           & 128      & 128      & 128      & 128        & 128($l$=0), 64($l$=1), 32($l$=2) \\
learning rate        & 5.00E-04    & 5.00E-04           & 1.00E-03 & 1.00E-03 & 5.00E-04 & 2.00E-03    & 5.00E-04\\

(short-range) layers & 4, 6        & 4             & 6        & 6        & 6        & 3           & 6 \\
long-range layers    & 2           & 2          & NA       & NA       & NA       & NA          & NA \\
Warm up steps        & 1000        & 1000           & 1000     & 1000     & 1000     & 0           & 1000 \\
early stop patience  & 500         & 500           & 500      & 500      & 500      & 1000        & 500 \\
lr scheduler         & Cosine Annealing & Cosine Annealing     & ReduceLROnPlateau         & ReduceLROnPlateau         & Cosine Annealing & ReduceLROnPlateau    & Cosine Annealing \\
min lr               & 1.00E-07    & 1.00E-07           & 1.00E-07 & 1.00E-07 & 1.00E-07 & 1.00E-06    & 1.00E-07\\
energy/force weight  & 1 / 80      & 1/80           & 1 / 99   & 1 / 99   & 1 / 80   & 1 / 1000    & 1 / 80\\
(short) cutoff       & 4           & 4           & 4        & 4        & 4        & \{7, 4\}    & 4 \\
long cutoff          & 9           & 9           & NA       & NA       & NA       & NA           & NA\\
\hline
\end{tabular}
}
\end{table}

\subsection{Hyperparameter in Table~\ref{table: number-of-parameters}}
\label{app:detail-table}
\begin{table}[htbp]
\vspace{-3mm}
\centering
\caption{Comparison of the number of parameters and training speed of various methods when forces MAE is comparable on the molecule AT-AT-CG-CG.}
\label{table: number-of-parameters-detail}
\resizebox{\linewidth}{!}{
\begin{tabular}{lccccccc}
\toprule
Methods (MAE)        & ViSNet (0.16) & ViSNet-LSRM (0.13)     & PaiNN (0.35) & ET (0.29) & Allegro (0.13) & Equiformer (0.13)\\ \midrule
\# of Parameters & 2.21M & \textbf{1.70M} & 3.20M & 3.46M & 15.11M & 3.02M\\
Layer Number & 8 & 4+2 & 8 & 8 & 3 & 4 \\
hidden channels & 128 & 64($l=0$), 48 ($l=1$) & 128 & 128 & 128 & 128($l=0$), 64 ($l=1$), 32($l=0$) \\
Training Time / Epoch (s) & 44 & \textbf{19} & \textbf{19} & 26  & 818 & 155 \\ \bottomrule
\end{tabular}
}
\vspace{-2mm}
\end{table}

\section{Higher-order LSRM}
\label{higher-order}
In this section, we detail the application of LSR-MP to another series of EGNNs that employ the Clebsh-Gorden (CG) tensor product in its core architecture. In particular, we chose the state-of-the-art Equiformer as the short-range model. For simplicity, we would ignore parity in the ensuing discussion.

\subsection{Preliminary}

A group representation characterizes how group elements, such as rotations and translations, act on a vector space. In the 3D Euclidean group $E(3)$ context, we consider scalars and Euclidean vectors in $\mathbb{R}^3$. Scalars remain unchanged under rotation, while Euclidean vectors transform accordingly. To address translation symmetry, we work with relative positions.

    \textbf{Irreducible representations (irreps)} are the minimal building blocks of group representations. They consist of transformation matrices acting on specific vector spaces. For each $g\in SO(3)$ group, representing 3D rotations, we have irreps matrices $D_L(g) \in R^{{(2L+1)} \times {(2L+1)}} $, known as Wigner-D matrices. These matrices act on vector spaces of dimension $(2L + 1)$, where $L$ is a non-negative integer. Vectors transformed by $D_L(g)$ are type-$L$ vectors. In EGNNs, each of these vectors could serve as a hidden neuron, and neurons with $L \geq 1$ are normally referred to as vector neurons. We can concatenate multiple type-$L$ vectors to construct $SE(3)$-equivariant irreps features.   For example, scalar features $h \in \mathbb{R}^{1 \times d_0}$ in the main text are composed of type-$0$ vectors, with 1 order, and $d_0$ channels. Vectorial features $\vec{v} \in\mathbb{R}^{3 \times d_1}$  are composed of type-$1$ vectors, with $3$ distinct orders, and $d_1$ channels. In general, type-$L$ features $f_{L} \in \mathbb{R}^{(2L + 1) \times d_L}$ have $2L + 1$ distinct orders and $d_L$ channels.
    The irreducible representation could be written as a data structure that aggregates irreps of different types:

    \begin{equation}
        f = \{f_0 \in \mathbb{R}^{1\times d_0}, f_1 \in \mathbb{R}^{3\times d_1}, \cdots, f_L \in \mathbb{R}^{(2L+1)\times d_L}\}.
    \end{equation}

   \textbf{Spherical harmonics (SH)} are functions capable of projecting Euclidean vectors in $\vec{r} \in \mathbb{R}^3$ into type-$L$ vectors:
   \begin{equation}
       SH(\cdot): \vec{r} \in \mathbb{R}^{3} \rightarrow f_L\in \mathbb{R}^{2L+1} = Y^L(\frac{\vec{r}}{||\vec{r}||}),
   \end{equation}
which exhibit $E(3)$-equivariance, preserving the group structure during vector transformations:
    \begin{equation}
        \forall g \in SO(3), Y^L(\frac{D^L(g)\vec{r}}{||D^L(g)\vec{r}||}) = D^L(g)Y^L(\frac{\vec{r}}{||\vec{r}||}).
    \end{equation}
  By employing SH of relative positions between node-$i$ and node-$j$, we could generate the initial set of irreps features: 
  \begin{equation}
      (f_L)^0_{ij} = Y^L(\frac{\vec{p}_{ij}}{||\vec{p}_{ij}||}).
  \end{equation}
  
  Equivariant information propagates through irreps features via operations such as tensor-product-based directional message passing.

  \textbf{Clebsh-Gorden Tensor Product (CG product)} is used form interactions between two different vectors of type-$L_1$ and type-$L_2$, and output a vector of type -$L_3$:
\begin{equation}
    f_{L_3}= f_{L_1}  \otimes  f_{L_2},
\end{equation}
where $\otimes$ denotes CG product. 
The tensor product utilizes Clebsch-Gordan coefficients, to combine vectors of different types and different orders:
\begin{equation}
    f_{(L_3, m_3)} = (f_{L_1}  \otimes  f_{L_2})_{m_3} = \sum_{m_1 = - L_1}^{L_1}\sum_{m_2 = - L_2}^{L_2}C^{(L_3, m_3)}_{(L_1, m_1), (L_2, m_2)} f_{(L_1, m_1)} f_{(L_2, m_2)},
\end{equation}

where $m$ indexes to the order of a type-$L$ tensor, and $C^{(L_3, m_3)}_{(L_1, m_1), (L_2, m_2)}$ is the Clebsch-Gordan coefficient.
CG coefficient are non-zeros only when:
\begin{equation}
   |L_1 - L_2|\leq L_3 \leq |L_1 + L_2|,
\end{equation}
which restricts the output type of the operations. For example, type-$1$ vectors and type-$1$ vectors could not produce type-$3$ vectors,  since the output types are restricted in $\{0,1, 2\}$. 

 \textbf{Parameterized CG product}: The tensor product could be formulated as a parameterized operation where a learnable parameter $W^{L_3}_{L_1, L_2}$ is assigned to each combination of $(L_1, L_2, L_3)$. Each type-$L$ features are consist of $d$ channels of type-$L$ vectors, thus the learnable parameters could be generalized to the following forms:
 \begin{equation}
     (L_1, d_1, L_2, d_2, L_3, d_3) \leftrightarrow\ W^{(L_3, d_3)}_{(L_1,d_1), (L_2,d_2)},
 \end{equation}
 where $d_1$, $d_2$ and $d_3$ corresponds to the channel index of the input $f_1$, $f_2$ and the output $f_3$. This leads to the parameterized CG product which could be written as:
 \begin{equation}
     f_{L_3}= f_{L_1}  \otimes^{W}  f_{L_2}.
 \end{equation}

\subsection{Notation}
\textbf{Notations:}
To distinguish short-range and long-range embeddings, we denote short-range embeddings as $h$, with $h$ being:
    \begin{equation}
        h = \{(h_0) \in \mathbb{R}^{1\times d_0}, (h_1) \in \mathbb{R}^{3\times d_1}, \cdots, (h_L) \in \mathbb{R}^{(2L+1)\times d_L}\}.
        \label{eq-form}
    \end{equation}

while long-range embeddings as $x$. The fragments embeddings are capitalized, which are denoted as $H$. The short-range message is denoted as $m_{ij}$ and the long-range message is denoted as $M_{ij}$. $H$, $x$, $m_{ij}$, $M_{ij}$ are all irreducible representations, taking the same form as Eq. \ref{eq-form}. In $(h_L)^l_i$, $L$ indexes to the type of the irreps, $l$ indexes to the layer number in a multilayer framework, and $i$ indexes to different atoms. Meanwhile, we also used a short-hand notation $h^l_i$, where the index of the type of the irreps is ignored.

\subsection{Short-Range Module}

Similar to the short-range module  introduced in the main text, the short-range module performs message passing on $\mathcal{G}_{\text{short}}$ by taking the atomic numbers $Z \in \mathbb{N}^{n \times 1}$ and positions $\vec{p} \in \mathbb{R}^{n \times3}$ ($n$ is the number of atoms in the system) as input, and model the geometric information on $\mathcal{G_{\text{short}}}$. 
 
 For type-$0$ vectors, they are initialized with atom number embeddings:
\begin{equation}
    (h_0)^0_{i}= \textsc{Embed}(z_i),
\end{equation}

where $z_i\in Z$ is the atomic number, $\textsc{Embed}(\cdot): \mathbb{N} \mapsto \mathbb{R}^{d_0}$, is a learnable embedding map, with $d_0$ being the number of hidden channels for type-$0$ vectors.

For type-$L$ vectors with $ L \geq 1$, they are initialized to be zeros:
\begin{equation}
     (h_L)^0_{i}= \mathbf{0} \in \mathbb{R}^{(2L+1) \times d_L}, \text{if\ } L \geq 1,
\end{equation}
where a type-$L$ irreps are composed of $d_L$ channels of type-$L$ vectors.

In general, the short-range module adopts an iterative short-range message-passing scheme:
{\small
\begin{equation}
        h_i^{l} =  \phi_{\text{Short}}\left(h_i^{l-1}, \sum_{j\in N(i)} m
_{ij}^{l-1} \right). 
\end{equation}
}The $\phi_{\text{Short}}(\cdot)$ defines a message-passing framework, and $N(\cdot)$ denotes the first-order neighbor set of a particular node. $m_{ij}$  denote the message message between node $i$ and its first order neighbor $j$. $m_{ij}$ is composed of irreps of different types and is computed using the following message functions:
\begin{eqnarray}
    m_{ij} = \phi(h_i, h_j, \vec{p}_{ij}).
\end{eqnarray}
Commonly, spherical harmonics are used to turn $\vec{p}_{ij}$ into irreducible representations to interact with the node irreps:
\begin{eqnarray}
    m_{ij} = \phi(h_i, h_j, SH(\vec{p}_{ij}), r_{ij}),
\end{eqnarray}
where $SH(\cdot)$ is the spherical harmonics, and $r_{ij}$ is the distance from node-$i$ to node $j$.
This interaction could be performed via a parameterized CG product:

\begin{equation}
    m_{ij} = h_i \otimes^{W(r_{ij})} SH(\vec{p}_{ij}).
\end{equation}

 \subsection{Fragmentation Module}
To obtain the type-$L$ irreps  for a given fragment, the type-$L$  
 irreps of the contained node representation were summed:
\begin{eqnarray}
     \label{eq-app:FLfunction1}
    (H_L)_j^{l} &=& \sum_{i\in S(j)}  (\alpha_L)_i^{l} \odot (h_L)_i^{l},
\end{eqnarray}
in which $(H_L)_j^{l}$ denotes the type-$L$ irreps of fragment $j$,  $j$ is the index for fragments, and $S(j)$ is the set induced by the assignments of fragment $j$.  $\alpha_i^{l}$ are weight vectors for each atom within the fragments. 

\subsection{Long-Range Module}
\label{sec:Long-Range Module-2}

The long-range module is targeted to capture possible atom-fragment interactions on $\mathcal{G}_{\text{long}}$. 
Generally, the long-range embeddings are initialized based on the short-range embeddings of layer-$L_\text{short}$. Type-$0$ vectors are initialized with $\textsc{Dense}(\cdot)$ acting on type-$0$ irreps of the short-range representation. For $L\geq1$, type-$L$ vectors are initialized with $U(\cdot)$ (linear layer without bias) acting on type-$L$ irreps of the short-range representation:

\begin{equation}
(x_L)^0_i = \begin{cases}
 \textsc{Dense}\left((h_{0})_i^{L_\text{short}}\right), & \text{if\ } L = 0,\\
U\left((h_{L})_i^{L_\text{short}}\right), 
   &\text{if\ } L \geq 1,
\end{cases}
\end{equation}
    
The geometric message-passing is performed to characterize long-range interactions:
{\small
\begin{equation}
        x^l_i =  \psi_{\text{long}}\left(x_i^{l-1}, \sum_{j\in N(i)} M
_{ij}^{l-1}\right). 
\end{equation}
}$\psi(\cdot)_{\text{long}}$ is the general bipartite message passing framework, $x^l_i$ is the long-range irreps, $i,j$ are index for atom, and $N(\cdot)$ is the neighborhood of atom $i$ on the atom-fragment bipartite graph. $M_{ij}$, denoting  the  message between atom $i$ and its incident fragment $j$, is comprised of irreps of different types and are obtained via the message functions $\psi_{}(\cdot)$:
\begin{align}
    M_{ij}& = \psi\left(h_i, H_j, \vec{p}_i, \vec{P}_j\right),
\end{align}

Similarly, spherical harmonics are commonly used to form the interaction between $h_i$, $H_j$:
\begin{align}
    M_{ij}& = \psi\left(h_i, H_j, SH(\vec{p}_i - \vec{P}_j), ||\vec{p}_i - \vec{P}_j|| \right). 
\end{align}

\subsection{Properties Prediction}
A fusion strategy can be applied to integrate the long-range irreps and short-range irreps:

\begin{equation}
(x_L)_\text{out} = \begin{cases}
 \textsc{Dense}\left(\left[(h_0)^{L_\text{short}}, (x_0)^{L_\text{long}}\right]\right), & \text{\ if\ } L = 0,\\
U\left(\left[(h_L)^{L_\text{short}}, (x_L)^{L_\text{long}}\right]\right), & \text{\ if\ } L\geq 1;
\end{cases}
\end{equation}
$L_\text{short}$ and $L_\text{long}$ are the number of layers for the short-range module and long-range module respectively.

\subsection{Equiformer-LSRM}
Based on LSR-MP framework and irreducible representation, we provided another exemplary implementation of LSR-MP called Equiformer-LSRM. It uses Equiformer~\cite{ liao2023equiformer} as the backbone for the short-range module, thus $h^{l} =  \textsc{Equiformer}(Z, \vec{p}).$
In the fragment learning module, we choose $\gamma_i$ to be $\frac{z_i}{\sum_{i, i\in S(c)}  z_i}$, i.e. setting the center of any fragment to be the center of mass of all contained atoms. $\alpha_i$ are chosen to be $\mathbf{1^d}$. For the long-range module, we reused the \textsc{LayerNorm} and  \textsc{DPTransBlock}  implemented in Equiformer~\cite{ liao2023equiformer} to pass messages in the atom-fragment bipartite graph. In particular, we first perform layer normalization on the fragment irreducible representation and we concatenated the atom representations and fragment representations to form a new set of node representations:
\begin{equation}
    h^\prime = \left[ h, H \right].
\end{equation}

The atom positions and fragment positions were also concatenated:
\begin{equation}
    \vec{p}^\prime = \left[ \vec{p}, \vec{P} \right].
\end{equation}

Message passing were performed on the atom-fragment bipartite graph, thus:

\begin{equation}
    x =  \textsc{DPTransBlock}(h^\prime, \mathcal{G}_{\text{long}}, \vec{p}^\prime).
\end{equation}

\end{appendices}

\end{document}